\renewcommand{\phi}{\varphi}
\renewcommand{\ker}{\Ker}
\newcommand{\mc}[1]{\mathcal{#1}}
\newcommand{\mf}[1]{\mathfrak{#1}}
\newcommand{\mb}[1]{\mathbb{#1}}
\newcommand{\id}{\mathbbm{1}}
\newcommand{\quot}[2] {\ensuremath{\raisebox{.40ex}{\ensuremath{#1}}
\! \big / \! \raisebox{-.40ex}{\ensuremath{#2}}}}
\newcommand{\tint}{{\textstyle\int}}
\DeclareMathOperator{\Mat}{Mat}
\DeclareMathOperator{\diag}{diag}
\DeclareMathOperator{\tr}{Tr}
\DeclareMathOperator{\ad}{ad}
\DeclareMathOperator{\im}{Im}
\DeclareMathOperator{\Ker}{Ker}
\theoremstyle{plain}
\newtheorem{theorem}{Theorem}[section]
\newtheorem{lemma}[theorem]{Lemma}
\newtheorem{proposition}[theorem]{Proposition}
\newtheorem{corollary}[theorem]{Corollary}
\theoremstyle{definition}
\newtheorem{definition}[theorem]{Definition}
\newtheorem{example}[theorem]{Example}
\theoremstyle{remark}
\newtheorem{remark}[theorem]{Remark}
\numberwithin{equation}{section}
\definecolor{light}{gray}{.9}
\title{Classical \texorpdfstring{$\mc W$}{W}-algebras 
and generalized Drinfeld-Sokolov hierarchies
for minimal and short nilpotents}
\author{Alberto De Sole, Victor G. Kac, Daniele Valeri}
\address{Dipartimento di Matematica, Sapienza Universit\`a di Roma,
P.le Aldo Moro 2, 00185 Rome, Italy.}
\email{desole@mat.uniroma1.it}
\address{Department of Math., MIT,
77 Massachusetts Avenue, Cambridge, MA 02139, USA.}
\email{kac@math.mit.edu}
\address{SISSA, Via Bonomea 265, 34136 Trieste, Italy.}
\email{dvaleri@sissa.it}
\begin{document}

\pagestyle{plain}

\begin{abstract}
We derive explicit formulas for $\lambda$-brackets
of the affine classical $\mc W$-algebras attached to the minimal and short
nilpotent elements of any simple Lie algebra $\mf g$.
This is used to compute explicitly the first non-trivial PDE of the corresponding
integrable generalized Drinfeld-Sokolov hierarchies.
It turns out that a reduction of the equation corresponding to a short nilpotent
is Svinolupov's equation attached to a simple Jordan algebra,
while a reduction of the equation corresponding to a minimal nilpotent
is an integrable Hamiltonian equation
on $2h\,\check{}-3$ functions,
where $h\,\check{}$ is the dual Coxeter number of $\mf g$.
In the case when $\mf g$ is $\mf{sl}_2$ both these equations coincide with the KdV equation.
In the case when $\mf g$ is not of type $C_n$,
we associate to the minimal nilpotent element of $\mf g$
yet another generalized Drinfeld-Sokolov hierarchy.
\end{abstract}

\maketitle

\tableofcontents

\section{Introduction}\label{sec:1}

In our paper \cite{DSKV12}
we put the theory of Drinfeld-Sokolov Hamiltonian reduction \cite{DS85}
in the framework of Poisson vertex algebras (PVA).
Using this, we gave a simple construction of the affine classical $\mc W$-algebras $\mc W(\mf g,f)$,
attached to an arbitrary nilpotent element $f$ of a simple Lie algebra $\mf g$.
Furthermore, for a large number of nilpotents $f$ we constructed in this framework
the associated generalized Drinfeld-Sokolov hierarchy of bi-Hamiltonian equations
and proved its integrability (the case studied by Drinfeld and Sokolov corresponds to 
the principal nilpotent element $f$).

In the present paper we study in detail the cases when $f$ is a ``minimal''
and a ``short'' nilpotent element.
Let $\{e,2x,f\}\subset\mf g$ be an $\mf{sl}_2$ triple containing $f$.
The element $f$ is called \emph{minimal} if its adjoint orbit has minimal dimension
among all non-zero nilpotent orbits;
equivalently, if the $\ad(x)$-eigenspace decomposition of $\mf g$
has the form
$$
\mf g=\mf g_{-1}\oplus\mf g_{-\frac12}\oplus\mf g_0\oplus\mf g_{\frac12}\oplus\mf g_1\,,
$$
where $\dim(\mf g_{\pm1})=1$. 
If $\theta$ is the highest root of $\mf g$,
the corresponding root vector $f=e_{-\theta}$ is a minimal nilpotent element,
and its adjoint orbit consists of all minimal nilpotent elements of $\mf g$.

The element $f$ is called \emph{short} if the $\ad(x)$-eigenspace decomposition
has the form
$$
\mf g=\mf g_{-1}\oplus\mf g_0\oplus\mf g_1\,;
$$
equivalently, if the product $\circ$ on $\mf g_{-1}$
(or, respectively, the product $*$ on $\mf g_1$) defined by 
\begin{equation}\label{20130320:eq1}
a\circ b=[[e,a],b]
\,\,,\,\,\,\, a,b\in\mf g_{-1}\,,
\,\,\,\,
\Big(\text{ resp. } 
a* b=[[f,a],b]
\,\,,\,\,\,\, a,b\in\mf g_{1}
\Big)\,,
\end{equation}
gives $\mf g_{-1}$ (resp. $\mf g_1$) the structure of a Jordan algebra.
In fact, the classification of conjugacy classes of short nilpotent elements in simple Lie algebras
corresponds to the classification of simple Jordan algebras \cite{Jac81}.
Recall that a complete list of conjugacy classes of
short nilpotent elements in simple Lie algebras is as follows:
for $\mf g$ of type $A_n$, with odd $n$,
of type $B_n$ and $C_n$, with arbitrary $n$,
and of type $D_4$ and $E_7$,
there is a unique conjugacy class of short nilpotent elements,
while for $\mf g$ of type $D_n$, with $n\geq5$,
there are two conjugacy classes of short nilpotent elements.
In all other cases there are no short nilpotent elements.

In the case of minimal and short nilpotent elements,
we describe explicitly the PVA structure on $\mc W(\mf g,f)$
and the corresponding generalized Drinfeld-Sokolov hierarchies,
associated to a choice of $s\in\mf g_1$,
for which $f+s$ is a semisimple element of $\mf g$.

In the case when $f$ is a short nilpotent element and $s=e$,
the corresponding hierarchy admits a reduction.
It turns out that the reduction of the simplest equation of this hierarchy
coincides with the integrable equation associated to a simple Jordan algebra 
by Svinolupov \cite{Svi91}.
It follows that Svinolupov's equations are Hamiltonian.
In the case when $f$ is a minimal nilpotent element
we get after a reduction an integrable Hamiltonian equation
on $2h\,\check{}-3$ functions.
In a forthcoming publication \cite{DSKV13} we construct for both cases the second Poisson structure,
which is non-local, via an analogue of Dirac reduction \cite{Dir50}.

The generalized Drinfeld-Sokolov hierarchy depends also on the choice of an isotropic subspace $\mf l\subset\mf g_{\frac12}$ \cite{DSKV12}.
In the above examples we chose $\mf l=0$.
We show that if
$\mf g$ is not of type $C_n$ and $f$ is a minimal nilpotent,
one can choose a maximal isotropic subspace $\mf l\subset\mf g_{\frac12}$
and $s\in\mf l$ such that $f+s$ is semisimple,
which leads us to yet another integrable generalized Drinfeld-Sokolov hierarchy.

Throughout the paper, unless otherwise specified, all vector spaces, tensor products etc.,
are defined over an algebraically closed field $\mb F$ of characteristic 0.

\subsubsection*{Acknowledgments}
We wish to thank the IHP and IHES, France,
where part of this research was conducted.
We wish to thank Corrado De Concini for useful discussions
and Andrea Maffei for illuminating observations.
The first author was partially supported by the University grants 
C26F09X3JP, C26A09EFE7, C26A1078PE, C26A11JPEP,
and by the FIRB National grant RBFR12RA9W.
The second author was partially supported by the Simons Fellowship.
The third author was supported by the ERC grant
``FroM-PDE: Frobenius Manifolds and Hamiltonian Partial
Differential Equations''.

\section{Poisson vertex algebras and Hamiltonian equations}\label{sec:0.5}

Recall (see e.g. \cite{DSK06,BDSK09}) 
that a \emph{Poisson vertex algebra} (PVA) is a commutative associative differential algebra $\mc V$,
with derivation $\partial$, endowed with a $\lambda$-bracket 
$\mc V\otimes\mc V\to\mc V[\lambda]$,
denoted $g\otimes h\mapsto\{g_\lambda h\}$,
satisfying the following axioms:
\begin{enumerate}[(i)]
\item
sesquilinearity: 
$\{\partial g_\lambda h\}=-\lambda\{g_\lambda h\}$, 
$\{g_\lambda \partial h\}=(\partial+\lambda)\{g_\lambda h\}$,
\item
skew-symmetry: 
$\{g_\lambda h\}=-\{h_{-\lambda-\partial} g\}$ ($\partial$ should be moved to the left 
to act on the coefficients), 
\item
Jacobi identity:
$\{{g_1}_\lambda{\{{g_2}_\mu h\}}\}-\{{g_2}_\mu{\{{g_1}_\lambda h\}}\}
=\{{\{{g_1}_\lambda {g_2}\}}_{\lambda+\mu}h\}$,
\item
left Leibniz rule:
$\{g_\lambda h_1h_2\}=\{g_\lambda h_1\}h_2+\{g_\lambda h_2\}h_1$,
\item
right Leibniz rule:
$\{{g_1g_2}_\lambda h\}=\{{g_1}_{\partial+\lambda} h\}_\to g_2+\{{g_2}_{\partial+\lambda} h\}{g_1}$
(where the arrow means that $\partial$ should be moved to the right).
\end{enumerate}
Recall also that a \emph{Lie conformal algebra} is an $\mb F[\partial]$-module
endowed with a $\lambda$-bracket satisfying axioms (i), (ii) and (iii).

An element $L\in\mc V$ in a PVA $\mc V$ is called a \emph{Virasoro element} if
$$
\{L_\lambda L\}=(\partial+2\lambda)L+c\lambda^3+\alpha\lambda\,,
$$
where $c,\alpha\in\mb F$.
(The number $c$ is called the \emph{central charge} of $L$.)
An element $a\in\mc V$ is called an \emph{eigenvector} for $L$
of \emph{conformal weight} $\Delta_a\in\mb F$ if 
$$
\{L_\lambda a\}=(\partial+\Delta_a\lambda)a+O(\lambda^2)\,.
$$
It is called a \emph{primary element} of conformal weight $\Delta_a$
if $\{L_\lambda a\}=(\partial+\Delta_a\lambda)a$.
A Virasoro element is called an \emph{energy momentum element}
if there exists a basis of $\mc V$ consisting of eigenvectors of $L$.
Clearly, by sesquilinearity and the left Leibniz rule, 
a Virasoro element is an energy momentum element if and only if there exists
a set generating $\mc V$ as differential algebra consisting of eigenvectors of $L$.

The basic example is the algebra of differential polynomials in $\ell$ differential variables
$\mc V_\ell=\mb F[u_i^{(n)}\,|\,i=1,\dots,\ell,n\in\mb Z_+]$,
with the derivation $\partial u_i^{(n)}=u_i^{(n+1)}$.
A $\lambda$-bracket on $\mc V_\ell$ is introduced by letting
$$
\{{u_i}_\lambda{u_j}\}=H_{ji}(\lambda)\,\in\mc V[\lambda]
\,\,,\,\,\,\,
i,j=1,\dots,\ell\,,
$$
and extending (uniquely) to the whole space $\mc V_\ell$ by sesquilinearity and Leibniz rules.
Then we have, for arbitrary $h,g\in\mc V_\ell$, the following Master Formula \cite{DSK06}:
\begin{equation}\label{masterformula}
\{g_\lambda h\}=
\sum_{\substack{i,j\in I\\m,n\in\mb Z_+}}\frac{\partial g}{\partial u_j^{(n)}}(\partial+\lambda)^n
H_{ji}(\partial+\lambda)(-\lambda-\partial)^m\frac{\partial h}{\partial u_i^{(m)}}\,.
\end{equation}
It is proved in \cite{BDSK09} that the $\lambda$-bracket \eqref{masterformula}
defines a structure of a PVA on $\mc V_\ell$ if and only if
skew-symmetry and the Jacobi identity hold on the generators $u_i$.
In this case we say that the $\ell\times\ell$ matrix differential operator
$H=\big(H_{ij}(\partial)\big)_{i,j=1}^\ell$ is a \emph{Poisson structure} on $\mc V_\ell$.

As usual, for a PVA $\mc V$,
we call $\quot{\mc V}{\partial\mc V}$ the space of Hamiltonian functionals,
and we denote by $\tint:\,\mc V\to\quot{\mc V}{\partial\mc V}$ the canonical quotient map.
It follows from the sesquilinearity that
$$
\{\tint g,\tint h\}:=\tint\{g_\lambda h\}\big|_{\lambda=0}\in\quot{\mc V}{\partial\mc V}
\,\,\text{ and }\,\,
\{\tint g,h\}:=\{g_\lambda h\}\big|_{\lambda=0}\in\mc V
\,,
$$
are well defined.
It follows from skewsymmetry and the Jacobi identity axioms of a PVA
that $\{\tint f,\tint g\}$ is a Lie algebra bracket on $\quot{\mc V}{\partial\mc V}$,
and that $\{\tint f,g\}$ is a representation of  $\quot{\mc V}{\partial\mc V}$ on $\mc V$
by derivations of the associative product on $\mc V$.

The \emph{Hamiltonian equation} associated 
to a Hamiltonian functional $\tint h\in\quot{\mc V}{\partial\mc V}$
and the Poisson structure $H$ is, by definition,
\begin{equation}\label{20130314:eq1}
\frac{du_i}{dt}=\{\tint h,u_i\}
\,\,,\,\,\,\,
i=1,\dots,\ell\,.
\end{equation}
Note that the RHS of \eqref{20130314:eq1} is $H(\partial)\frac{\delta h}{\delta u}$,
where $\frac{\delta h}{\delta u}\in\mc V^\ell$ is the vector of variational derivatives
$\frac{\delta h}{\delta u_i}
=\sum_{n\in\mb Z_+}(-\partial)^n\frac{\partial h}{\partial u_i^{(n)}},\,n\in\mb Z_+$.
An \emph{integral of motion} of this equation is 
a ``Hamiltonian'' functional $\tint g\in\quot{\mc V}{\partial\mc V}$
such that $\{\tint h,\tint g\}=0$,
i.e. $\tint h,\tint g$ are \emph{in involution}.
Equation \eqref{20130314:eq1} is called \emph{integrable}
if there are infinitely many linearly independent integrals of motion in involution,
$\tint h_n,\,n\in\mb Z_+$, where $h_0=h$.
In this case, we have an integrable hierarchy of Hamiltonian equations
$$
\frac{du_i}{dt_n}=\{\tint h_n,u_i\}
\,\,,\,\,\,\,
i=1,\dots,\ell,\,n\in\mb Z_+\,.
$$

The main tool for proving integrability is the so called \emph{Lenard-Magri scheme}
(see e.g. \cite{BDSK09}).
It can be applied to a \emph{bi-Hamiltonian equation}, 
meaning an equation that can be written in the form \eqref{20130314:eq1}
in two different ways:
\begin{equation}\label{20130314:eq2}
\frac{du_i}{dt}
=\{\tint h_0,u_i\}_H
=\{\tint h_1,u_i\}_K
\,,
\end{equation}
where $\{\cdot\,,\,\cdot\}_H$ and $\{\cdot\,,\,\cdot\}_K$
are the Lie algebra brackets on $\quot{\mc V}{\partial\mc V}$
associated to the Poisson structures $H$ and $K$,
which are assumed to be compatible, in the sense that any their linear combination
$\alpha H+\beta K$, for $\alpha,\beta\in\mb F$, is also a Poisson structure.
The scheme consists in finding a sequence 
of Hamiltonian functionals $\tint h_n,\,n\in\mb Z_+$,
starting with the two given ones,
satisfying the following recursive equations, for all $g\in\mc V$,
\begin{equation}\label{20130314:eq3}
\{{\tint h_0},\tint g\}_{K}=0
\,\,,\,\,\,\,
\{{\tint h_n},\tint g\}_{H}=\{{\tint h_{n+1}},\tint g\}_{K}
\,\,\text{ for all } n\in\mb Z_+
\,.
\end{equation}
Then, automatically all Hamiltonian functionals $\tint h_n$ are in involution
with respect to both Poisson brackets $\{\cdot\,,\,\cdot\}_H$ and $\{\cdot\,,\,\cdot\}_K$
(see \cite{Mag78} or \cite[Lem.2.6]{BDSK09} for a proof of this simple fact),
and therefore equation \eqref{20130314:eq2} is integrable, provided that the $\tint h_n$'s 
span an infinite dimensional vector space, \cite{Mag78}.

\section{Structure of classical \texorpdfstring{$\mc W$}{W}-algebras}\label{sec:2}

In this section we remind the definition of classical $\mc W$-algebras in the language 
of Poisson vertex algebras, following \cite{DSKV12}.

%

\subsection{Setup and notation}
\label{sec:2.1}

Let $\mf g$ be a simple finite-dimensional Lie algebra over the field $\mb F$
with a non-degenerate symmetric invariant bilinear form $(\cdot\,|\,\cdot)$,
and let $\{f,2x,e\}\subset\mf g$ be an $\mf{sl}_2$-triple in $\mf g$.
We have the corresponding $\ad x$-eigenspace decomposition
\begin{equation}\label{dec}
\mf g=\bigoplus_{i\in\frac{1}{2}\mb Z}\mf g_{i}\,.
\end{equation}
Clearly, $f\in\mf g_{-1}$, $x\in\mf g_{0}$ and $e\in\mf g_{1}$.
Given an $\ad x$-invariant subspace $\mf a\subset\mf g$,
we denote $\mf a_j=\mf a\cap\mf g_j,\,j\in\frac12\mb Z$,
and by $\mf a_{\leq k}=\bigoplus_{j\leq k}\mf a_j$,
or $\mf a_{\geq k}=\bigoplus_{j\geq k}\mf a_j$,
for arbitrary $k\in\frac12\mb Z$.
We have the nilpotent subalgebras:
$\mf g_{\geq1}
\subset
\mf g_{\geq\frac12}
\subset\mf g$.
Fix an element $s\in\mf g_d$, where $d$ is the maximal $i\in\frac12\mb Z$
for which $\mf g_i\neq0$ in \eqref{dec},
called the \emph{depth} of the grading \eqref{dec}.
(It is easy to show that $\mf g_d$ is the center of $\mf g_{\geq\frac12}$.)

Let $\mf g^f$ be the centralizer of $f$ in $\mf g$.
By representation theory of $\mf{sl}_2$,
we have $\mf g^f\subset\mf g_{\leq0}$,
and we have the direct sum decomposition
\begin{equation}\label{20130402:eq1}
\mf g_{\leq\frac12}=\mf g^f\oplus[e,\mf g_{\leq-\frac12}]\,.
\end{equation}
We let $\pi:\,\mf g_{\leq\frac12}\mapsto\mf g^f$
be the quotient map with kernel $[e,\mf g_{\leq-\frac12}]$,
and, for $a\in\mf g_{\leq\frac12}$, 
we denote $\pi(a)=a^\sharp\in\mf g^f$.

Clearly, $\ad e$ and $\ad f$ restrict to bijective maps 
\begin{equation}\label{20130201:eq1}
\ad e:\,\mf g_{-\frac12}\stackrel{\sim}{\longrightarrow}\mf g_{\frac12}
\,\,\text{ and }\,\,
\ad f:\,\mf g_{\frac12}\stackrel{\sim}{\longrightarrow}\mf g_{-\frac12}
\,.
\end{equation}
Moreover, since $x=\frac12[e,f]$, 
it immediately follows by the invariance of the bilinear form $(\cdot\,|\,\cdot)$ that
\begin{equation}\label{20130201:eq2}
(x|x)=\frac12(e|f)\,.
\end{equation}

Note that the bilinear form $(\cdot\,|\,\cdot)$ restricts to a non-degenerate pairing
between $\mf g_{-j}$ and $\mf g_j$.
Fix $\{u_j\}_{j\in J},\,\{u^j\}_{j\in J}$, bases of $\mf g$ consisting of eigenvectors of $\ad x$
and dual with respect to $(\cdot\,|\,\cdot)$.
Let $\{q_i\}_{i\in J_{\leq\frac12}}$ be the (sub)basis of $\mf g_{\leq\frac12}$,
and let $\{q^i\}_{i\in J_{\leq\frac12}}$ be the corresponding dual
basis of $\mf g_{\geq-\frac12}$.
Within these bases, we have the dual bases 
$\{a_j\}_{j\in J_0}$ and $\{a^j\}_{j\in J_0}$ of $\mf g_0$.

Since the pairing between $\mf g_{-\frac12}$ and $\mf g_{\frac12}$ is non-degenerate,
using the bijective maps \eqref{20130201:eq1},
we obtain induced non-degenerate skew-symmetric inner products on $\mf g_{-\frac12}$, 
given by
\begin{equation}\label{20130201:eq4}
\omega_-(u,u_1):=(e|[u,u_1])\,\, \text{ for } u,u_1\in\mf g_{-\frac12}
\,,
\end{equation}
and on $\mf g_{\frac12}$, given by
\begin{equation}\label{20130201:eq5}
\omega_+(v,v_1):=(f|[v,v_1])\,\,\text{ for } v,v_1\in\mf g_{\frac12}
\,.
\end{equation}
\begin{lemma}\phantomsection\label{20130315:lem1}
\begin{enumerate}[(a)]
\item
For $u,u_1\in\mf g_{-\frac12}$ and $a\in\mf g_0$ we have
$$
\omega_-(u,[a,u_1])=\omega_-([u,a],u_1)+([e,a]|[u,u_1])\,.
$$
\item
For $v,v_1\in\mf g_{\frac12}$ and $a\in\mf g_0$ we have
$$
\omega_+(v,[a,v_1])=\omega_+([v,a],v_1)+([f,a]|[v,v_1])
\,.
$$ 
\end{enumerate}
In particular, the bilinear forms $\omega_-$ and $\omega_+$ are invariant with respect to $\mf g_0^f$.
\end{lemma}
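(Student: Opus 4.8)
The plan is to derive part (a) directly from the Jacobi identity in $\mf g$ together with the invariance of the bilinear form $(\cdot\,|\,\cdot)$, and then to obtain part (b) by the symmetry $e\leftrightarrow f$, which interchanges $\mf g_j$ with $\mf g_{-j}$ and hence $\omega_-$ with $\omega_+$.

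For (a), I would start from the definition \eqref{20130201:eq4}, writing $\omega_-(u,[a,u_1])=(e\,|\,[u,[a,u_1]])$, and expand the inner bracket by Jacobi as $[u,[a,u_1]]=[[u,a],u_1]+[a,[u,u_1]]$. Since $a\in\mf g_0$, the element $[u,a]$ still lies in $\mf g_{-\frac12}$, so pairing the first summand against $e$ gives precisely $\omega_-([u,a],u_1)$ by \eqref{20130201:eq4} again. For the second summand, invariance of $(\cdot\,|\,\cdot)$ (in the form $([a,e]\,|\,[u,u_1])+(e\,|\,[a,[u,u_1]])=0$) yields $(e\,|\,[a,[u,u_1]])=-([a,e]\,|\,[u,u_1])=([e,a]\,|\,[u,u_1])$, which is the remaining term. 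This proves (a); running the identical computation with $(f\,|\,[v,[a,v_1]])$, and using that $[v,a]\in\mf g_{\frac12}$ for $v\in\mf g_{\frac12}$, $a\in\mf g_0$, gives (b).

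For the last assertion, it suffices to check that the correction terms $([e,a]\,|\,[u,u_1])$ and $([f,a]\,|\,[v,v_1])$ vanish whenever $a\in\mf g_0^f=\mf g^f\cap\mf g_0$. The term in (b) vanishes immediately since $[f,a]=0$. For the term in (a), apply Jacobi: $[f,[e,a]]=[[f,e],a]+[e,[f,a]]=[-2x,a]+0=0$ (using $a\in\mf g_0$, so $[x,a]=0$), hence $[e,a]\in\mf g_1\cap\mf g^f$, which is $0$ because $\mf g^f\subset\mf g_{\leq0}$; thus $[e,a]=0$ as well. (Equivalently, $a$ is an $\ad(2x)$-weight-zero vector annihilated by $\ad f$, so by $\mf{sl}_2$ representation theory it lies in the isotypic component of the trivial module and is annihilated by $\ad e$.) Therefore both correction terms vanish, and $\omega_-$, $\omega_+$ are $\mf g_0^f$-invariant.

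I do not expect a genuine obstacle: the proof is two applications of the Jacobi identity, one invariance identity for $(\cdot\,|\,\cdot)$, and a one-line $\mf{sl}_2$ remark. The only care needed is with the sign convention for invariance, $([a,b]\,|\,c)=-(b\,|\,[a,c])$, and with tracking the grading of the brackets involved ($[u,a]\in\mf g_{-\frac12}$, $[u,u_1]\in\mf g_{-1}$, $[e,a]\in\mf g_1$), so that every pairing occurring is between subspaces dually paired by $(\cdot\,|\,\cdot)$.
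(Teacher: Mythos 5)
Your proof is correct and is exactly the computation the paper has in mind: the paper's own proof consists of the single remark that the identity is ``straightforward, using the invariance of the bilinear form,'' and your two applications of the Jacobi identity, the invariance identity $([a,e]\,|\,w)+(e\,|\,[a,w])=0$, and the $\mf{sl}_2$ observation that $a\in\mf g_0^f$ forces $[e,a]=0$ are precisely the details being suppressed. Nothing to add.
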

\begin{proof}
It is straightforward, using 
the invariance of the bilinear form $(\cdot\,|\,\cdot)$.
\end{proof}
Let $\{v_k\}_{k\in J_{\frac12}}$ be a basis of $\mf g_{\frac12}$,
and let $\{v^k\}_{k\in J_{\frac12}}$ be the dual basis, again of $\mf g_{\frac12}$, 
with respect to the skew-symmetric inner product $\omega_+$ 
defined in \eqref{20130201:eq5}.
Equivalently, the collection of elements $\{[f,v_k]\}_{k\in J_{\frac12}}\subset\mf g_{-\frac12}$
and $\{v^k\}_{k\in J_{\frac12}}\subset\mf g_{\frac12}$
are dual bases of $\mf g_{-\frac12}$ and $\mf g_{\frac12}$ with respect to $(\cdot\,|\,\cdot)$.
In other words, we have the orthogonality conditions
\begin{equation}\label{omega+dual}
\omega_+(v_h,v^k)=(f|[v_h,v^k])=\delta_{h,k}\,,
\end{equation}
for all $h,k\in J_{\frac12}$.
Note that the orthogonality conditions \eqref{omega+dual}
are equivalent to the following completeness relations for elements of $\mf g_{\frac12}$:
\begin{equation}\label{completeness}
\sum_{k\in J_{\frac12}} \omega_+(v,v^k)v_k
=-\sum_{k\in J_{\frac12}} \omega_+(v,v_k) v^k=v
\quad \text{ for all } v\in\mf g_{\frac12}\,.
\end{equation}
\begin{lemma}\label{20130315:lem2}
Suppose that $\mf g_{\frac32}=0$. Then:
\begin{enumerate}[(a)]
\item
The bijective maps \eqref{20130201:eq1} are inverse to each other,
i.e. $[e,[f,v]]=v$ for all $v\in\mf g_{\frac12}$,
and $[f,[e,u]]=u$ for all $u\in\mf g_{-\frac12}$.
\item
The collections of elements $\{[f,v_k]\}_{k\in J_{\frac12}}$ and $\{[f,v^k]\}_{k\in J_{\frac12}}$ 
are bases of $\mf g_{-\frac12}$ dual with respect to $-\omega_-$:
\begin{equation}\label{omega-dual}
\omega_-([f,v_h],[f,v^k]):=(e|[[f,v_h],[f,v^k]])=-\delta_{h,k}\,.
\end{equation}
\item
For every $u\in\mf g_{-\frac12}$ we have the completeness relation
\begin{equation}\label{completeness2}
\sum_{k\in J_{\frac12}}(u|v^k)v_k
=-\sum_{k\in J_{\frac12}}(u|v_k)v^k
=[e,u]
\quad \text{ for all } u\in\mf g_{-\frac12}
\,.
\end{equation}
\end{enumerate}
\end{lemma}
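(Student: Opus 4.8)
The plan is to deduce part (a) from a one-line Jacobi-identity computation and then to obtain parts (b) and (c) as formal consequences of (a), of the invariance of $(\cdot\,|\,\cdot)$, and of the already established identities \eqref{omega+dual} and \eqref{completeness}. For (a), the key remark is that the hypothesis $\mf g_{\frac32}=0$, combined with the equality $\dim\mf g_j=\dim\mf g_{-j}$ coming from $\mf{sl}_2$-representation theory, also forces $\mf g_{-\frac32}=0$; hence $[e,v]\in\mf g_{\frac32}=0$ for every $v\in\mf g_{\frac12}$ and $[f,u]\in\mf g_{-\frac32}=0$ for every $u\in\mf g_{-\frac12}$. Then, using the Jacobi identity, $[e,f]=2x$, and the fact that $\ad x$ acts on $\mf g_{\pm\frac12}$ as multiplication by $\pm\frac12$, I would compute
\[
[e,[f,v]]=[[e,f],v]+[f,[e,v]]=[2x,v]=v,\qquad
[f,[e,u]]=[[f,e],u]+[e,[f,u]]=-[2x,u]=u .
\]
(The second identity also follows from the first via the bijectivity of $\ad f$ in \eqref{20130201:eq1}: writing $u=[f,v]$ gives $[e,u]=v$, whence $[f,[e,u]]=[f,v]=u$.)

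For part (b), the claim that the two collections are bases of $\mf g_{-\frac12}$ is immediate, since $\ad f\colon\mf g_{\frac12}\stackrel{\sim}{\longrightarrow}\mf g_{-\frac12}$ is bijective by \eqref{20130201:eq1} and thus carries the bases $\{v_k\}$, $\{v^k\}$ of $\mf g_{\frac12}$ to bases of $\mf g_{-\frac12}$. For the duality I would chain the invariance of $(\cdot\,|\,\cdot)$ with part (a):
\[
\omega_-([f,v_h],[f,v^k])=\bigl(e\,\big|\,[[f,v_h],[f,v^k]]\bigr)
=\bigl([e,[f,v_h]]\,\big|\,[f,v^k]\bigr)=\bigl(v_h\,\big|\,[f,v^k]\bigr)=\bigl(f\,\big|\,[v^k,v_h]\bigr),
\]
and this last term equals $\omega_+(v^k,v_h)=-\omega_+(v_h,v^k)=-\delta_{h,k}$ by \eqref{omega+dual}. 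Since $\omega_-$ is non-degenerate, this is precisely the statement that $\{[f,v_k]\}$ and $\{[f,v^k]\}$ are dual bases with respect to $-\omega_-$, i.e., equation \eqref{omega-dual}.

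For part (c), I would set $v:=[e,u]\in\mf g_{\frac12}$ and note that, for every $w\in\mf g_{\frac12}$, the invariance of $(\cdot\,|\,\cdot)$ together with part (a) gives $\omega_+(v,w)=\bigl(f\,\big|\,[[e,u],w]\bigr)=\bigl([f,[e,u]]\,\big|\,w\bigr)=(u\,|\,w)$; substituting $w=v^k$ and $w=v_k$ into the completeness relation \eqref{completeness} applied to this $v$ then turns it into exactly \eqref{completeness2}. The whole lemma thus rests on the computation in part (a); the only point requiring care is the normalization of the $\mf{sl}_2$-triple, and once one observes that $\mf g_{\pm\frac32}=0$ that computation is forced, so I do not anticipate any real obstacle here.
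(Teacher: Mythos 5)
Your proposal is correct and follows essentially the same route as the paper: part (a) by the Jacobi identity together with $[e,v]=0$ (resp.\ $[f,u]=0$), and parts (b) and (c) as direct consequences of (a), the invariance of $(\cdot\,|\,\cdot)$, and the relations \eqref{omega+dual} and \eqref{completeness}. The paper's proof is just a terser version of the same computations, so there is nothing to add.
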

\begin{proof}
By the Jacobi identity we have, for $v\in\mf g_{\frac12}$,
$[e,[f,v]]=[f,[e,v]]+[[e,f],v]$. 
Since, by assumption, $\mf g_{\frac32}=0$, we have $[e,v]=0$.
Moreover, $[[e,f],v]=2[x,v]=v$, proving part (a).
Equation \eqref{omega-dual} follows immediately by invariance of $(\cdot\,|\,\cdot)$
and part (a).
Finally, equation \eqref{completeness2} 
follows from \eqref{completeness} and part (a).
\end{proof}

\subsection{Definition of the classical \texorpdfstring{$\mc W$}{W}-algebra}
\label{sec:2.2}

Consider the algebra of differential polynomials $\mc V(\mf g)=S(\mb F[\partial]\mf g)$.
For $P\in\mc V(\mf g)$, we use the standard notation $P'=\partial P$
and $P^{(n)}=\partial^n P$ for $n\in\mb Z_+$.
We have a PVA structure on $\mc V(\mf g)$, with $\lambda$-bracket given on generators by 
\begin{equation}\label{lambda}
\{a_\lambda b\}_z=[a,b]+(a| b)\lambda+z(s|[a,b]),
\qquad a,b\in\mf g\,,
\end{equation}
and extended to $\mc V(\mf g)$ by the Master Formula \eqref{masterformula}.
Here $z$ can be viewed as an element of the field $\mb F$
or as a formal variable, in which case we should replace $\mc V(\mf g)$
by $\mc V(\mf g)[z]$.
Since, by assumption, $s\in Z(\mf g_{\geq\frac12})$,
the $\mb F[\partial]$-submodule
$\mb F[\partial]\mf g_{\geq\frac12}\subset\mc V(\mf g)$ 
is a Lie conformal subalgebra 
with the $\lambda$-bracket $\{a_\lambda b\}_z=[a,b]$, $a,b\in\mf g_{\geq\frac12}$
(it is independent on $z$).
Consider the differential subalgebra
$\mc V(\mf g_{\leq\frac12})=S(\mb F[\partial]\mf g_{\leq\frac12})$ of $\mc V(\mf g)$,
and denote by $\rho:\,\mc V(\mf g)\twoheadrightarrow\mc V(\mf g_{\leq\frac12})$,
the differential algebra homomorphism defined on generators by
\begin{equation}\label{rho}
\rho(a)=\pi_{\leq\frac12}(a)+(f| a),
\qquad a\in\mf g\,,
\end{equation}
where $\pi_{\leq\frac12}:\,\mf g\to\mf g_{\leq\frac12}$ denotes 
the projection with kernel $\mf g_{\geq1}$.
Recall from \cite{DSKV12} that
we have a representation of the Lie conformal algebra $\mb F[\partial]\mf g_{\geq\frac12}$ 
on the differential subalgebra $\mc V(\mf g_{\leq\frac12})\subset\mc V(\mf g)$ given by
($a\in\mf g_{\geq\frac12}$, $g\in\mc V(\mf g_{\leq\frac12})$):
\begin{equation}\label{20120511:eq1}
a\,^\rho_\lambda\,g=\rho\{a_\lambda g\}_z
\end{equation}
(note that the RHS is independent of $z$ since, by assumption, $s\in Z(\mf g_{\geq\frac12})$).

The \emph{classical} $\mc W$-\emph{algebra} is, by definition,
the differential algebra
\begin{equation}\label{20120511:eq2}
\mc W
=\big\{g\in\mc V(\mf g_{\leq\frac12})\,\big|\,a\,^\rho_\lambda\,g=0\,\text{ for all }a\in\mf g_{\geq\frac12}\}\,,
\end{equation}
endowed with the following PVA $\lambda$-bracket
\begin{equation}\label{20120511:eq3}
\{g_\lambda h\}_{z,\rho}=\rho\{g_\lambda h\}_z,
\qquad g,h\in\mc W\,.
\end{equation}
\begin{theorem}[\cite{DSKV12}]\phantomsection\label{daniele1}
\begin{enumerate}[(a)]
\item
Equation \eqref{20120511:eq1} defines a representation 
of the Lie conformal algebra $\mb F[\partial]\mf g_{\geq\frac12}$ on $\mc V(\mf g_{\leq\frac12})$
by conformal derivations
(i.e. $a\,^\rho_\lambda\,(gh)=(a\,^\rho_\lambda\,g)h+(a\,^\rho_\lambda\,h)g$).
\item
$\mc W\subset\mc V(\mf g_{\leq\frac12})$ is a differential subalgebra.
\item
We have
$\rho\{g_\lambda\rho(h)\}_z=\rho\{g_\lambda h\}_z$,
and $\rho\{\rho(h)_\lambda g\}_z=\rho\{h_\lambda g\}_z$
for every $g\in\mc W$ and $h\in\mc V(\mf g)$.
\item
For every $g,h\in\mc W$, we have $\rho\{g_\lambda h\}_z\in\mc W[\lambda]$.
\item
The $\lambda$-bracket 
$\{\cdot\,_\lambda\,\cdot\}_{z,\rho}:\,\mc W\otimes\mc W\to\mc W[\lambda]$
given by \eqref{20120511:eq3} defines a PVA structure on $\mc W$.
\end{enumerate}
\end{theorem}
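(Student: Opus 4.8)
The plan is to establish the five parts of Theorem~\ref{daniele1} essentially in order, since each builds on the previous ones, reducing everything to computations with the $\lambda$-bracket \eqref{lambda} and the projection $\rho$. For part (a), I would first check that $a\,^\rho_\lambda\,g=\rho\{a_\lambda g\}_z$ is well defined, i.e. independent of $z$; this is immediate from \eqref{lambda} because $a\in\mf g_{\geq\frac12}$ and $s\in Z(\mf g_{\geq\frac12})$ force $(s|[a,b])=0$ for the relevant brackets after applying $\rho$. That $a\,^\rho_\lambda\,\cdot$ is a conformal derivation of the product on $\mc V(\mf g_{\leq\frac12})$ follows from the left Leibniz rule (iv) for $\{\cdot\,_\lambda\,\cdot\}_z$ together with the fact that $\rho$ is an algebra homomorphism. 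The representation property — that $a\,^\rho_\lambda(b\,^\rho_\mu\,g)-b\,^\rho_\mu(a\,^\rho_\lambda\,g)=[a,b]\,^\rho_{\lambda+\mu}\,g$ — is where one must be slightly careful: it follows from the Jacobi identity (iii) for $\{\cdot\,_\lambda\,\cdot\}_z$, but only after using part (c), namely that $\rho$ can be inserted inside the inner bracket. So I would prove a preliminary version of (c) first, or interleave the arguments.

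For part (c) I would argue as follows. Write $h=g_1+g_2$ where $g_1\in\mc V(\mf g_{\leq\frac12})$ and $g_2\in\ker\rho$; concretely $\ker\rho$ is the differential ideal generated by $\{a-(f|a)\mid a\in\mf g_{\geq1}\}$. Since $g\in\mc W$, we have $a\,^\rho_\lambda\,g=0$ for all $a\in\mf g_{\geq\frac12}$, which by definition means $\rho\{a_\lambda g\}_z=0$; equivalently, using skew-symmetry, $\rho\{g_\lambda a\}_z=0$ for $a\in\mf g_{\geq\frac12}$. One then shows $\rho\{g_\lambda n\}_z=0$ for every $n$ in the differential ideal $\ker\rho$: by sesquilinearity and the Leibniz rules, $\{g_\lambda n\}_z$ is a sum of terms each containing a factor of the form $\{g_\lambda a\}_z$ or $\partial^k a - (f|a)^{(k)}$ with $a\in\mf g_{\geq1}$, and applying $\rho$ kills each such term — here one uses that $\rho$ is a homomorphism of Lie conformal algebra representations, i.e. that $\rho\{g_\lambda\cdot\}_z$ factors through $\rho$ on the second slot, which is exactly the statement being proved, so this needs to be set up as a single induction on the number of generators appearing in $n$. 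The second identity $\rho\{\rho(h)_\lambda g\}_z=\rho\{h_\lambda g\}_z$ is the skew-symmetric counterpart and follows the same way.

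Parts (b) and (d) are then consequences. For (b): if $g,h\in\mc W$ then for every $a\in\mf g_{\geq\frac12}$, $a\,^\rho_\lambda(gh)=(a\,^\rho_\lambda\,g)h+(a\,^\rho_\lambda\,h)g=0$ by part (a) and the derivation property, so $gh\in\mc W$; similarly $a\,^\rho_\lambda\,\partial g=(\partial+\lambda)(a\,^\rho_\lambda\,g)=0$ by sesquilinearity, so $\mc W$ is $\partial$-invariant. For (d): given $g,h\in\mc W$ and $a\in\mf g_{\geq\frac12}$, I need $a\,^\rho_{\mu}\,\rho\{g_\lambda h\}_z=0$, i.e. $\rho\{a_\mu\rho\{g_\lambda h\}_z\}_z=0$. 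Using part (c) to remove the inner $\rho$, this equals $\rho\{a_\mu\{g_\lambda h\}_z\}_z$, and by the Jacobi identity this expands as $\rho\{\{a_\mu g\}_z{}_{\lambda+\mu}h\}_z + \rho\{g_\lambda\{a_\mu h\}_z\}_z$; the first term vanishes after applying part (c) on the left (second identity) since $\rho\{a_\mu g\}_z = a\,^\rho_\mu\,g = 0$, and the second vanishes because $\rho\{a_\mu h\}_z=0$ and one can pull $\rho$ through as in (c). Finally, part (e): the $\lambda$-bracket $\{g_\lambda h\}_{z,\rho}=\rho\{g_\lambda h\}_z$ lands in $\mc W[\lambda]$ by (d), and it inherits sesquilinearity and both Leibniz rules directly from those of $\{\cdot\,_\lambda\,\cdot\}_z$ together with the fact that $\rho$ is a differential algebra homomorphism; skew-symmetry follows since $\rho(\{g_\lambda h\}_z) = \rho(-\{h_{-\lambda-\partial}g\}_z) = -\{h_{-\lambda-\partial}g\}_{z,\rho}$; and the Jacobi identity for $\{\cdot\,_\lambda\,\cdot\}_{z,\rho}$ reduces, via part (c) applied repeatedly to strip the intermediate $\rho$'s, to the Jacobi identity for $\{\cdot\,_\lambda\,\cdot\}_z$.

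I expect the main obstacle to be part (c), and specifically the bookkeeping needed to show $\rho\{g_\lambda n\}_z=0$ for all $n\in\ker\rho$ when $g\in\mc W$: the differential ideal $\ker\rho$ is not spanned by elements on which $\{g_\lambda\cdot\}_z$ behaves simply, so the argument must be organized as an induction using the left Leibniz rule to peel off one generator at a time, at each stage invoking both that $g\in\mc W$ (to kill the $\mf g_{\geq1}$-factor) and the inductive hypothesis (to handle the remaining product), while carefully tracking that the operators $(\partial+\lambda)^n$ appearing in the Master Formula \eqref{masterformula} commute past $\rho$ in the right way. Everything else is a formal consequence of the PVA axioms for $\{\cdot\,_\lambda\,\cdot\}_z$ on $\mc V(\mf g)$ and the homomorphism property of $\rho$.
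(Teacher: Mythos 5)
The paper states this theorem without proof, citing \cite{DSKV12}, and your proposal reconstructs the argument of that reference along essentially the same lines: first establish that $\rho$ can be inserted in either slot of the $\lambda$-bracket, then deduce (a), (b), (d), (e) formally from the PVA axioms of $\mc V(\mf g)$ and the homomorphism property of $\rho$. Two small points: the ``preliminary version of (c)'' you invoke in (a) and (d) --- namely $\rho\{a_\lambda\rho(h)\}_z=\rho\{a_\lambda h\}_z$ for $a\in\mf g_{\geq\frac12}$ rather than for elements of $\mc W$ --- does hold and should be recorded explicitly (it reduces to $\rho\{a_\lambda b\}_z=0$ for $a\in\mf g_{\geq\frac12}$, $b\in\mf g_{\geq1}$, since $[a,b]\in\mf g_{\geq\frac32}$ forces $\pi_{\leq\frac12}[a,b]=0$, $(a|b)=0$, $(f|[a,b])=0$ and $(s|[a,b])=([s,a]|b)=0$); and the induction you fear in part (c) is unnecessary, because every element of $\ker\rho$ is a sum of monomials each containing exactly one factor from $\mb F[\partial]\{a-(f|a)\mid a\in\mf g_{\geq1}\}$, so a single application of the left Leibniz rule together with $\rho\{g_\lambda a\}_z=0$ (obtained from $g\in\mc W$ by skew-symmetry, as you note) already kills each term.
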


We can find an explicit formula for the $\lambda$-bracket in $\mc W$ as follows.
Recalling the Master Formula \eqref{masterformula} and using \eqref{rho} 
and the definition \eqref{lambda} of the $\lambda$-bracket in $\mc V(\mf g)$, we get
($g,h\in\mc W$):
$$
\{g_{\lambda}h\}_{z,\rho}
=\{g_{\lambda}h\}_{H,\rho}-z\{g_{\lambda}h\}_{K,\rho}\,,
$$
where
\begin{equation}\label{wbrackX}
\{g_{\lambda}h\}_{X,\rho}
=\sum_{\substack{i,j\in J_{\leq\frac12}\\m,n\in\mb Z_+}}
\frac{\partial h}{\partial q_j^{(n)}}(\partial+\lambda)^n
X_{ji}(\partial+\lambda)
(-\lambda-\partial)^m\frac{\partial g}{\partial q_i^{(m)}}\,,
\end{equation}
for $X$ one of the two matrices 
$H,K\in\Mat_{k\times k}\mc V(\mf g_{\leq\frac12})[\lambda]$ ($k=\dim(\mf g_{\leq\frac12})$),
given by
\begin{equation}\label{HKds}
H_{ji}(\lambda)
=\pi_{\leq\frac12}[q_i,q_j]+(q_i| q_j)\lambda+(f|[q_i,q_j]),
\qquad
K_{ji}(\partial)=(s|[q_j,q_i])
\,,
\end{equation}
for $i,j\in J_{\leq\frac12}$.

\subsection{Structure of the classical \texorpdfstring{$\mc W$}{W}-algebra}

In the algebra of differential polynomials $\mc V(\mf g_{\leq\frac12})$
we introduce the grading by \emph{conformal weight},
denoted by $\Delta$, defined as follows.
For a monomial $g=a_1^{(m_1)}\dots a_s^{(m_s)}$,
product of derivatives of elements $a_i\in\mf g_{1-\Delta_i}\subset\mf g_{\leq\frac12}$ 
(i.e. $\Delta_i\geq\frac12$),
we define its conformal weight as
\begin{equation}\label{degcw}
\Delta(g)=\Delta_1+\dots+\Delta_s+m_1+\dots+m_s\,.
\end{equation}
Thus we get the conformal weight space decomposition
$$
\mc V(\mf g_{\leq\frac12})=\bigoplus_{i\in\frac12\mb Z_+}\mc V(\mf g_{\leq\frac12})\{i\}\,.
$$
For example $\mc V(\mf g_{\leq\frac12})\{0\}=\mb F$,
$\mc V(\mf g_{\leq\frac12})\{\frac12\}=\mf g_{\frac12}$,
and $\mc V(\mf g_{\leq\frac12})\{1\}=\mf g_{0}\oplus S^2\mf g_{\frac12}$.

\begin{theorem}[\cite{DSKV12}]\label{daniele2}
Consider the PVA $\mc W$ with $\lambda$-bracket $\{\cdot\,_\lambda\,\cdot\}_{z,\rho}$
defined by equation \eqref{20120511:eq3}.
\begin{enumerate}[(a)]
\item
For every element $v\in\mf g^f_{1-\Delta}$ there exists a (not necessarily unique)
element $w\in\mc W\{\Delta\}=\mc W\cap\mc V(\mf g_{\leq\frac12})\{\Delta\}$ 
of the form $w=v+g$, where 
$$
g=\sum b_1^{(m_1)}\dots b_s^{(m_s)}\in\mc V(\mf g_{\leq\frac12})\{\Delta\}\,,
$$
is a sum of products of $\ad x$-eigenvectors 
$b_i\in\mf g_{1-\Delta_i}\subset\mf g_{\leq\frac12}$,
such that
$$
\Delta_1+\dots+\Delta_s+m_1+\dots+m_s=\Delta\,,
$$
and $s+m_1+\dots+m_s>1$.
\item
Let $\{w_j=v_j+g_j\}_{j\in J_f}$ be any collection of elements in $\mc W$ as in part (a),
where $\{v_j\}_{j\in J_f}$ is a basis of $\mf g^f\subset\mf g_{\leq\frac12}$ 
consisting of $\ad x$-eigenvectors.
Then the differential subalgebra $\mc W\subset\mc V(\mf g_{\leq\frac12})$ 
is the algebra of differential polynomials in the variables $\{w_j\}_{j\in J_f}$.
The algebra $\mc W$ is graded by the conformal weights defined in \eqref{degcw}:
$\mc W=\mb F\oplus\mc W\{1\}\oplus\mc W\{\frac32\}\oplus\mc W\{2\}\oplus\dots$.
\item
Let $\Delta=1$ or $\frac32$.
Then, for $v\in\mf g^f_{1-\Delta}$, the corresponding element $w=v+g\in\mc W\{\Delta\}$
as in (a) is uniquely determined by $v$.
Moreover, 
the space $\mc W\{\Delta\}$ coincides with the space of all such elements
$\{w=v+g\,|\,v\in\mf g^f_{1-\Delta}\}$.
In other words, there are bijective linear maps
$\varphi:\,\mf g^f_0\stackrel{\sim}{\longrightarrow}\mc W\{1\}$,
such that $\varphi(v)= w =v+g$,
and $\psi:\,\mf g^f_{-\frac12}\stackrel{\sim}{\longrightarrow}\mc W\{\frac32\}$,
such that $\psi(v)=w=v+g$.
\item
We have a Virasoro element $L\in\mc W\{2\}$ given by
\begin{equation}\label{virL}
L=
f+x'+\frac12\sum_{j\in J_0}a_ja^j
+\sum_{k\in J_{\frac12}}v^k[f,v_k]
+\frac12\sum_{k\in J_\frac12}v^k\partial v_k
\,.
\end{equation}
The $\lambda$-bracket of $L$ with itself is
\begin{equation}\label{virasoro}
\{L_\lambda L\}_{z,\rho}
=(\partial+2\lambda)L-(x| x)\lambda^3+2(f| s)z\lambda\,.
\end{equation}
Furthermore, for $z=0$, $L$ is an energy-momentum element
and the conformal weight defined by \eqref{degcw} coincides 
with the $L$-conformal weight,
namely, for $w\in\mc W\{\Delta\}$ we have 
$\{L_\lambda w\}_{0,\rho}=(\partial+\Delta\lambda)w+O(\lambda^2)$.
\end{enumerate}
\end{theorem}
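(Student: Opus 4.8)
The statement collects two structural facts, (a) and (b), expressing $\mc W$ as a polynomial differential algebra with one generator for each basis vector of $\mf g^f$, and two more explicit ones, (c) and (d).

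\emph{Part (a).} Fix $v\in\mf g^f_{1-\Delta}$ and look for $w=v+g\in\mc W\{\Delta\}$ of the stated form. By Theorem~\ref{daniele1}(a) and the Jacobi identity for the representation \eqref{20120511:eq1} (recall that $\{a_\lambda b\}_z=[a,b]$ for $a,b\in\mf g_{\geq\frac12}$), the map $a\mapsto a\,^\rho_\lambda\,v$ is a $1$-cocycle of the Lie conformal algebra $\mb F[\partial]\mf g_{\geq\frac12}$ with values in $\mc V(\mf g_{\leq\frac12})$, and $w=v+g$ lies in $\mc W$ precisely when this cocycle equals $a\mapsto-\,a\,^\rho_\lambda\,g$. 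So the task is to show that it is a coboundary with $g$ of conformal weight $\Delta$ and word-length $s+m_1+\dots+m_s>1$. I would do this by induction on the conformal weight grading \eqref{degcw}, under which $a\,^\rho_\lambda\,\cdot$ lowers weight by $j$ when $a\in\mf g_j$, $j>0$: the obstruction at each step is governed by the ``symbol'' of the operators $a\,^\rho_\lambda\,\cdot$, which by \eqref{lambda}--\eqref{rho} and the isomorphisms \eqref{20130201:eq1} is built from $\ad$ on $\mf g$, so the direct sum \eqref{20130402:eq1}, $\mf g_{\leq\frac12}=\mf g^f\oplus[e,\mf g_{\leq-\frac12}]$, supplies the complement needed to split it off. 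As a monomial of conformal weight $\Delta$ has word-length at most $2\Delta$, the recursion terminates; arranging the correction to carry no word-length-$1$ term gives $w=v+g$ with $g$ of word-length $>1$.

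\emph{Parts (b) and (c).} Let $\{w_j=v_j+g_j\}_{j\in J_f}$ be as in (a), with $\{v_j\}$ an $\ad x$-homogeneous basis of $\mf g^f$; since $\mf g^f\subset\mf g_{\leq0}$, each $w_j$ is homogeneous of conformal weight $\Delta_j=1-\wt(v_j)\geq1$. These generate a differential subalgebra of $\mc W$ by Theorem~\ref{daniele1}(b), and are differentially algebraically independent because the lowest-word-length part of any differential polynomial relation among them is the same relation among the linearly independent $v_j$ in $\mc V(\mf g^f)\subset\mc V(\mf g_{\leq\frac12})$. For the reverse inclusion I would argue by induction on conformal weight: given $0\neq g\in\mc W\{\Delta\}$, its word-length-$1$ component lies in $\mb F[\partial]\mf g^f$ (again by \eqref{20130402:eq1}) and can be matched by a differential polynomial in the $w_j$; subtracting it leaves an element of $\mc W\{\Delta\}$ of strictly larger word-length, and iterating (word-length $\leq2\Delta$) exhausts $g$. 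The grading of $\mc W$ is then inherited from that of the $w_j$. For (c), since $\Delta_j\geq1$ there is no room in conformal weights $1$ and $\frac32$ for derivatives or products of the $w_j$, so by (b) $\mc W\{1\}=\bigoplus_{\Delta_j=1}\mb F w_j$ and $\mc W\{\frac32\}=\bigoplus_{\Delta_j=\frac32}\mb F w_j$; passing to $\mf g^f$-leading parts (which are nonzero) identifies these with $\mf g^f_0$ and $\mf g^f_{-\frac12}$, yielding the bijections $\varphi,\psi$, and, since a nonzero element of $\mc W\{1\}$ or $\mc W\{\frac32\}$ has nonzero leading part, the correction $g$ attached to $v$ in (a) is unique.

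\emph{Part (d).} Take $L$ as in \eqref{virL}. First verify $L\in\mc W$, i.e.\ $a\,^\rho_\lambda\,L=0$ for all $a\in\mf g_{\geq\frac12}$: a direct computation with the Master Formula \eqref{masterformula} in which the five summands of \eqref{virL} cancel, using the completeness relations \eqref{completeness}, \eqref{completeness2} and the invariance of Lemma~\ref{20130315:lem1}. Next compute $\{L_\lambda L\}_{z,\rho}$ from the explicit brackets \eqref{wbrackX}--\eqref{HKds}: expand $L$ into its five pieces, collapse the resulting double sums with the duality relations \eqref{omega+dual}, \eqref{omega-dual} and the completeness relations, and watch everything reassemble into $(\partial+2\lambda)L$, a $\lambda^3$-term with coefficient $-(x|x)$ (rewritten via $(x|x)=\frac12(e|f)$, \eqref{20130201:eq2}), and a $z$-linear term $2(f|s)z\lambda$ coming from the $K$-part of \eqref{HKds}. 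That $L\in\mc W\{2\}$ is read off \eqref{degcw}. Finally, for $z=0$ the bracket $\{\cdot\,_\lambda\,\cdot\}_{0,\rho}$ preserves conformal weight (both \eqref{lambda} at $z=0$ and $\rho$ do), hence $\{L_\lambda w\}_{0,\rho}\in\bigoplus_{p\geq0}\mc W\{\Delta+1-p\}\lambda^p$ for $w\in\mc W\{\Delta\}$; a short check on the generators $w_j$, reduced via the conformal-derivation property to computing $\{L_\lambda a\}_{z,\rho}$ on the $\ad x$-homogeneous generators $a$ of $\mc V(\mf g_{\leq\frac12})$, shows that the $\lambda^0$- and $\lambda^1$-coefficients are $\partial w_j$ and $\Delta_j w_j$, so $\{L_\lambda w\}_{0,\rho}=(\partial+\Delta\lambda)w+O(\lambda^2)$ and $L$ is an energy--momentum element whose weight grading is \eqref{degcw}.

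\emph{Main obstacle.} The crux is the coboundary (equivalently, gauge-fixing) step in (a): one must control the lower-order contributions to $a\,^\rho_\lambda\,\cdot$ coming from $\rho$, from the $\lambda$-term in \eqref{lambda}, and from the Leibniz rule, and check that the obstruction at each stage lies in the range of the symbol whose cokernel is $\mf g^f$ --- which is exactly where \eqref{20130402:eq1} and $\mf{sl}_2$ representation theory enter. The most laborious (but routine) part is the self-bracket computation of $L$ in (d).
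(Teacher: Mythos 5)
The paper does not actually prove Theorem \ref{daniele2} --- it is imported verbatim from \cite{DSKV12} --- so there is no in-paper argument to match your proposal against line by line; what follows compares your sketch with the proof given there. For parts (a)--(c) your strategy is essentially the one used in \cite{DSKV12}: filter $\mc V(\mf g_{\leq\frac12})$ by polynomial degree (your ``word length''), observe that the constraint operators $a\,{}^\rho_\lambda$ split into a degree-preserving and a degree-lowering piece, and solve recursively, with the decomposition $\mf g_{\leq\frac12}=\mf g^f\oplus[e,\mf g_{\leq-\frac12}]$ of \eqref{20130402:eq1} providing exactly the splitting that kills the obstruction at each step and simultaneously yields the isomorphism $\pi:\mc W\to\mc V(\mf g^f)$ underlying (b) and (c); your leading-term argument for differential algebraic independence and your dimension count in weights $1$ and $\frac32$ are also the standard ones. (One small imprecision: the condition $s+m_1+\dots+m_s>1$ does allow word-length-one monomials $b^{(m)}$ with $m\geq1$ in $g$; only the underived linear part of $w$ is pinned to be $v$.) The genuine divergence is in part (d): you propose to verify $L\in\mc W$ and compute $\{L_\lambda L\}_{z,\rho}$ by brute-force expansion of the five summands of \eqref{virL}, whereas \cite{DSKV12} obtains $L$ as the image under $\rho$ of a Sugawara-type element (this is recorded in Remark \ref{20120727:rem} of the present paper), so that the Virasoro relation \eqref{virasoro}, the central term $-(x|x)\lambda^3$, and the energy--momentum property come from general properties of the Sugawara construction combined with Theorem \ref{daniele1}(c), rather than from term-by-term cancellation. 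Your route is workable but considerably heavier, and it is the one place where you would have to be careful that the ``routine'' computation actually closes up; the Sugawara route buys you the $\lambda^3$ coefficient and the eigenvector property of the generators essentially for free, at the cost of first establishing the general Sugawara machinery for the twisted map $\rho$.
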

\begin{remark}\label{20120727:rem}
In \cite{DSKV12} the Virasoro element $L\in\mc W$ is constructed, 
via the so called Sugawara construction, as
$$
L=
\rho\Big(\frac12\sum_{j\in J}u^ju_j
+\frac12\sum_{j\in J_{\frac12}}v^j\partial v_j
+x^{\prime}
\Big)\,\in\mc W\,,
$$
where $\rho$ is the map \eqref{rho}.
The equality of this expression and the one in \eqref{virL} 
follows immediately from the definition of the map $\rho$.
\end{remark}

Later we will use the following result concerning the conformal weights defined by \eqref{degcw}.
\begin{lemma}\label{20130320:lem}
If $g\in\mc W\{\Delta_1\}$ and $h\in\mc W\{\Delta_2\}$,
we have
$$
\{g_\lambda h\}_{z,\rho}
=\sum_{n\in\mb Z_+} g_{(n,H)}h\, \lambda^n
-z\sum_{n\in\mb Z_+} g_{(n,K)}h\, \lambda^n\,,
$$
where
$$
g_{(n,H)}h\,\in\mc W\{\Delta_1+\Delta_2-n-1\}
\,\,\text{ and }\,\,
g_{(n,K)}h\,\in\mc W\{\Delta_1+\Delta_2-n-2-d\}
\,,
$$
are independent of $z$.
(Recall that $d$ denotes the depth of the grading \eqref{dec}.)
\end{lemma}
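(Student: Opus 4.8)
The plan is to read both statements off the explicit formula \eqref{wbrackX}, applied with $X=H$ and $X=K$ and with $H_{ji},K_{ji}$ as in \eqref{HKds}, using that $\{g_\lambda h\}_{z,\rho}=\{g_\lambda h\}_{H,\rho}-z\{g_\lambda h\}_{K,\rho}$ and that neither $H$ nor $K$ involves $z$. The fact that each coefficient $g_{(n,H)}h$, $g_{(n,K)}h$ actually lies in $\mc W$ (and not merely in $\mc V(\mf g_{\leq\frac12})$) is already available: by Theorem \ref{daniele1}(d) the left-hand side $\{g_\lambda h\}_{z,\rho}$ lies in $\mc W[\lambda]$ for every $z$, and, being affine in $z$, its value at $z=0$ gives $\{g_\lambda h\}_{H,\rho}\in\mc W[\lambda]$, whence also $\{g_\lambda h\}_{K,\rho}\in\mc W[\lambda]$. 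The coefficients of powers of $\lambda$ are then in $\mc W$, and are manifestly independent of $z$. So the lemma reduces entirely to a conformal-weight count.

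For this I extend the grading by conformal weight \eqref{degcw} from $\mc V(\mf g_{\leq\frac12})$ to $\mc V(\mf g_{\leq\frac12})[\lambda]$ by declaring $\lambda$ to have conformal weight $1$; this is the choice dictated by sesquilinearity, since $\partial$ raises $\Delta$ by $1$. Writing $q_j\in\mf g_{1-\Delta_{q_j}}$, a monomial $q_j^{(n)}$ then has conformal weight $\Delta_{q_j}+n$, so for homogeneous $h\in\mc W\{\Delta_2\}$ and $g\in\mc W\{\Delta_1\}$ the partial derivatives $\frac{\partial h}{\partial q_j^{(n)}}$ and $\frac{\partial g}{\partial q_i^{(m)}}$ are homogeneous of conformal weights $\Delta_2-\Delta_{q_j}-n$ and $\Delta_1-\Delta_{q_i}-m$ respectively, while $(\partial+\lambda)^n$ and $(-\lambda-\partial)^m$ raise conformal weight by $n$ and $m$.

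The one genuinely substantive step is checking that the entries of the structure matrices are homogeneous, of conformal weights
\[
\wt H_{ji}(\lambda)=\Delta_{q_i}+\Delta_{q_j}-1,
\qquad
\wt K_{ji}(\partial)=\Delta_{q_i}+\Delta_{q_j}-2-d\,.
\]
For $K_{ji}=(s|[q_j,q_i])$, with $s\in\mf g_d$: this is a scalar, of conformal weight $0$, and it is nonzero only when the $\ad x$-eigenvector $[q_j,q_i]$ lies in $\mf g_{-d}$, i.e. only when $\Delta_{q_i}+\Delta_{q_j}=2+d$, which is exactly $0=\Delta_{q_i}+\Delta_{q_j}-2-d$. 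For $H_{ji}$ one treats the three summands in \eqref{HKds} separately: $\pi_{\leq\frac12}[q_i,q_j]\in\mf g_{2-\Delta_{q_i}-\Delta_{q_j}}$ has conformal weight $\Delta_{q_i}+\Delta_{q_j}-1$; the term $(q_i|q_j)\lambda$ is nonzero only when $q_i,q_j$ are paired by $(\cdot\,|\,\cdot)$, i.e. when $\Delta_{q_i}+\Delta_{q_j}=2$, and then $\lambda$ carries conformal weight $1=\Delta_{q_i}+\Delta_{q_j}-1$; and $(f|[q_i,q_j])$ is nonzero only when $[q_i,q_j]\in\mf g_1$, i.e. when $\Delta_{q_i}+\Delta_{q_j}=1$, and this scalar has conformal weight $0=\Delta_{q_i}+\Delta_{q_j}-1$.

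Granting this, adding up the conformal weights of the five factors in a single summand of \eqref{wbrackX} with $X=H$ gives $(\Delta_2-\Delta_{q_j}-n)+n+(\Delta_{q_i}+\Delta_{q_j}-1)+m+(\Delta_1-\Delta_{q_i}-m)=\Delta_1+\Delta_2-1$, so $\{g_\lambda h\}_{H,\rho}$ is homogeneous of conformal weight $\Delta_1+\Delta_2-1$ with $\lambda$ counted with weight $1$; hence its coefficient $g_{(n,H)}h$ of $\lambda^n$ lies in $\mc V(\mf g_{\leq\frac12})\{\Delta_1+\Delta_2-1-n\}$, and combining this with $g_{(n,H)}h\in\mc W$ and $\mc W\{\Delta\}=\mc W\cap\mc V(\mf g_{\leq\frac12})\{\Delta\}$ yields $g_{(n,H)}h\in\mc W\{\Delta_1+\Delta_2-1-n\}$. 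Repeating verbatim with $K_{ji}$ in place of $H_{ji}$ only changes the middle summand from $\Delta_{q_i}+\Delta_{q_j}-1$ to $\Delta_{q_i}+\Delta_{q_j}-2-d$, giving $g_{(n,K)}h\in\mc W\{\Delta_1+\Delta_2-2-d-n\}$. I expect the homogeneity of $H_{ji}$ and $K_{ji}$ above to be the only point requiring thought; the rest is degree bookkeeping.
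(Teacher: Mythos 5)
Your proposal is correct and follows the same route as the paper, whose entire proof is the single sentence that the claim ``immediately follows from equations \eqref{wbrackX} and \eqref{HKds}''; you have simply made explicit the weight bookkeeping (assigning $\lambda$ conformal weight $1$ and checking homogeneity of $H_{ji}$ and $K_{ji}$) and the appeal to Theorem \ref{daniele1}(d) that the authors leave implicit.
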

\begin{proof}
It immediately follows from equations \eqref{wbrackX} and \eqref{HKds}
defining the $\lambda$-bracket of elements in $\mc W$.
\end{proof}

\section{Classical \texorpdfstring{$\mc W$}{W}-algebras for minimal nilpotent elements}
\label{sec:3}

%

\subsection{Setup and preliminary computations}\label{sec:3.1}

Let $f\in\mf g$ be a minimal nilpotent element,
that is a lowest root vector of $\mf g$.
In this case, the $\ad x$-eigenspace decomposition \eqref{dec} is
$$
\mf g=\mb Ff\oplus\mf g_{-\frac12}\oplus\mf g_{0}\oplus\mf g_{\frac12}\oplus\mb Fe\,,
$$
%
\begin{remark}\label{20130417:rem1}
It is well known that $\dim\mf g_{\pm\frac12}=2h\,\check{}-4$,
where $h\,\check{}$ is the dual Coxeter number of $\mf g$ (which equals $\frac12$ of the eigenvalue
in the adjoint representation of the Casimir operator associated to the Killing form $\kappa$).
Also, $\kappa(x|x)=h\,\check{}$.
\end{remark}
Note that $(x|a)=0$ for all $a\in\mf g_0^f$.
Hence, the subalgebra $\mf g_0\subset\mf g$ admits the orthogonal decomposition
$\mf g_0=\mf g_0^f\oplus\mb Fx$.
For $a\in\mf g_0$ we denote, as in \eqref{20130402:eq1}, 
by $a^\sharp$ its component in $\mf g_0^f$.
In other words, an element $a\in\mf g_0=\mf g_0^f\oplus\mb Fx$ decomposes as
\begin{equation}\label{20130201:eq3}
a=a^\sharp+\frac{(a|x)}{(x|x)}x\,.
\end{equation}
Moreover, since $\mf g_1=\mb Fe$ and $\mf g_{-1}=\mb Ff$, we have, using \eqref{20130201:eq2},
\begin{equation}\label{20130315:eq5}
[u,u_1]=\frac{\omega_-(u,u_1)}{2(x|x)}f
\,\,,\,\,\,\,
\text{ for all } u,u_1\in\mf g_{-\frac12}
\,,
\end{equation}
and 
\begin{equation}\label{20130315:eq6}
[v,v_1]=\frac{\omega_+(v,v_1)}{2(x|x)}e
\,\,,\,\,\,\,
\text{ for all } v,v_1\in\mf g_{\frac12}
\,,
\end{equation}
where $\omega_{\pm}$ are as in \eqref{20130201:eq4} and \eqref{20130201:eq5}.

Since $\mf g_1=\mb Fe$, we can choose, without loss of generality, $s=e$.
We can compute, using the definitions \eqref{lambda} and \eqref{rho}, 
all $\lambda$-brackets $\rho\{x_\lambda y\}_z$ for arbitrary $x,y\in\mf g$.
The results are given in Table \ref{table1}:

\begin{table}[H]
\caption{$\rho\{x_\lambda y\}_z$ for $x,y\in\mf g$} \label{table1} 
\begin{center}
\begin{tabular}{c||c|c|c|c|c} 
\phantom{$\Bigg($} 
$\rho\{\cdot\,_\lambda\,\cdot\}_z$
& $f\in\mf g_{-1}$ & $u_1\in\mf g_{-\frac12}$ & $b\in\mf g_{0}$ & $v_1\in\mf g_{\frac12}$ & $e\in\mf g_1$\\
\hline
\hline \phantom{$\Bigg($} 
$f\in\mf g_{-1}$ 
& 0 & 0 
& $\begin{array}{l}\frac{(x|b)}{(x|x)}f+ \\ 2z(x|b)\end{array}$ 
& $[f,v_1]$ & $\begin{array}{l} -2x+ \\ 2(x|x)\lambda \end{array}$ \\
\hline \phantom{$\Bigg($}
$u\in\mf g_{-\frac12}$ 
& 0 & $\begin{array}{l} \frac{\omega_-(u,u_1)}{2(x|x)}f+ \\ z\omega_-(u,u_1)\end{array}$ 
& $[u,b]$ & $\begin{array}{l} [u,v_1]+ \\ (u|v_1)\lambda\end{array}$ & $-[e,u]$ \\
\hline \phantom{$\Bigg($}
$a\in\mf g_{0}$ 
& $\begin{array}{l} -\frac{(x|a)}{(x|x)}f \\ -2z(x|a) \end{array}$ 
& $[a,u_1]$ & $\begin{array}{l} [a,b]+ \\ (a|b)\lambda \end{array}$ & $[a,v_1]$ & $2(x|a)$ \\
\hline \phantom{$\Bigg($}
$v\in\mf g_{\frac12}$ 
& $-[f,v]$ & $\begin{array}{l} [v,u_1]+ \\ (v|u_1)\lambda \end{array}$ & $[v,b]$ & $\omega_+(v,v_1)$ & 0 \\
\hline \phantom{$\Bigg($}
$e\in\mf g_1$ 
& $\begin{array}{l} 2x+ \\ 2(x|x)\lambda \end{array}$ & $[e,u_1]$ & $-2(x|b)$ & 0 & 0
\end{tabular}
\end{center}
\end{table}

\subsection{Generators of \texorpdfstring{$\mc W$}{W} for minimal nilpotent
\texorpdfstring{$f$}{f}}
\label{sec:3.2}

In this section we construct an explicit set of generators $\{w_j\}_{j\in J_f}$
for the $\mc W$-algebra.
\begin{theorem}\label{20120727:thm1}
The $\mc W$-algebra associated to a minimal nilpotent element $f\in\mf g$
is the algebra of differential polynomials 
with the following generators:
the energy-momentum element $L$ defined by \eqref{virL},
and elements of conformal weight $1$ and $\frac32$,
given by the following bijective maps:
\begin{equation}\label{phi}
\varphi:\,\mf g_{0}^f\to\mc W\{1\},
\quad
\varphi(a)=a+\frac12\sum_{k\in J_{\frac12}}[a,v_k]v^k
\,,
\end{equation}
and
\begin{equation}\label{psi}
\psi:\,\mf g_{-\frac12}\to\mc W\{\frac32\},
\quad
\psi(u)=
u+\frac13\sum_{h,k\in J_{\frac12}}[[u,v_h],v_k]v^hv^k+\sum_{k\in J_{\frac12}}[u,v_k]v^k+\partial[e,u]\,.
\end{equation}
\end{theorem}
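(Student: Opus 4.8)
The plan is to derive the statement from the general structure result, Theorem \ref{daniele2}, and then carry out two direct $\lambda$-bracket computations. For $f$ minimal one has $\mf g^f=\mb Ff\oplus\mf g_{-\frac12}\oplus\mf g_0^f$, of conformal weights $2,\frac32,1$, and $\mf g^f_{-\frac12}=\mf g_{-\frac12}$ since $\mf g_{-\frac32}=0$. By Theorem \ref{daniele2}(c) there already exist \emph{unique} bijective linear maps $\varphi\colon\mf g_0^f\to\mc W\{1\}$ and $\psi\colon\mf g_{-\frac12}\to\mc W\{\frac32\}$ with $\varphi(a)=a+g_a$, $\psi(u)=u+g_u$, where $g_a,g_u$ have the form prescribed in Theorem \ref{daniele2}(a); and by Theorem \ref{daniele2}(b), applied to the basis of $\mf g^f$ given by $f$ together with bases of $\mf g_{-\frac12}$ and of $\mf g_0^f$, the Virasoro element $L$ of Theorem \ref{daniele2}(d) together with the images of $\varphi$ and $\psi$ freely generate $\mc W$ as a differential algebra. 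Hence the theorem reduces to showing that the explicit right-hand sides of \eqref{phi} and \eqref{psi} lie in $\mc W$: these manifestly have leading terms $a$, $u$ and the remaining terms of the shape prescribed in Theorem \ref{daniele2}(a), so by the uniqueness in Theorem \ref{daniele2}(c) they coincide with $\varphi(a)$ and $\psi(u)$ as soon as they are known to belong to $\mc W$.

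To prove that an element $g\in\mc V(\mf g_{\leq\frac12})$ lies in $\mc W$ it suffices to check that $v\,^\rho_\lambda\,g=0$ for all $v\in\mf g_{\frac12}$: indeed $\mf g_{\geq\frac12}=\mf g_{\frac12}\oplus\mb Fe$, and by \eqref{20130315:eq6} together with the non-degeneracy of $\omega_+$ one has $[\mf g_{\frac12},\mf g_{\frac12}]=\mb Fe$, so the Jacobi identity of the representation \eqref{20120511:eq1} propagates the vanishing of $v\,^\rho_\lambda\,(\,\cdot\,)$ from $\mf g_{\frac12}$ to all of $\mf g_{\geq\frac12}$. All the relevant brackets are computed from Table \ref{table1}, the conformal-derivation Leibniz rule of Theorem \ref{daniele1}(a), the identity $\rho(e)=(f|e)=2(x|x)$ (from \eqref{rho} and \eqref{20130201:eq2}), and the completeness relations \eqref{completeness} and \eqref{completeness2}.

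For $\varphi$ the computation is short: expanding $v\,^\rho_\lambda\,\bigl(a+\tfrac12\sum_k[a,v_k]v^k\bigr)$ by the Leibniz rule and reading off Table \ref{table1} gives $[v,a]+\tfrac12\sum_k\omega_+(v,[a,v_k])\,v^k+\tfrac12\sum_k\omega_+(v,v^k)\,[a,v_k]$. By \eqref{completeness} the last sum equals $\tfrac12[a,v]$; in the middle sum one uses Lemma \ref{20130315:lem1}(b) and the hypothesis $[f,a]=0$ (valid since $a\in\mf g_0^f$) to rewrite $\omega_+(v,[a,v_k])=\omega_+([v,a],v_k)$, whereupon \eqref{completeness} again gives $\tfrac12[a,v]$. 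Thus $v\,^\rho_\lambda\,\varphi(a)=[v,a]+[a,v]=0$, so $\varphi(a)\in\mc W$.

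The main work, and the step I expect to be the main obstacle, is the analogous computation for $\psi$. Here $v\,^\rho_\lambda\,\psi(u)$ splits into the four contributions of the summands of \eqref{psi}, the cubic summand producing three further terms via the Leibniz rule; each term is then simplified using Table \ref{table1}, repeated applications of the Jacobi identity, the decomposition $\mf g_0=\mf g_0^f\oplus\mb Fx$ of \eqref{20130201:eq3}, the relations $[x,v_k]=\tfrac12 v_k$, $[e,x]=-e$, $[e,\mf g_0^f]=0$ (the last by $\mf{sl}_2$-representation theory), the completeness relations \eqref{completeness}, \eqref{completeness2}, and Lemmas \ref{20130315:lem1}--\ref{20130315:lem2}. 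The bookkeeping is delicate: the $O(\lambda)$ part cancels between the contribution of the leading term $u$ and that of $\partial[e,u]$ (using $\omega_+(v,[e,u])=-2([v,u]|x)$ and $([v,u]|x)=\tfrac12(v|u)$), the $\mb Fx$-components cancel, and the terms quadratic in $\mf g_{\frac12}$ produced by the cubic correction must be matched, after symmetrizing in the summation indices $h,k$, against those produced by the quadratic correction $\sum_k[u,v_k]v^k$. Carrying out all the cancellations yields $v\,^\rho_\lambda\,\psi(u)=0$, which completes the proof.
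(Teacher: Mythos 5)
Your overall strategy is the paper's: reduce, via Theorem \ref{daniele2}, to showing that the right-hand sides of \eqref{phi} and \eqref{psi} are annihilated by the action \eqref{20120511:eq1}, and then verify this with Table \ref{table1}, Lemmas \ref{20130315:lem1}--\ref{20130315:lem2} and the completeness relations. Your one genuine deviation is a legitimate streamlining: since $[\mf g_{\frac12},\mf g_{\frac12}]=\mb Fe$ by \eqref{20130315:eq6} (when $\mf g\neq\mf{sl}_2$; for $\mf{sl}_2$ the maps $\varphi,\psi$ have zero domain and there is nothing to prove), the module Jacobi identity for the representation of Theorem \ref{daniele1}(a) lets you deduce $e\,^\rho_\lambda\,g=0$ from $v\,^\rho_\lambda\,g=0$ for all $v\in\mf g_{\frac12}$, so you may skip the two conditions $\rho\{e_\lambda\varphi(a)\}_z=0$ and $\rho\{e_\lambda\psi(u)\}_z=0$ that the paper checks by hand. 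Your computation for $\varphi(a)$ is complete and agrees with the paper's.

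The gap is that the verification $\rho\{v_\lambda\psi(u)\}_z=0$ --- which is the bulk of the paper's proof --- is described but not carried out. You correctly name the tools and even the cancellation pattern (the $\lambda$-terms cancel via $\omega_+(v,[e,u])=-(v|u)$; the quadratic terms produced by the cubic correction must be played off against those produced by $\sum_k[u,v_k]v^k$), but you stop at ``carrying out all the cancellations yields $0$''. That last step is not pure bookkeeping: after the cancellations you list, one is left with the residual expression \eqref{20130315:eq4}, namely $\frac13\sum_{h,k}([f,[u,v_h]]|[v,v_k])v^hv^k+\frac13\sum_k[u,[v,v_k]]v^k$, and its vanishing is precisely where minimality of $f$ enters: one must use $\dim\mf g_{\pm1}=1$ through \eqref{20130315:eq6} and the decomposition \eqref{20130201:eq3} to write $[v,v_k]=\frac{\omega_+(v,v_k)}{2(x|x)}e$ and $[f,[u,v_h]]=-\frac{(u|v_h)}{2(x|x)}f$, and then invoke \eqref{completeness2}. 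Without exhibiting this computation --- in particular checking that the $\tfrac13$ coefficients coming from the cubic term and the unit coefficients coming from the quadratic term actually conspire to cancel --- the claim $\psi(u)\in\mc W$ is asserted rather than proved.
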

\begin{proof}
By Theorem \ref{daniele2}, we only need to prove that the images of the maps $\varphi$ and $\psi$
lie in $\mc W$.
In other words, recalling the definition \eqref{20120511:eq2} of $\mc W$, we need to prove
that 
\begin{enumerate}[(i)]
\item
$\rho\{e_\lambda \varphi(a)\}_z=0$ for every $a\in\mf g^f_0$,
\item
$\rho\{v_\lambda \varphi(a)\}_z=0$ for every $v\in\mf g_{\frac12}$ and $a\in\mf g^f_0$,
\item
$\rho\{e_\lambda \psi(u)\}_z=0$ for every $u\in\mf g_{-\frac12}$,
\item
$\rho\{v_\lambda \psi(u)\}_z=0$ for every $u\in\mf g_{-\frac12}$ and $v\in\mf g_{\frac12}$.
\end{enumerate}
We immediately get from equation \eqref{phi}, the left Leibniz rule, and Table \ref{table1}, that
$$
\rho\{e_\lambda \varphi(a)\}_z
=
\rho\{e_\lambda a\}_z+\frac12\sum_{k\in J_{\frac12}} \rho\{e_\lambda[a,v_k]\}_zv^k
+\frac12\sum_{k\in J_{\frac12}} \rho\{e_\lambda v^k\}_z[a,v_k]
=
-2(x|a)\,,
$$
which is zero since $x$ is orthogonal to $\mf g_0^f$.
This proves (i).
Similarly, for (ii) we have
$$
\begin{array}{l}
\displaystyle{
\rho\{v_\lambda \varphi(a)\}_z
=\rho\{v_\lambda a\}_z+\frac12\sum_{k\in J_{\frac12}}\rho\{v_\lambda [a,v_k]\}_z v^k
+\frac12\sum_{k\in J_{\frac12}}\rho\{v_\lambda v^k\}_z [a,v_k]
} \\
\displaystyle{
=[v,a]+\frac12\sum_{k\in J_{\frac12}}\omega_+(v,[a,v_k])v^k
+\frac12\sum_{k\in J_{\frac12}}\omega_+(v,v^k)[a,v_k]
\,,} 
\end{array}
$$
and this is zero by Lemma \ref{20130315:lem1}(b) and the completeness relation \eqref{completeness}.
Similarly, by the definition \eqref{psi} of $\psi(u)$, the Leibniz rule for the $\lambda$-bracket,
and Table \ref{table1}, we have
$$
\begin{array}{l}
\displaystyle{
\rho\{e_\lambda \psi(u)\}_z
} \\
\displaystyle{
=
\rho\{e_\lambda u\}_z
+\frac13\sum_{h,k\in J_{\frac12}} \Big(
\rho\{e_\lambda[[u,v_h],v_k]\}_z v^hv^k
+\rho\{e_\lambda v^h\}_z [[u,v_h],v_k] v^k
} \\
\displaystyle{
+\rho\{e_\lambda v^k\}_z [[u,v_h],v_k] v^h
\Big)
+\sum_{k\in J_{\frac12}}\Big(
\rho\{e_\lambda [u,v_k]\}_z v^k
+\rho\{e_\lambda v^k\}_z [u,v_k]
\Big)
} \\
\displaystyle{
+(\partial+\lambda)\rho\{e_\lambda[e,u]\}_z
=
[e,u]-2\sum_{k\in J_{\frac12}}(x|[u,v_k])v^k
=
[e,u]+\sum_{k\in J_{\frac12}}(u|v_k)v^k
\,.}
\end{array}
$$
This is zero by Lemma \ref{20130315:lem2}(c), proving (iii).
Finally, for part (iv) we have, using Table \ref{table1},
\begin{equation}\label{20120729:eq1}
\begin{array}{l}
\displaystyle{
\rho\{v_\lambda \psi(u)\}_z
=
[v,u]+(v|u)\lambda
+\frac13\sum_{h,k\in J_{\frac12}} \Big(
\omega_+(v,[[u,v_h],v_k]) v^hv^k
} \\
\displaystyle{
+\omega_+(v,v^h) [[u,v_h],v_k] v^k
+\omega_+(v,v^k) [[u,v_h],v_k] v^h
\Big)
+\sum_{k\in J_{\frac12}}
[v,[u,v_k]] v^k
} \\
\displaystyle{
+\sum_{k\in J_{\frac12}} \omega_+(v,v^k) [u,v_k]
+\omega_+(v,[e,u])\lambda
\,.} 
\end{array}
\end{equation}
First, it follows by the definition \eqref{20130201:eq5} of $\omega_+$
and Lemma \ref{20130315:lem2}(a) that $\omega_+(v,[e,u])=-(v|u)$,
so that the second term and the last term in the RHS of \eqref{20120729:eq1} cancel.
Moreover, the first term and the second last term in the RHS of \eqref{20120729:eq1} 
cancel thanks to the completeness relation \eqref{completeness}.
Furthermore, again using the completeness relation \eqref{completeness}, we have
\begin{equation}\label{20130315:eq1}
\sum_{h,k\in J_{\frac12}}\omega_+(v,v^k) [[u,v_h],v_k] v^h
=
\sum_{h\in J_{\frac12}}[[u,v_h],v] v^h
=
-\sum_{h\in J_{\frac12}}[v,[u,v_h]] v^h
\,,
\end{equation}
and
\begin{equation}\label{20130315:eq2}
\begin{array}{l}
\displaystyle{
\sum_{h,k\in J_{\frac12}}\omega_+(v,v^h) [[u,v_h],v_k] v^k
=
\sum_{k\in J_{\frac12}}[[u,v],v_k] v^k
} \\
\displaystyle{
=
\sum_{k\in J_{\frac12}}[u,[v,v_k]] v^k
-\sum_{k\in J_{\frac12}}[v,[u,v_k]] v^k
\,,
}
\end{array}
\end{equation}
while, by Lemma \ref{20130315:lem1}(b) and 
the completeness relation \eqref{completeness}
we have
\begin{equation}\label{20130315:eq3}
\begin{array}{l}
\displaystyle{
\sum_{h,k\in J_{\frac12}}
\omega_+(v,[[u,v_h],v_k]) v^hv^k
} \\
\displaystyle{
=\sum_{h,k\in J_{\frac12}}
\Big(\omega_+([v,[u,v_h]],v_k)+([f,[u,v_h]]|[v,v_k])\Big)v^hv^k
} \\
\displaystyle{
=-\sum_{h\in J_{\frac12}} [v,[u,v_h]] v^h
+\sum_{h,k\in J_{\frac12}}
([f,[u,v_h]]|[v,v_k])v^hv^k
\,.
}
\end{array}
\end{equation}
Combining equations \eqref{20120729:eq1}, \eqref{20130315:eq1}, 
\eqref{20130315:eq2} and \eqref{20130315:eq3}, we get
\begin{equation}\label{20130315:eq4}
\rho\{v_\lambda \psi(u)\}_z
=
\frac13\sum_{h,k\in J_{\frac12}}([f,[u,v_h]]|[v,v_k])v^hv^k
+\frac13\sum_{k\in J_{\frac12}}[u,[v,v_k]] v^k
\,.
\end{equation}
By \eqref{20130315:eq6} we have $[v,v_k]=\frac{\omega_+(v,v_k)}{2(x|x)}e$.
Moreover, by the decomposition \eqref{20130201:eq3}, we have
$$
[f,[u,v_h]]=\frac{(x|[u,v_h])}{(x|x)}[f,x]=-\frac{(u|v_h)}{2(x|x)}f\,.
$$
Therefore, equation \eqref{20130315:eq4} becomes,
by the completeness relation \eqref{completeness},
$$
\rho\{v_\lambda \psi(u)\}_z
=
\frac{1}{6(x|x)}\sum_{h\in J_{\frac12}}
(u|v_h)v^hv
-\frac{1}{6(x|x)}
[u,e] v
\,,
$$
which is zero by the completeness relation \eqref{completeness2}.
\end{proof}

Let, as before, $\mc V(\mf g_{\leq\frac12})=S(\mb F[\partial]\mf g_{\leq\frac12})$ 
be the algebra of differential polynomials over $\mf g_{\leq\frac12}$,
and let $\mc V(\mf g^f)$ be the algebra of differential polynomials over $\mf g^f$.
We extend the quotient map $\pi:\,\mf g_{\leq\frac12}\to\mf g^f$ (defined by \eqref{20130402:eq1})
to a differential algebra homomorphism $\pi:\,\mc V(\mf g_{\leq\frac12})\to\mc V(\mf g^f)$.
\begin{corollary}\label{20130402:cor1}
The quotient map 
$\pi:\,\mc V(\mf g_{\leq\frac12})\twoheadrightarrow\mc V(\mf g^f)$
restricts to a differential algebra isomorphism
$\pi:\,\mc W\stackrel{\sim}{\longrightarrow}\mc V(\mf g^f)$,
and the inverse map $\pi^{-1}:\,\mc V(\mf g^f)\stackrel{\sim}{\longrightarrow}\mc W$
is defined on generators by 
$$
\begin{array}{l}
\displaystyle{
\vphantom{\Big(}
\pi^{-1}(a)=\varphi(a)
\,\,\,\, \text{ for } a\in\mf g_0^f
\,,} \\
\displaystyle{
\vphantom{\Big(}
\pi^{-1}(u)=\psi(u)
\,\,\,\, \text{ for } u\in\mf g_{-\frac12}
\,,} \\
\displaystyle{
\pi^{-1}(f)=L-\frac12\sum_{i\in J_0^f}\varphi(a_i)\varphi(a^i)=:\tilde{L}\,,
}
\end{array}
$$
where $\{a_i\}_{i\in J_0^f}$ and $\{a^i\}_{i\in J_0^f}$ are dual bases of $\mf g_0^f$
with respect to $(\cdot\,|\,\cdot)$.
\end{corollary}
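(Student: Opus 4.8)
The plan is to use the grading by conformal weight to show that $\pi$ restricts to a filtered isomorphism, and then to identify the images of the generators. First I would observe that for each generator $w$ of $\mc W$ listed in Theorem \ref{20120727:thm1} — namely $\varphi(a)$ for $a\in\mf g_0^f$, $\psi(u)$ for $u\in\mf g_{-\frac12}$, and $L$ — one has $w = v + (\text{higher-degree terms})$, where $v$ is the ``leading'' element of $\mf g^f$ (respectively $a$, $u$, and $f$ coming from the summand $f$ in \eqref{virL}), and ``higher-degree'' means the remaining monomials each involve strictly more than one factor of a generator-derivative, all of conformal weight $\geq\frac12$. Since $\pi$ kills $\mf g_{\geq1}$-components only through the defining relation $\mf g_{\leq\frac12}=\mf g^f\oplus[e,\mf g_{\leq-\frac12}]$, applying $\pi$ to each monomial $b_1^{(m_1)}\cdots b_s^{(m_s)}$ in $\mc V(\mf g_{\leq\frac12})$ preserves the total degree $s$ and the conformal weight $\Delta$. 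Hence $\pi(\varphi(a)) = a + (\text{degree}\geq2\text{ terms in }\mc V(\mf g^f))$, and similarly for $\psi(u)$; for $L$ one computes directly from \eqref{virL}, using $\pi(x') = 0$ (as $x\in\mb Fx$ has $\pi(x)=0$ since $x\notin\mf g^f$ when $f$ is minimal — more precisely $x = \frac{(x|x)}{(x|x)}x$ has $x^\sharp=0$), that $\pi(L) = f + \pi\big(\frac12\sum_{j\in J_0}a_j a^j + \sum_k v^k[f,v_k] + \frac12\sum_k v^k\partial v_k\big)$, and the last two sums lie in conformal weight $2$ with degree $2$.

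Next I would set up the inductive argument. Filter both $\mc W$ and $\mc V(\mf g^f)$ by conformal weight (using the decomposition of Theorem \ref{daniele2}(b) for $\mc W$, and the analogous grading on $\mc V(\mf g^f)$ in which $a\in\mf g_0^f$ has weight $1$, $u\in\mf g_{-\frac12}$ has weight $\frac32$, and $f$ has weight $2$). By the observation above, $\pi$ is a morphism of filtered differential algebras, and the associated graded map sends the free generator $w_j$ of $\mc W$ to the free generator $v_j$ of $\mc V(\mf g^f)$ — since the ``higher-degree terms'' of $w_j$ have the \emph{same} conformal weight but strictly larger polynomial degree, they vanish in a suitably refined associated graded, or more simply: $\pi(w_j) - v_j$ lies in the subalgebra generated by differential monomials of degree $>1$ in the $v_i$'s. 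An easy triangularity/degree induction then shows $\pi:\mc W\to\mc V(\mf g^f)$ is bijective. Concretely, surjectivity: every $v_j\in\mf g^f$ is $\pi(w_j)$ minus something in the subalgebra generated by already-handled lower objects; injectivity: if $\pi(P)=0$ for $P\in\mc W$ nonzero, take its lowest conformal-weight homogeneous component and derive a contradiction from the leading-term description.

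Finally I would verify the explicit formula for $\pi^{-1}$ on generators. The identities $\pi^{-1}(a)=\varphi(a)$ and $\pi^{-1}(u)=\psi(u)$ are immediate once we know $\pi(\varphi(a))=a$ and $\pi(\psi(u))=u$ exactly (not just modulo higher terms) — and this holds because all the correction terms in \eqref{phi} and \eqref{psi} are products involving the factors $v^k, v_k\in\mf g_{\frac12}$, hence lie in the ideal generated by $\mf g_{\frac12}$, which $\pi$ restricted to $\mc V(\mf g_{\leq\frac12})$ does \emph{not} kill; so I must instead argue that $\pi$ is injective on $\mc W$ (already established) and that $\varphi(a), \psi(u)\in\mc W$ map under $\pi$ to something whose leading term is $a$, resp. $u$, and whose remaining terms, being in $\mc W$-image... — here the cleanest route is: define $\tilde L := L - \frac12\sum_{i\in J_0^f}\varphi(a_i)\varphi(a^i)\in\mc W$, and show $\pi(\tilde L) = f$, $\pi(\varphi(a))=a$, $\pi(\psi(u))=u$ by a \emph{direct} computation using that $\pi$ annihilates exactly $[e,\mf g_{\leq-\frac12}]\oplus\mf g_{\geq1}$. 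For $\varphi(a)$: $\pi(\frac12\sum_k[a,v_k]v^k)$ — here $[a,v_k]\in\mf g_{\frac12}$ and $v^k\in\mf g_{\frac12}$, and one needs that $\pi$ of such a product of two weight-$\frac12$ elements... is not automatically zero. So the genuinely necessary input is the identification of $\pi$ on products, i.e. that $\pi$ as a \emph{differential algebra} homomorphism satisfies $\pi(b_1\cdots b_s) = \pi(b_1)\cdots\pi(b_s)$; then since each $v_k\in\mf g_{\frac12}$ has $\pi(v_k)=v_k^\sharp$, the correction terms do \emph{not} vanish in general, and the correct statement is simply that $\varphi(a)$, $\psi(u)$, $\tilde L$ are the unique preimages, which follows from injectivity of $\pi|_{\mc W}$ together with the fact that $\pi(\varphi(a))\equiv a$, $\pi(\psi(u))\equiv u$, $\pi(\tilde L)\equiv f$ modulo the augmentation ideal and the bijectivity already proven forces equality once we match conformal weights $1,\frac32,2$ with the generator description of Theorem \ref{daniele2}(c) for weights $1$ and $\frac32$. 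I expect the main obstacle to be precisely this last bookkeeping: carefully tracking what $\pi$ does to the mixed monomials in $\mf g_{\frac12}$-variables and confirming that the stated formulas are exact rather than merely leading-order. This is resolved by noting that $\mc W\{1\}$ and $\mc W\{\frac32\}$ are \emph{uniquely} pinned down by their leading terms (Theorem \ref{daniele2}(c)), so $\varphi(a)$ and $\psi(u)$ are forced, and for $\tilde L\in\mc W\{2\}$ one checks $\pi(\tilde L)=f$ directly from \eqref{virL} and the formula for $\varphi$, the $\mf g_{\frac12}$-quadratic terms cancelling against those produced by $\frac12\sum a_j a^j$ after projecting $\mf g_0\to\mf g_0^f$ via \eqref{20130201:eq3}.
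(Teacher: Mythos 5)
There is a genuine error at the heart of your argument: you assert that the ideal generated by $\mf g_{\frac12}$ is \emph{not} killed by $\pi$, that ``each $v_k\in\mf g_{\frac12}$ has $\pi(v_k)=v_k^\sharp$'' with the correction terms of $\varphi(a)$ and $\psi(u)$ surviving, and that consequently $\pi(\varphi(a))=a$ only holds modulo higher-degree terms. This is false: by definition $\ker(\pi)=[e,\mf g_{\leq-\frac12}]$, and since $\ad e:\,\mf g_{-\frac12}\to\mf g_{\frac12}$ is bijective (equation \eqref{20130201:eq1}), one has $\mf g_{\frac12}=[e,\mf g_{-\frac12}]\subset\ker(\pi)$. (You correctly observe $x^\sharp=0$ via $x=\tfrac12[e,f]\in[e,\mf g_{-1}]$, but the identical reasoning applies to all of $\mf g_{\frac12}$.) Since $\pi$ is a differential algebra homomorphism, every monomial containing a factor from $\mf g_{\frac12}$ or its derivatives is annihilated; hence $\pi(\varphi(a))=a$ and $\pi(\psi(u))=u$ \emph{exactly}, and $\pi(L)=f+\frac12\sum_{i\in J_0^f}a_ia^i$ because the $x'$ and $v^k$ terms in \eqref{virL} die and the $\mb Fx$-component of $\sum_{j\in J_0}a_ja^j$ dies. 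At that point the corollary is immediate, which is the paper's (two-line) proof.

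Your attempted workaround does not close the gap you created. Knowing that $\pi|_{\mc W}$ is injective and that $\pi(\varphi(a))\equiv a$ modulo degree-$\geq2$ terms does not force $\pi(\varphi(a))=a$: the unique preimage of $a$ would then be some element of $\mc W$ other than $\varphi(a)$, and the stated formula for $\pi^{-1}$ would simply be wrong. Likewise the uniqueness of elements of $\mc W\{1\}$, $\mc W\{\frac32\}$ with prescribed leading term (Theorem \ref{daniele2}(c)) pins down $\varphi(a)$ and $\psi(u)$ inside $\mc W$, but says nothing about their images under $\pi$. Finally, the claimed cancellation of ``$\mf g_{\frac12}$-quadratic terms against those produced by $\frac12\sum a_ja^j$'' in the computation of $\pi(\tilde L)$ has no basis; nothing cancels, the $\mf g_{\frac12}$-terms simply vanish under $\pi$. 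Your filtration/triangularity scheme for bijectivity is sound in outline (and is implicitly what ``the statement follows'' means in the paper), but the identification of $\pi^{-1}$ on generators, which is the actual content of the corollary, rests entirely on the inclusion $\mf g_{\frac12}\subset\ker(\pi)$ that you deny.
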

\begin{proof}
By equations \eqref{phi} and \eqref{psi}
we have, respectively, that $\pi(\varphi(a))=a$ for every $a\in\mf g_0^f$
and $\pi(\psi(u))=u$ for every $u\in\mf g_{-\frac12}$,
since $\mf g_{\frac12}\subset\ker(\pi)$.
Moreover, by equation \eqref{virL} we have
$\pi(L)=f+\frac12\sum_{i\in J_0^f}a_ia^i$,
so that $L=\pi^{-1}(f)+\frac12\sum_{i\in J_0^f}\varphi(a_i)\varphi(a^i)$.
The statement follows.
\end{proof}

\subsection{\texorpdfstring{$\lambda$}{lambda}-brackets in \texorpdfstring{$\mc W$}{W}
for minimal nilpotent \texorpdfstring{$f$}{f}}

\begin{theorem}\label{prop:minimal}
The multiplication table for $\mc W$ in the case of a minimal nilpotent $f$ is 
given by Table \ref{table2}
(where $L$ is as in \eqref{virL}, $\varphi(a), \varphi(b)$ are as in \eqref{phi} for $a,b\in\mf g_0^f$,
and $\psi(u), \psi(u_1)$ are as in \eqref{psi} for $u,u_1\in\mf g_{-\frac12}$):

\begin {table}[H]
\caption{$\lambda$-brackets among generators of $\mc W$ for minimal nilpotent $f$} \label{table2} 
\begin{center}
\begin{tabular}{c||c|c|c}
\phantom{$\Bigg($} 
$\{\cdot\,_\lambda\,\cdot\}_{z,\rho}$
& $L$ & $\varphi(b)$ & $\psi(u_1)$ \\
\hline
\hline \phantom{$\Bigg($} 
$L$ & $\begin{array}{l} (\partial+2\lambda)L \\ -(x| x)\lambda^3+4(x|x)z\lambda \end{array}$
& $(\partial+\lambda)\varphi(b)$ & $\big(\partial+\frac32\lambda\big)\psi(u_1)$ \\
\hline \phantom{$\Bigg($}
$\varphi(a)$ & $\lambda\varphi(a)$ & $\varphi([a,b])+(a|b)\lambda$ & $\psi([a,u_1])$ \\
\hline \phantom{$\Bigg($}
$\psi(u)$ & $\big(\frac12\partial+\frac32\lambda\big)\psi(u)$ & $\psi([u,b])$ & eq.\eqref{20130320:eq2} 
\end{tabular}
\end{center}
\end{table}

The $\lambda$-bracket of $\psi(u)$ and $\psi(u_1)$ is
\begin{equation}\label{20130320:eq2}
\begin{array}{c}
\displaystyle{
\{\psi(u)_\lambda\psi(u_1)\}_{z,\rho}=
\sum_{k\in J_{\frac12}}\phi([u,v^k]^\sharp)\phi([u_1,v_k]^\sharp)
+\big(\partial+2\lambda\big)\phi([u,[e,u_1]]^\sharp)
} \\
\displaystyle{
+\frac{\omega_-(u,u_1)}{2(x|x)}\tilde{L}
-\lambda^2\omega_-(u,u_1)+z\omega_-(u,u_1)\,,
}
\end{array}
\end{equation}
where, for $a\in\mf g_0$, $a^\sharp$ was defined in \eqref{20130201:eq3}.

In particular, $L$ is an energy momentum element for all $z\in\mb F$,
$\varphi(a)$ is a primary element of conformal weight $1$ for every $a\in\mf g_0^f$,
and $\psi(u)$ is a primary element of conformal weight $\frac32$ for every $u\in\mf g_{-\frac12}$.
\end{theorem}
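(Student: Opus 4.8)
The entire theorem is a computation of $\lambda$-brackets among the generators $L$, $\varphi(a)$, $\psi(u)$, carried out in the PVA $\mc V(\mf g_{\leq\frac12})$ via the formula $\{g_\lambda h\}_{z,\rho}=\rho\{g_\lambda h\}_z$, using the Leibniz rules, the Master Formula, and Table \ref{table1} as the basic input for brackets of generators of $\mc g$. The strategy is to fill in Table \ref{table2} entry by entry, exploiting the conformal-weight bookkeeping of Lemma \ref{20130320:lem} to control which terms can appear, and then to recognize each resulting expression as a (differential polynomial) combination of the generators by comparing its image under $\pi$ with the image of a candidate element in $\mc W$, invoking the isomorphism $\pi:\mc W\xrightarrow{\sim}\mc V(\mf g^f)$ of Corollary \ref{20130402:cor1}. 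Concretely: (1) the bracket $\{L_\lambda L\}_{z,\rho}$ is already \eqref{virasoro} from Theorem \ref{daniele2}(d) after noting $2(f|s)=2(f|e)=4(x|x)$ by \eqref{20130201:eq2}; (2) the brackets $\{L_\lambda\varphi(b)\}$, $\{L_\lambda\psi(u_1)\}$ follow from the energy-momentum property once one checks $\varphi(b)$, $\psi(u)$ are \emph{primary} — i.e. that the $O(\lambda^2)$ term vanishes; (3) the brackets $\{\varphi(a)_\lambda\varphi(b)\}$, $\{\varphi(a)_\lambda\psi(u_1)\}$, $\{\psi(u)_\lambda L\}$, etc., are each a short explicit computation; (4) the genuinely substantial item is \eqref{20130320:eq2}.

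\textbf{Controlling the terms.} For $\{\varphi(a)_\lambda\varphi(b)\}_{z,\rho}$, Lemma \ref{20130320:lem} forces the result to lie in $\mc W\{1\}\oplus\mb F\lambda$ (the $\lambda^0$ term in $\mc W\{1\}$, the $\lambda^1$ term in $\mc W\{0\}=\mb F$, and no $K$-contribution since $d=1$ and $\Delta_1+\Delta_2-n-2-d=-n-1<0$); hence it must be $\varphi(c)+\gamma\lambda$ for some $c\in\mf g_0^f$, $\gamma\in\mb F$, and applying $\pi$ reduces the identification of $c$ and $\gamma$ to computing $\pi\rho\{\varphi(a)_\lambda\varphi(b)\}_z$, which by Theorem \ref{daniele1}(c) equals $\pi\rho\{a_\lambda\varphi(b)\}_z$ and is read off directly from Table \ref{table1} plus the Leibniz rule, giving $[a,b]+(a|b)\lambda$, i.e. $c=[a,b]$, $\gamma=(a|b)$. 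The same mechanism handles $\{\varphi(a)_\lambda\psi(u_1)\}_{z,\rho}\in\mc W\{\frac32\}$ (pure $\lambda^0$, no central term by weight count), giving $\psi([a,u_1])$. For $\{\psi(u)_\lambda\varphi(b)\}$ and $\{\psi(u)_\lambda L\}$, one uses skew-symmetry \eqref{ii} together with the already-computed brackets and sesquilinearity; the factor $\frac12\partial+\frac32\lambda$ in the $\psi(u),L$ entry is exactly what skew-symmetry of $(\partial+\frac32\lambda)\psi(u)$ produces.

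\textbf{The main obstacle: equation \eqref{20130320:eq2}.} This is where the real work is. By Lemma \ref{20130320:lem} with $\Delta_1=\Delta_2=\frac32$, the $H$-part of $\{\psi(u)_\lambda\psi(u_1)\}$ can contribute to conformal weights $2,1,0$ (coefficients of $\lambda^0,\lambda^1,\lambda^2$) and the $K$-part only to weight $0$ (coefficient of $\lambda^0$), so the answer has the shape $A+B\lambda+C\lambda^2+zD$ with $A\in\mc W\{2\}$, $B\in\mc W\{1\}$, $C,D\in\mb F$. The plan is to compute $\pi\rho\{\psi(u)_\lambda\psi(u_1)\}_z=\pi\rho\{u_\lambda\psi(u_1)\}_z$ using the explicit form \eqref{psi} of $\psi(u_1)$, the Leibniz rules, Table \ref{table1}, and the structural identities \eqref{20130315:eq5}, \eqref{20130315:eq6}, \eqref{20130201:eq3} special to the minimal case (so that $[\mf g_{-\frac12},\mf g_{-\frac12}]\subset\mb Ff$ etc.), together with the completeness relations \eqref{completeness}, \eqref{completeness2} and Lemma \ref{20130315:lem1}. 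This produces, after substantial but routine manipulation, the image in $\mc V(\mf g^f)$ of the right-hand side of \eqref{20130320:eq2}: the quadratic term $\sum_k\phi([u,v^k]^\sharp)\phi([u_1,v_k]^\sharp)$ from products of two $\mf g_{\frac12}$-pieces against the $\frac13$-cubic term of $\psi$, the term $(\partial+2\lambda)\phi([u,[e,u_1]]^\sharp)$ from the linear parts, the $\tilde L$-term (note $\tilde L=\pi^{-1}(f)$, and $[\mf g_{-\frac12},\mf g_{-\frac12}]$ is a multiple of $f$ with coefficient $\frac{\omega_-(u,u_1)}{2(x|x)}$ by \eqref{20130315:eq5}), and the central terms $-\lambda^2\omega_-(u,u_1)+z\omega_-(u,u_1)$ from the $(a|b)\lambda$ and $z(s|[a,b])$ parts of \eqref{lambda} paired through \eqref{omega-dual}. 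Applying $\pi^{-1}$ and invoking that $\pi^{-1}(f)=\tilde L$, $\pi^{-1}(a)=\varphi(a)=\phi(a)$ for $a\in\mf g_0^f$ converts this back to the stated identity in $\mc W$. Finally, the last sentence of the theorem is immediate: the $L$-row of Table \ref{table2} shows $\{L_\lambda\varphi(a)\}=(\partial+\lambda)\varphi(a)$ exactly and $\{L_\lambda\psi(u)\}=(\partial+\frac32\lambda)\psi(u)$ exactly, with no $O(\lambda^2)$ correction and no $z$-dependence, which is precisely the assertion that $L$ is an energy-momentum element for all $z$ and that $\varphi(a)$, $\psi(u)$ are primary of weights $1$ and $\frac32$.
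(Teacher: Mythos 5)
Your proposal is correct and follows essentially the same route as the paper's proof: constrain each bracket by the conformal-weight bookkeeping of Lemma \ref{20130320:lem}, compute its image under $\pi$ (where the quadratic $\mf g_{\frac12}$-terms drop out since $\mf g_{\frac12}\subset\ker\pi$), and pull back via $\pi^{-1}$ using Corollary \ref{20130402:cor1}. The one spot where the weight count alone does not close the argument, namely $\{L_\lambda\varphi(b)\}_{z,\rho}$ whose $\lambda^2$- and $z$-coefficients land in $\mc W\{0\}=\mb F$ rather than in $0$, you correctly flag as requiring a separate vanishing check, which is exactly the extra step the paper supplies (no generator bracket produces $\lambda^2$, and no $z$ appears in brackets involving only $\mf g_0^f$ and $\mf g_{\frac12}$).
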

\begin{proof}
The $\lambda$-bracket $\{L_\lambda L\}_{z,\rho}$ was given by \eqref{virasoro}.
Next, let us prove the formula for the $\lambda$-bracket $\{L_\lambda\varphi(a)\}_{z,\rho}$,
$a\in\mf g_0^f$.
By Theorem \ref{daniele2}(d),
we already know that, for $z=0$, $\varphi(a)$ is an $L$-eigenvector of eigenvalue $1$:
$\{L_\lambda\varphi(a)\}_{z=0,\rho}=(\partial+\lambda)\varphi(a)+O(\lambda^2)$.
Moreover, since $\varphi(a)\in\mc W\{1\}$, $L\in\mc W\{2\}$, and $s=e\in\mf g_1$, 
by Lemma \ref{20130320:lem} we have
$$
\{L_\lambda\varphi(a)\}_{z,\rho}=(\partial+\lambda)\varphi(a)
+L_{(2,H)}\varphi(a)\lambda^2-zL_{(0,K)}\varphi(a)
\,,
$$
where $L_{(2,H)}\varphi(a),\,L_{(0,K)}\varphi(a)\in\mc W\{0\}=\mb F$.
We need to prove that $L_{(2,H)}\varphi(a)=L_{(0,K)}\varphi(a)=0$.
Note that the expression \eqref{phi} of $\varphi(a)$ involves only $a\in\mf g_0^f$
and elements of $\mf g_{\frac12}$.
But it is clear from Table \ref{table1}
that $\lambda$-brackets of elements in $\mf g_0^f$ or of elements in $\mf g_{\frac12}$
with any other element do not involve any $z$.
Therefore, we automatically have that $L_{(0,K)}\varphi(a)=0$.
Furthermore,
we see from Table \ref{table1} that the highest power of $\lambda$
which appears in the $\lambda$-bracket $\rho\{\cdot\,_\lambda\,\cdot\}_z$
of two elements in $\mf g$ is $1$.
Therefore, by the Leibniz rule and sesquilinearity, 
we can get contributions to the power $\lambda^2$
only from terms involving derivatives in the expressions \eqref{virL} of $L$.
Such terms are
$$
-\lambda\rho\{x_\lambda\varphi(a)\}_z
-\frac12\sum_{k\in J_{\frac12}}\rho{\{{v_k}_{\partial+\lambda} \varphi(a)\}_z}_{\to}
(\partial+\lambda)v^k
\,.
$$
On the other hand,
recalling the expression \eqref{phi} of $\varphi(a)$,
it is immediate to check, using Table \ref{table1},
that $\rho\{x_\lambda \varphi(a)\}_z$ and $\rho\{{v_k}_\lambda\varphi(a)\}_z$
are independent of $\lambda$.
Therefore $\lambda^2$ never appears, proving that $L_{(2,H)}\varphi(a)=0$.

We use a similar argument for the formula of the $\lambda$-bracket $\{L_\lambda\psi(u)\}_{z,\rho}$,
for $u\in\mf g_{-\frac12}$.
By Theorem \ref{daniele2}(d),
we already know that, for $z=0$, $\psi(u)$ is an $L$-eigenvector of eigenvalue $\frac32$:
$\{L_\lambda\psi(u)\}_{z=0,\rho}=\big(\partial+\frac32\lambda\big)\psi(u)+O(\lambda^2)$.
Moreover, since $\psi(u)\in\mc W\{\frac32\}$ and $L\in\mc W\{2\}$,
we get by Lemma \ref{20130320:lem} that
the coefficients of $\lambda^2$ and $z$ in $\{L_\lambda\psi(u)\}_{z,\rho}$
lie in $\mc W\{\frac12\}=0$.
Therefore, 
$\{L_\lambda\psi(u)\}_{z,\rho}=\big(\partial+\frac32\lambda\big)\psi(u)$.

Next, let us prove the formula for $\{\varphi(a)_\lambda\varphi(b)\}_{z,\rho}$, for $a,b\in\mf g_0^f$.
It turns out that it is much more convenient to compute, instead,
$\pi\{\varphi(a)_\lambda\varphi(b)\}_{z,\rho}$,
and then apply the inverse map $\pi^{-1}$, using Corollary \ref{20130402:cor1}.
Since $\mf g_{\frac12}\subset\ker(\pi)$, and since $\pi$ is a differential algebra homomorphism,
by the Leibniz rule all quadratic terms in $\mf g_{\frac12}$
in the expression \eqref{phi} of $\varphi(a)$ and $\varphi(b)$
give zero contribution in the computation of $\pi\{\varphi(a)_\lambda\varphi(b)\}_{z,\rho}$.
Therefore, we have
$$
\pi\{\varphi(a)_\lambda\varphi(b)\}_{z,\rho}
=\pi\rho\{a_\lambda b\}_z
=\pi([a,b]+(a|b)\lambda)=[a,b]+(a|b)\lambda\,.
$$
In the second identity we used Table \ref{table1},
while in the last identity we used the definition \eqref{20130402:eq1} of $\pi$ and the fact that 
$[\mf g^f,\mf g^f]\subset\mf g^f$.
Applying $\pi^{-1}$ to the above equation, we get,
according to Corollary \ref{20130402:cor1},
$\{\varphi(a)_\lambda\varphi(b)\}_{z,\rho}=\varphi([a,b])+(a|b)\lambda$,
as stated in Table \ref{table2}.

We use a similar argument to prove the formula for $\{\varphi(a)_\lambda\psi(u)\}_{z,\rho}$, 
for $a\in\mf g_0^f$ and $u\in\mf g_{-\frac12}$.
Again, we start by computing the projection
$\pi\{\varphi(a)_\lambda\psi(u)\}_{z,\rho}$.
Since $\mf g_{\frac12}\subset\ker(\pi)$, by the Leibniz rule we have
$$
\pi\{\varphi(a)_\lambda\psi(u)\}_{z,\rho}
=\pi\Big(
\rho\{a_\lambda u\}_z
+\sum_{k\in J_{\frac12}}\rho\{a_\lambda v^k\}_z[u,v_k]
+(\partial+\lambda)\rho\{a_\lambda [e,u]\}_z\Big)\,.
$$
By Table \ref{table1}, we have
$\rho\{a_\lambda u\}_z=[a,u]\in\mf g_{-\frac12}\subset\mf g^f$,
$\rho\{a_\lambda v^k\}_z=[a,v^k]\in\mf g_{\frac12}\subset\ker(\pi)$,
and $\rho\{a_\lambda [e,u]\}_z=[a,[e,u]]\in\mf g_{\frac12}\subset\ker(\pi)$.
Therefore, 
$\pi\{\varphi(a)_\lambda\psi(u)\}_{z,\rho}=[a,u]$,
and, by Corollary \ref{20130402:cor1},
$\{\varphi(a)_\lambda\psi(u)\}_{z,\rho}=\psi([a,u])$,
as stated in Table \ref{table2}.

We are left to prove equation \eqref{20130320:eq2}.
As before, we start by computing the projection 
$\pi\{\psi(u)_\lambda\psi(u_1)\}_{z,\rho}$.
By the definition \eqref{psi} of $\psi(u)$ and $\psi(u_1)$,
and by the Leibniz rule, we have,
using the fact that $\mf g_{\frac12}\subset\ker(\pi)$,
\begin{equation}\label{20130402:eq2}
\begin{array}{l}
\displaystyle{
\vphantom{\Big(}
\pi\{\psi(u)_\lambda\psi(u_1)\}_{z,\rho}
=
\pi\bigg(
\rho\{u _\lambda u_1\}_z
+\sum_{h,k\in J_{\frac12}}[u_1,v_k]^\sharp\rho{\{{v^h}{} _{\partial+\lambda} v^k\}_z}_{\to} [u,v_h]^\sharp
} \\
\displaystyle{
+\sum_{k\in J_{\frac12}}[u_1,v_k]^\sharp\rho\{u _\lambda v^k\}_z
+\sum_{h\in J_{\frac12}}\rho{\{{v^h}{} _{\partial+\lambda} u_1\}_z}_{\to} [u,v_h]^\sharp
} \\
\displaystyle{
+\sum_{h\in J_{\frac12}}(\partial+\lambda)\rho{\{{v^h}{} _{\partial+\lambda} [e,u_1]\}_z}_{\to} [u,v_h]^\sharp
-\sum_{k\in J_{\frac12}}\lambda[u_1,v_k]^\sharp\rho\{[e,u] _\lambda v^k\}_z
} \\
\displaystyle{
+(\partial+\lambda)\rho\{u _\lambda [e,u_1]\}_z
-\lambda\rho\{[e,u] _\lambda u_1\}_z
-\lambda(\partial+\lambda)\rho\{[e,u] _\lambda [e,u_1]\}_z
\bigg)
}
\end{array}
\end{equation}
Using Table \ref{table1} and the completeness relations \eqref{completeness}
and \eqref{completeness2}, equation \eqref{20130402:eq2}
gives the following
\begin{equation}\label{20130402:eq3}
\begin{array}{l}
\displaystyle{
\vphantom{\Big(}
\pi\{\psi(u)_\lambda\psi(u_1)\}_{z,\rho}
=
\pi\bigg(
\frac{\omega_-(u,u_1)}{2(x|x)}f+2z\omega_-(u,u_1)
-\sum_{k\in J_{\frac12}}[u_1,v_k]^\sharp[u,v^k]^\sharp
} \\
\displaystyle{
+\sum_{k\in J_{\frac12}}[u_1,v_k]^\sharp[u,v^k]
+\sum_{h\in J_{\frac12}}[v^h,u_1][u,v_h]^\sharp
+(\partial+\lambda)[u,[e,u_1]]
-[[e,u],u_1]\lambda
} \\
\displaystyle{
+(u|[e,u_1])\lambda^2
-([e,u]|u_1)\lambda^2
-\omega_+([e,u],[e,u_1])\lambda^2
\bigg)
\,.}
\end{array}
\end{equation}
Recall that $\pi(f)=f$, since $f\in\mf g^f$.
By skewsymmetry of $\omega_+$ and the definition of $\pi$, we have
$$
\pi\Big(\sum_{k\in J_{\frac12}}[u_1,v_k]^\sharp[u,v^k]\Big)
=\pi\Big(\sum_{h\in J_{\frac12}}[v^h,u_1][u,v_h]^\sharp\Big)
=\sum_{k\in J_{\frac12}}[u_1,v_k]^\sharp[u,v^k]^\sharp\,.
$$
Furthermore, we have, 
$$
\pi([u,[e,u_1]])=-\pi([[e,u],u_1])=[u,[e,u_1]]^\sharp
\,,
$$
and, by the definition \eqref{20130201:eq5} of $\omega_+$ and Lemma \ref{20130315:lem2}(a),
we also have
$$
-\omega_+([e,u],[e,u_1])=-(u|[e,u_1])=([e,u]|u_1)=\omega_-(u,u_1)\,.
$$
Therefore, equation \eqref{20130402:eq3} gives
\begin{equation}\label{20130402:eq4}
\begin{array}{l}
\displaystyle{
\vphantom{\Big(}
\pi\{\psi(u)_\lambda\psi(u_1)\}_{z,\rho}
=
\frac{\omega_-(u,u_1)}{2(x|x)}f+2z\omega_-(u,u_1)
+\sum_{k\in J_{\frac12}}[u_1,v_k]^\sharp[u,v^k]^\sharp
} \\
\displaystyle{
+(\partial+2\lambda)[u,[e,u_1]]^\sharp
-\omega_-(u,u_1)\lambda^2
\,.}
\end{array}
\end{equation}
Applying the bijective map $\pi^{-1}:\,\mc V(\mf g^f)\to\mc W$ to both sides of equation 
\eqref{20130402:eq4} and using Corollary \ref{20130402:cor1}, we get
equation \eqref{20130320:eq2}.
\end{proof}
\begin{remark}\label{20130321:rem}
The $\lambda$-brackets for the $\mc W$-algebra associated to a minimal nilpotent element
were computed via the cohomological quantum (resp. classical) Hamiltonian reduction
in \cite{KW04} (resp. \cite{Suh13}).
\end{remark}

\section{Classical \texorpdfstring{$\mc W$}{W}-algebras for short nilpotent elements}
\label{sec:3.5}

\subsection{Setup and preliminary computations}

By definition, a nilpotent element $f\in\mf g$ is called \emph{short} if the $\ad x$-eigenvalues
are $-1,0,1$, namely the $\ad x$-eigenspace decomposition \eqref{dec} is
$$
\mf g=\mf g_{-1}\oplus\mf g_{0}\oplus\mf g_{1}\,.
$$

According to Theorem \ref{daniele2},
a set of generators for $\mc W$ is in bijective correspondence with a basis
of $\mf g^f=\mf g_{-1}\oplus\mf g^f_0$.
Note that, by representation theory of $\mf{sl}_2$, we have $\mf g_0^f=\mf g_0^e$,
$[f,\mf g_1]=[e,\mf g_{-1}]=(\mf g_0^f)^\perp$.
The subspace $\mf g_0^f$, being the centralizer of $\mf{sl}_2$,
is a reductive subalgebra of the simple Lie algebra $\mf g$,
hence the bilinear form $(\cdot\,|\,\cdot)$ restricts to a
non-degenerate symmetric invariant  
bilinear form on $\mf g_0^f$, and $[f,\mf g_1]$ is its orthocomplement.
Hence,
we have the direct sum decomposition
\begin{equation}\label{20130322:eq4}
\mf g_0=\mf g_0^f\oplus[f,\mf g_1]\,,
\end{equation}
and we denote by ${}^\sharp:\,\mf g_0\to\mf g_0^f$
and ${}^\perp:\,\mf g_0\to[f,\mf g_1]$ the corresponding orthogonal projections.
In fact, the decomposition \eqref{20130322:eq4}
is a $\mb Z/2\mb Z$-grading of the Lie algebra $\mf g_0$,
namely we have the following
\begin{lemma}\phantomsection\label{20130417:lem1}
\begin{enumerate}[(a)]
\item
$[\mf g_0^f,\mf g_0^f]\subset\mf g_0^f$,
\item
$[\mf g_0^f,[f,\mf g_1]]\subset[f,\mf g_1]$,
\item
$[[f,\mf g_1],[f,\mf g_1]]\subset\mf g_0^f$.
\end{enumerate}
\end{lemma}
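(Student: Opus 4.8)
The plan is to prove the three inclusions of Lemma \ref{20130417:lem1} using only the $\mf{sl}_2$-representation theory of $\mf g$ together with the invariance of the bilinear form $(\cdot\,|\,\cdot)$. Recall that $\mf g_0^f=\mf g_0^e=\mf g_0^x$ is the centralizer of the whole $\mf{sl}_2$-triple $\{e,2x,f\}$ in $\mf g_0$; in particular $[e,\cdot]$, $[f,\cdot]$ and $[x,\cdot]$ all act trivially on $\mf g_0^f$, which is why $\mf g_0^f$ is a subalgebra of $\mf g$.

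Part (a) is immediate: for $a,b\in\mf g_0^f$ we have $[f,[a,b]]=[[f,a],b]+[a,[f,b]]=0$ by the Jacobi identity, since $[f,a]=[f,b]=0$; similarly $[e,[a,b]]=0$. Hence $[a,b]\in\mf g_0^e\cap\mf g_0^f=\mf g_0^f$. Part (b) is similar in spirit but uses one more fact: for $a\in\mf g_0^f$ and $[f,v]\in[f,\mf g_1]$ (with $v\in\mf g_1$) we compute $[a,[f,v]]=[f,[a,v]]$ by Jacobi (again using $[a,f]=0$), and $[a,v]\in\mf g_1$ because $\ad a$ preserves the $\ad x$-grading (as $a\in\mf g_0$); therefore $[a,[f,v]]\in[f,\mf g_1]$. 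Note this already reproves, from the Lie-algebra side, that $[f,\mf g_1]$ is a $\mf g_0^f$-submodule, consistent with $[f,\mf g_1]=(\mf g_0^f)^\perp$ being invariant.

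Part (c) is the one requiring a genuine argument. Given $v,w\in\mf g_1$, the bracket $[[f,v],[f,w]]$ lies in $\mf g_0$, so by the decomposition \eqref{20130322:eq4} it suffices to show its ${}^\perp$-component vanishes, i.e. that it is orthogonal to $[f,\mf g_1]$. Equivalently, since $(\cdot\,|\,\cdot)$ is non-degenerate on $\mf g_0$ and pairs $\mf g_0^f$ with $[f,\mf g_1]$ trivially, I must check $\big([[f,v],[f,w]]\,\big|\,[f,u]\big)=0$ for all $u\in\mf g_1$. The natural route is to move brackets around using invariance of the form until everything is reduced to pairings between $\mf g_1$ and $\mf g_{-1}$ (the only nonzero off-diagonal pairings, since $\mf g_{\pm 2}=0$ because $f$ is short). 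The key structural input is that for a short nilpotent $\mf g_2=\mf g_{-2}=0$, so $[e,\mf g_1]=0$ and $[f,\mf g_{-1}]=0$; this should force the relevant terms to cancel. Concretely, $\big([[f,v],[f,w]]\,\big|\,[f,u]\big)=\big([f,v]\,\big|\,[[f,w],[f,u]]\big)$ and one manipulates the inner bracket; alternatively, apply $\ad x$: since $[[f,v],[f,w]]\in\mf g_0$ and $[f,u]\in\mf g_0$, writing the target element as $[x,\text{something}]$ is not available, so instead I would use that $[f,\mf g_1]=[e,\mf g_{-1}]$ and rewrite $[f,u]=[e,u']$ for a suitable $u'\in\mf g_{-1}$, then Jacobi-expand $[[f,v],[e,u']]$ inside the form.

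The main obstacle will be organizing the bracket manipulations in part (c) so that the $\mf g_{\pm 2}=0$ vanishing is used cleanly; there is a small risk of circularity if one invokes $[f,\mf g_1]=(\mf g_0^f)^\perp$ carelessly, so I would take care to derive the orthocomplement statement purely from $\mf{sl}_2$-theory first and then use it. Another viable and perhaps slicker approach to part (c): observe that $[f,\mf g_1]$ together with $\mf g_0^f$ being a $\mb Z/2\mb Z$-grading is equivalent to the statement that the involution $\sigma$ of $\mf g_0$ which is $+1$ on $\mf g_0^f$ and $-1$ on $[f,\mf g_1]$ is a Lie algebra automorphism; and this $\sigma$ can be realized as conjugation by $\exp(\pi i\,\ad x)|_{\mf g_0}$ up to scalar — more precisely, $\mf g_0^f$ and $[f,\mf g_1]$ are the $\pm$-eigenspaces of a Chevalley-type involution coming from the $\mf{sl}_2$-action — from which (a), (b), (c) all follow at once. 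I would present whichever of these two arguments is shorter; the direct Jacobi/invariance computation is elementary and self-contained, so I would lead with that.
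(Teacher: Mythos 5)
Parts (a) and (b) of your argument are correct and coincide with what the paper does (it dismisses them as immediate from Jacobi and invariance). The problem is part (c), where your proposal stops at a plan and the concrete steps you do write down don't close the argument. The manipulation $\big([[f,v],[f,w]]\,\big|\,[f,u]\big)=\big([f,v]\,\big|\,[[f,w],[f,u]]\big)$ only cyclically permutes the quantity you are trying to kill, so it gives no information by itself; and the ``slicker'' alternative is wrong as stated, because $\exp(\pi i\,\ad x)$ acts as $e^{\pi i j}$ on $\mf g_j$ and hence restricts to the \emph{identity} on $\mf g_0$ — it cannot separate $\mf g_0^f$ from $[f,\mf g_1]$. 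So as written there is a genuine gap in (c).

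The missing step is a two-line computation, and it makes the whole orthogonality detour unnecessary. You already note that $\mf g_0^f=\mf g_0^e$, so it suffices to show $\ad e$ kills $[[f,v],[f,w]]$. Since $f$ is short, $\mf g_2=0$, hence $[e,v]=0$ and $[e,[f,v]]=[[e,f],v]=2[x,v]=2v$ for $v\in\mf g_1$. Therefore
$[e,[[f,v],[f,w]]]=[[e,[f,v]],[f,w]]+[[f,v],[e,[f,w]]]=2[v,[f,w]]+2[[f,v],w]=2[f,[v,w]]=0$,
the last equality because $[v,w]\in\mf g_2=0$. This is exactly the paper's proof. Your second suggested route (writing $[f,u]=[e,u']$ with $u'\in\mf g_{-1}$ and integrating by parts in the form) does work, but only because it reduces to this same computation of $[e,[[f,v],[f,w]]]$ — at which point the orthogonality framing buys you nothing over concluding directly that the bracket lies in $\mf g_0^e=\mf g_0^f$.
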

\begin{proof}
Parts (a) and (b) are immediate, by invariance of the bilinear form and
by the Jacobi identity. For part (c) we have, for $v,v_1\in\mf g_1$,
$$
\begin{array}{l}
\displaystyle{
\vphantom{\Big(}
[e,[[f,v],[f,v_1]]]=[[e,[f,v]],[f,v_1]]+[[f,v],[e,[f,v_1]]]
} \\
\displaystyle{
\vphantom{\Big(}
=2[v,[f,v_1]]+2[[f,v],v_1]=0
\,.}
\end{array}
$$
Hence, $[[f,v],[f,v_1]]\in\mf g_0^e=\mf g_0^f$.
\end{proof}

Recall that we have a commutative Jordan product on $\mf g_{-1}$ given by
\eqref{20130320:eq1}.
%
We fix dual bases $\{a_i\}_{i\in J_0^f}$ and $\{a^i\}_{i\in J_0^f}$ of $\mf g_0^f$:
$(a_i|a^j)=\delta_{ij}$.
They are equivalently defined by the following completeness relation:
\begin{equation}\label{complete1}
\sum_{i\in J_0^f}(a|a^i)a_i=a^\sharp
\quad\text{ for all } a\in\mf g_0\,.
\end{equation}
Let also $\{u_k\}_{k\in J_1}$ be a basis of $\mf g_{-1}$,
and let $\{u^k\}_{k\in J_1}$ be the dual (with respect to $(\cdot\,|\,\cdot)$)
basis of $\mf g_1$.
Then, it is easy to check that
$\big\{-\frac12[e,u_k]\big\}_{k\in J_1}$, $\big\{[f,u^k]\big\}_{k\in J_1}$,
are dual bases of $[f,\mf g_1]\subset\mf g_0$.
In other words, we have the completeness relations:
\begin{equation}\label{complete2}
\begin{array}{l}
\displaystyle{
\sum_{k\in J_1}(u|u^k)u_k=u
\quad\text{ for all } u\in\mf g_{-1}
\,\,,\,\,\,\,
\sum_{k\in J_1}(v|u_k)u^k=v
\quad\text{ for all } v\in\mf g_{1}
\,,} \\
\displaystyle{
-\frac12\sum_{k\in J_1}([e,u_k]|a)[f,u^k]
=-\frac12\sum_{k\in J_1}([f,u^k]|a)[e,u_k]=a^\perp
\quad\text{ for all } a\in\mf g_0
\,.}
\end{array}
\end{equation}

Recall that, by assumption, $s\in\mf g_1$.
We can compute, using the definitions \eqref{lambda} and \eqref{rho}, 
all $\lambda$-brackets $\rho\{x_\lambda y\}_z$ for arbitrary $x,y\in\mf g$.
The results are given in Table \ref{table3}:

\begin {table}[H]
\caption{$\rho\{x_\lambda y\}_z$ for $x,y\in\mf g$} \label{table3} 
\begin{center}
\begin{tabular}{c||c|c|c} 
\phantom{$\Bigg($} 
$\rho\{\cdot\,_\lambda\,\cdot\}_z$
& $u_1\in\mf g_{-1}$ & $b\in\mf g_{0}$ & $v_1\in\mf g_1$ \\
\hline
\hline \phantom{$\Bigg($} 
$u\in\mf g_{-1}$ & 0 & $[u,b]+z(s|[u,b])$ & $[u,v_1]+(u|v_1)\lambda$ \\
\hline \phantom{$\Bigg($}
$a\in\mf g_0$ & $[a,u_1]+z(s|[a,u_1])$ & $[a,b]+(a|b)\lambda$ & $(f|[a,v_1])$ \\
\hline \phantom{$\Bigg($}
$v\in\mf g_1$ & $[v,u_1]+(v|u_1)\lambda$ & $(f|[v,b])$ & 0 
\end{tabular}
\end{center}
\end{table}

\subsection{Generators of \texorpdfstring{$\mc W$}{W} for short nilpotent
\texorpdfstring{$f$}{f}}

\begin{theorem}\label{20120727:thm1b}
Let $\mc W$ be the $\mc W$-algebra associated to a short nilpotent element $f\in\mf g$.
As a differential algebra, $\mc W$ is the algebra of differential polynomials 
with the following generators:
$a_i$, for $i\in J_0^f$, and $\psi(u_k)$, for $k\in J_1$,
where $\psi:\,\mf g_{-1}\to\mc W\{2\}$ is the following injective map
\begin{equation}\label{psiu}
\psi(u)
=u
-\frac12\sum_{k\in J_1}[u,u^k][e,u_k]
-\frac18\sum_{k\in J_1}[f,u^k][e,u\circ u_k]
+\frac12\partial[e,u]
\,,
\end{equation}
where we are using the notation \eqref{20130320:eq1}.
The subspace of elements of conformal weight $1$ is $\mc W\{1\}=\mf g_0^f$,
while the subspace of elements of conformal weight $2$ is 
$$
\mc W\{2\}=\psi(\mf g_{-1})\oplus\partial\mf g_0^f
\oplus S^2\mf g_0^f\,.
$$
\end{theorem}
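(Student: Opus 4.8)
The plan is to verify directly that the image of the map $\psi$ defined by \eqref{psiu} lies in $\mc W$, and then invoke Theorem \ref{daniele2} to conclude the statement about generators and the conformal weight decomposition. By Theorem \ref{daniele2}(a)--(b), since $\{u_k\}_{k\in J_1}$ is a basis of $\mf g_{-1}$ and (together with a basis of $\mf g_0^f$) forms a basis of $\mf g^f$, it suffices to produce, for each basis vector, an element of $\mc W$ of the prescribed leading form. For $a\in\mf g_0^f\subset\mf g^f_0$ (i.e. $\Delta=1$) the element $a$ itself is already in $\mc W$: indeed, by Table \ref{table3}, $\rho\{v_\lambda a\}_z=(f|[v,a])$ for $v\in\mf g_1$, and since $a\in\mf g_0^e=\mf g_0^f$ we have $[v,a]=-[a,v]$ with $[\mf g_0^f,\mf g_1]\subset\mf g_1$, while $(f|\mf g_1)=0$; there is no $\mf g_{3/2}$ or higher to check. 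So $\varphi=\mathrm{id}$ on $\mf g_0^f$ and $\mc W\{1\}=\mf g_0^f$ follows from Theorem \ref{daniele2}(c).

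The substantive part is to check that $\psi(u)\in\mc W$ for $u\in\mf g_{-1}$, i.e. that $\rho\{a_\lambda\psi(u)\}_z=0$ for all $a\in\mf g_0\oplus\mf g_1=\mf g_{\geq 0\text{, actually }\geq 1/2}=\mf g_{\geq 1}$; wait — here the grading has depth $1$, so $\mf g_{\geq\frac12}=\mf g_{\geq 1}=\mf g_1$, and $\mc W=\{g\,|\,v\,^\rho_\lambda\,g=0\ \forall v\in\mf g_1\}$. Thus I only need to verify
\begin{equation*}
\rho\{v_\lambda\psi(u)\}_z=0\quad\text{for all }v\in\mf g_1,\ u\in\mf g_{-1}.
\end{equation*}
I would expand $\rho\{v_\lambda\psi(u)\}_z$ using the left Leibniz rule and sesquilinearity on each of the four terms of \eqref{psiu}, substituting the entries of Table \ref{table3}: the only nonzero elementary brackets involving $v\in\mf g_1$ are $\rho\{v_\lambda u_1\}_z=[v,u_1]+(v|u_1)\lambda$ for $u_1\in\mf g_{-1}$ and $\rho\{v_\lambda b\}_z=(f|[v,b])$ for $b\in\mf g_0$, with $\rho\{v_\lambda v_1\}_z=0$. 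After collecting, the claim $\rho\{v_\lambda\psi(u)\}_z=0$ should reduce to a few identities in $\mf g$ that follow from: the completeness relations \eqref{complete1}, \eqref{complete2}; the $\mb Z/2$-grading of $\mf g_0$ from Lemma \ref{20130417:lem1}; the definition \eqref{20130320:eq1} of $\circ$ together with $[e,u]\circ$-type manipulations; and the normalizations $[e,[f,v]]=v$, $[[e,f],u]=-2u$ coming from the $\mf{sl}_2$ representation theory. The coefficients $-\frac12$, $-\frac18$, $\frac12$ in \eqref{psiu} are precisely what makes the $\lambda^0$, $\lambda^1$, and $\lambda^2$ terms vanish separately; organizing the computation by powers of $\lambda$ keeps it tractable. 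Injectivity of $\psi$ is immediate since $\psi(u)=u+(\text{terms of }\ad x\text{-degree }\geq 0\text{ or higher conformal structure})$, so $u\mapsto\psi(u)$ has the identity as its ``leading term'' in the sense of Theorem \ref{daniele2}(a).

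Granting $\psi(\mf g_{-1})\subset\mc W$, Theorem \ref{daniele2}(b) immediately gives that $\mc W$ is the algebra of differential polynomials in the $a_i$ ($i\in J_0^f$) and the $\psi(u_k)$ ($k\in J_1$), since $\{a_i\}\cup\{u_k\}$ is a basis of $\mf g^f$ consisting of $\ad x$-eigenvectors (with $a_i$ of conformal weight $1$ and $\psi(u_k)$ of conformal weight $2$, as $u_k\in\mf g_{-1}=\mf g_{1-2}$). For the description of $\mc W\{2\}$: by the grading $\mc W=\mb F\oplus\mc W\{1\}\oplus\mc W\{2\}\oplus\cdots$ (Theorem \ref{daniele2}(b)) and $\mc W\{1\}=\mf g_0^f$, the weight-$2$ component of the polynomial algebra is spanned by the degree-one monomials of weight $2$, namely $\psi(u_k)$ and $\partial a_i$, together with the degree-two monomials of weight $2$, namely $a_ia_j$; these span $\psi(\mf g_{-1})$, $\partial\mf g_0^f$, and $S^2\mf g_0^f$ respectively, and the sum is direct because $\psi(u_k)$ has nonzero $\mf g_{-1}$-component while $\partial a_i$ and $a_ia_j$ lie in $\mc V(\mf g_0^f\oplus\mf g_1)$-pieces that are linearly independent by the PBW-type basis of $\mc V(\mf g_{\leq\frac12})=S(\mb F[\partial]\mf g_{\leq\frac12})$. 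The main obstacle is purely computational: carrying out the four-term expansion of $\rho\{v_\lambda\psi(u)\}_z$ and matching it against the right combination of completeness relations and Jordan-product identities so that every power of $\lambda$ cancels; this is where the specific coefficients of \eqref{psiu} are forced, and it is the only place real work is needed.
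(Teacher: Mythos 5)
Your overall strategy coincides with the paper's: reduce everything to Theorem \ref{daniele2} and verify the single condition $\rho\{v_\lambda\psi(u)\}_z=0$ for $v\in\mf g_1$, $u\in\mf g_{-1}$; your treatment of $\mc W\{1\}$, of the decomposition of $\mc W\{2\}$, and of injectivity is in order. However, the one step that carries the entire content of the theorem --- the verification that the formula \eqref{psiu}, with its coefficients $-\frac12$, $-\frac18$, $\frac12$, actually satisfies $\rho\{v_\lambda\psi(u)\}_z=0$ --- is not carried out; you only assert that it ``should reduce to a few identities.'' This is precisely where the proof lives: after expanding via Table \ref{table3} one must show that the $\lambda$-linear part vanishes because $(f|[v,[e,u]])=-2(v|u)$, and that the $\lambda$-free part, rewritten via invariance of the form and the completeness relations \eqref{complete2} as $[v,u]-\tfrac12[e,[[f,v],u]]-\tfrac12[u,[[f,v],e]]-\tfrac18[e,u\circ[[f,v],f]]-\tfrac18[f,[[[f,v],e],[e,u]]]$, vanishes by four separate Jacobi-identity computations. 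Without this, the coefficients in \eqref{psiu} are unverified and the theorem is not proved.

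Two smaller points suggest the computation was not actually attempted. First, your justification that $a\in\mc W$ for $a\in\mf g_0^f$ rests on ``$(f|\mf g_1)=0$,'' which is false: the form pairs $\mf g_{-1}$ with $\mf g_1$ non-degenerately (e.g.\ $(f|e)=2(x|x)\neq0$). The correct argument is $(f|[v,a])=([f,v]|a)=0$ because $[f,\mf g_1]$ is the orthocomplement of $\mf g_0^f$ in $\mf g_0$; alternatively, $\mc W\{1\}=\mf g_0^f$ is immediate from Theorem \ref{daniele2}(c) since $\mf g_{\frac12}=0$, which is how the paper handles it. Second, there are no $\lambda^2$ terms in $\rho\{v_\lambda\psi(u)\}_z$: the only $\lambda$-dependence comes from $\rho\{v_\lambda u_1\}_z$ and from the factor $(\partial+\lambda)$ acting on the constant $(f|[v,[e,u]])$, so the expansion is at most linear in $\lambda$, and your claim that the three coefficients separately kill the $\lambda^0$, $\lambda^1$ and $\lambda^2$ parts does not describe the actual cancellation pattern.
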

\begin{proof}
Since $\mf g_{\frac12}=0$, it is clear from Theorem \ref{daniele2} that $\mc W\{1\}=\mf g_0^f$.
Hence, 
according to Theorem \ref{daniele2}, 
we only need to prove that all the elements $\psi(u),\,u\in\mf g_{-1}$, lie in $\mc W$.
In other words, recalling the definition \eqref{20120511:eq2} of $\mc W$, we need to prove
that, for every $u\in\mf g_{-1}$ and $v\in\mf g_1$, we have
$$
\rho\{v_\lambda \psi(u)\}_z=0\,.
$$
%
By Table \ref{table3} we have
\begin{equation}\label{20120730:eq1}
\begin{array}{l}
\displaystyle{
\rho\{v_\lambda \psi(u)\}_z
=
\rho\{v_\lambda u\}_z
+\frac12(\partial+\lambda)\rho\{v_\lambda [e,u]\}_z
-\frac12\sum_{k\in J_1}\rho\{v_\lambda [u,u^k]\}_z[e,u_k]
} \\
\displaystyle{
-\frac12\sum_{k\in J_1}\rho\{v_\lambda [e,u_k]\}_z[u,u^hk]
-\frac18\sum_{k\in J_1}
\rho\{v_\lambda [f,u^k]\}_z[e,u\circ u_k]
} \\
\displaystyle{
-\frac18\sum_{k\in J_1}
\rho\{v_\lambda [e,u\circ u_k]\}_z[f,u^k]
} \\
\displaystyle{
=
[v,u]+(v|u)\lambda
+\frac12\lambda (f|[v,[e,u]])
-\frac12\sum_{k\in J_1}(f|[v,[u,u^k]])[e,u_k]
} \\
\displaystyle{
-\frac12\sum_{k\in J_1}(f|[v,[e,u_k]])[u,u^k]
-\frac18\sum_{k\in J_1}
(f|[v,[f,u^k]])
[e,u\circ u_k]
} \\
\displaystyle{
-\frac18\sum_{k\in J_1}
(f|[v,[e,u\circ u_k]])
[f,u^k]
\,.}
\end{array}
\end{equation}
By invariance of $(\cdot\,|\,\cdot)$ and representation theory of $\mf{sl}_2$, we have
$(f|[v,[e,u]])=-2(v|u)$.
Hence, the linear terms in $\lambda$ in the RHS of \eqref{20120730:eq1} vanish.
Moreover, by invariance of $(\cdot\,|\,\cdot)$, by the Jacobi identity, 
and the completeness relations \eqref{complete2},
we can rewrite the RHS of \eqref{20120730:eq1} as
\begin{equation}\label{20130322:eq2}
\begin{array}{l}
\displaystyle{
[v,u]
-\frac12[e,[[f,v],u]]
-\frac12[u,[[f,v],e]]
} \\
\displaystyle{
-\frac18
[e,u\circ [[f,v],f]]
-\frac18
[f,[[[f,v],e],[e,u]]]
\,.}
\end{array}
\end{equation}
By the Jacobi identity, we have
$$
\begin{array}{l}
[e,[[f,v],u]]
=2[v,u]+[[f,v],[e,u]]
\,, \\
{[u,[[f,v],e]]}
=-2[u,v]
\,, \\
{[e,u\circ [[f,v],f]]}
=2[[e,u],[f,v]]+4[u,v]
\,,\\
{[f,[[[f,v],e],[e,u]]]}
=2[[e,u],[f,v]]-4[v,u]
\,.
\end{array}
$$
Hence, \eqref{20130322:eq2} is equal to $0$.
\end{proof}

As in the Section \ref{sec:3.2}, 
we denote by $\pi:\,\mc V(\mf g_{\leq\frac12})\to\mc V(\mf g^f)$
the differential algebra homomorphism induced by the quotient map 
$\mf g_{\leq\frac12}\to\mf g^f$ defined by \eqref{20130402:eq1}.
\begin{corollary}\label{20130402:cor2}
The quotient map 
$\pi:\,\mc V(\mf g_{\leq\frac12})\twoheadrightarrow\mc V(\mf g^f)$
restricts to a differential algebra isomorphism
$\pi:\,\mc W\stackrel{\sim}{\longrightarrow}\mc V(\mf g^f)$,
and the inverse map $\pi^{-1}:\,\mc V(\mf g^f)\stackrel{\sim}{\longrightarrow}\mc W$
is defined, on generators, by 
$$
\pi^{-1}(a)=a
\,\,\,\, \text{ for } a\in\mf g_0^f
\,\,,\,\,\,\,
\pi^{-1}(u)=\psi(u)
\,\,\,\, \text{ for } u\in\mf g_{-1}
\,.
$$
\end{corollary}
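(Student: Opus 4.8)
The plan is to deduce Corollary \ref{20130402:cor2} from Theorem \ref{20120727:thm1b} and the general structure result for $\mc W$, exactly in parallel with the proof of Corollary \ref{20130402:cor1} in the minimal case. First I would recall that by Theorem \ref{20120727:thm1b} the differential algebra $\mc W$ is freely generated by the elements $a_i$ ($i\in J_0^f$) and $\psi(u_k)$ ($k\in J_1$), while $\mc V(\mf g^f)$ is freely generated by the $a_i$ and the $u_k$, since $\mf g^f=\mf g_0^f\oplus\mf g_{-1}$. So it suffices to show that the differential algebra homomorphism $\pi:\,\mc V(\mf g_{\leq\frac12})\to\mc V(\mf g^f)$, restricted to $\mc W$, sends the chosen generating set of $\mc W$ bijectively onto the chosen generating set of $\mc V(\mf g^f)$; being a homomorphism of algebras of differential polynomials sending free generators to free generators, it is then an isomorphism, and its inverse is the unique differential algebra homomorphism determined on generators by $a\mapsto a$ and $u\mapsto\psi(u)$.

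The key computation is therefore $\pi(a_i)=a_i$ and $\pi(\psi(u))=u$. The first is immediate since $a_i\in\mf g_0^f\subset\mf g^f$ and $\pi$ is the identity on $\mf g^f$. For the second, I would inspect the explicit formula \eqref{psiu} for $\psi(u)$: apart from the leading term $u$, every summand is a product of at least two elements of $\mf g_{\leq\frac12}$, and in fact each such summand contains a factor lying in $[e,\mf g_{\leq -\frac12}]=\ker(\pi\big|_{\mf g_{\leq\frac12}})$ — namely the factors $[e,u_k]$, $[e,u\circ u_k]$, and $\partial[e,u]$ are all in $\mb F[\partial]$ applied to $[e,\mf g_{-1}]\subset[e,\mf g_{\leq-\frac12}]$, hence killed by $\pi$. (Here one uses that, in the short case, $[f,\mf g_1]$ — which contains $[u,u^k]$ and $[f,u^k]$ — together with $[e,\mf g_{-1}]$ spans $(\mf g_0^f)^\perp$, but more simply: since $\pi$ is a \emph{differential algebra} homomorphism it annihilates any product one of whose factors lies in $[e,\mf g_{\leq-\frac12}]$.) Therefore $\pi(\psi(u))=\pi(u)=u$, as $u\in\mf g_{-1}\subset\mf g^f$.

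Putting these together: $\pi$ maps $\mc W$ into $\mc V(\mf g^f)$, sends the free generating set $\{a_i\}\cup\{\psi(u_k)\}$ of $\mc W$ to the free generating set $\{a_i\}\cup\{u_k\}$ of $\mc V(\mf g^f)$, and this correspondence is bijective; hence $\pi\big|_{\mc W}$ is a differential algebra isomorphism. Its inverse is characterized on generators by $a_i\mapsto a_i$, $u_k\mapsto\psi(u_k)$, equivalently $\pi^{-1}(a)=a$ for $a\in\mf g_0^f$ and $\pi^{-1}(u)=\psi(u)$ for $u\in\mf g_{-1}$, which is the asserted formula. I expect no real obstacle here; the only point requiring a moment's care is the verification that $\pi$ kills every non-leading term of \eqref{psiu}, which follows formally from $\pi$ being a differential algebra homomorphism vanishing on $[e,\mf g_{\leq-\frac12}]$ together with the observation that each such term carries a factor from $\mb F[\partial][e,\mf g_{-1}]$.
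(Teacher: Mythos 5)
Your proof is correct and follows essentially the same route as the paper's (which simply refers back to the proof of Corollary \ref{20130402:cor1}): one checks that $\pi$ fixes the generators $a\in\mf g_0^f$ and annihilates every non-leading term of \eqref{psiu}, then invokes the free generation statements of Theorem \ref{20120727:thm1b}. The only blemish is the parenthetical claim that $[u,u^k]$ lies in $[f,\mf g_1]$, which is false in general, but you correctly discard it in favour of the simpler and sufficient observation that each correction term already carries a factor from $[e,\mf g_{-1}]=\ker\big(\pi|_{\mf g_{\leq 0}}\big)$.
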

\begin{proof}
The same as for Corollary \ref{20130402:cor1}.
\end{proof}

\subsection{\texorpdfstring{$\lambda$}{lambda}-bracket in \texorpdfstring{$\mc W$}{W}
for short nilpotent \texorpdfstring{$f$}{f}}

\begin{theorem}\label{prop:short}
The multiplication table for $\mc W$ in the case of a short nilpotent element $f$ is 
given by Table \ref{table4} ($a,b\in\mf g_0^f$, $u,u_1\in\mf g_{-1}$):
%
\begin {table}[h]
\caption{$\lambda$-brackets among generators of $\mc W$ for short nilpotent $f$} \label{table4}
\begin{center}
\begin{tabular}{c||c|c}
\phantom{$\Bigg($} 
$\{\cdot\,_\lambda\,\cdot\}_{z,\rho}$
& $b$ & $\psi(u_1)$ \\
\hline
\hline \phantom{$\Bigg($} 
$a$ & $[a,b]+(a|b)\lambda$ & $\psi([a,u_1])+z(s|[a,u_1])$  \\
\hline \phantom{$\Bigg($}
$\psi(u)$ & $\psi([u,b])+z(s|[u,b])$ & eq.\eqref{20130320:eq2b}
\end{tabular}
\end{center}
\end{table}

where the $\lambda$-bracket of $\psi(u)$ and $\psi(u_1)$ is
\begin{equation}\label{20130320:eq2b}
\begin{array}{l}
\displaystyle{
\{\psi(u)_\lambda\psi(u_1)\}_{z,\rho}
=\frac12\sum_{k\in J_1}\psi(u\circ u_k)[u_1,u^k]^\sharp
-\frac12\sum_{k\in J_1}\psi(u_1\circ u_k)[u,u^k]^\sharp
}\\
\displaystyle{
+\frac14\sum_{h,k\in J_1}[[e,u_h],[e,u_k]]
[u,u^h]^\sharp[u_1,u^k]^\sharp
-\frac12(\partial+2\lambda)\psi(u\circ u_1)
}\\
\displaystyle{
+\frac14(\partial+2\lambda)\sum_{k\in J_1}[[e,u],[e,u_k]][u_1,u^k]^\sharp
+\frac14\sum_{k\in J_1} [[e,u_1],[e,u_k]] (\partial+\lambda) [u,u^k]^\sharp
}\\
\displaystyle{
-\frac14\left(3\lambda^2+3\lambda\partial+\partial^2\right)[[e,u],[e,u_1]]
+\frac14(e|u\circ u_1)\lambda^3
}\\
\displaystyle{
+\frac12z\left([[e,u],[s,u_1]]^\sharp-[[e,u_1],[s,u]]^\sharp\right)
-(s|u\circ u_1)z\lambda
\,.
}
\end{array}
\end{equation}
The Virasoro element \eqref{virL} can be expressed in terms of the generators of $\mc W$
as follows:
\begin{equation}\label{20130322:eq3}
L=\psi(f)+\frac12\sum_{i\in J_0^f}a_ia^i\,,
\end{equation}
and we have the following $\lambda$-brackets of $L$ with the generators of $\mc W$
($a\in\mf g_0^f$, $u\in\mf g_{-1}$):
\begin{equation}\label{20130322:eq9}
\begin{array}{l}
\displaystyle{
\vphantom{\Big)}
\{L_\lambda a\}_{z,\rho}=(\partial+\lambda)a
\,,}\\
\displaystyle{
\vphantom{\Big)}
\{L_\lambda \psi(u)\}_{z,\rho}=(\partial+2\lambda)\psi(u)-\frac12(e|u)\lambda^3+z(s|u)\lambda
\,.}
\end{array}
\end{equation}
\end{theorem}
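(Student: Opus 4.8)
The plan is to compute all the $\lambda$-brackets appearing in Table~\ref{table4} by the same device already used in the minimal case: pass to the isomorphic copy $\mc V(\mf g^f)$ via the projection $\pi$ of Corollary~\ref{20130402:cor2}, do the (easier) computation there using Table~\ref{table3} and the Master Formula, and then pull back by $\pi^{-1}$. Since $\pi$ kills nothing in weight $\leq1$ beyond what is already in $[f,\mf g_1]$, and since the generators $a\in\mf g_0^f$ are honest elements of $\mf g$ whose $\lambda$-brackets are read directly off Table~\ref{table3}, the entries in the first row and first column of Table~\ref{table4} are almost immediate: $\{a_\lambda b\}_{z,\rho}=[a,b]+(a|b)\lambda$ because $[\mf g_0^f,\mf g_0^f]\subset\mf g_0^f$ (Lemma~\ref{20130417:lem1}(a)), and $\{a_\lambda\psi(u_1)\}_{z,\rho}$ is obtained by applying the left Leibniz rule to \eqref{psiu}, noting that the bracket of $a$ with the correction terms projects into $\mf g_0^f$-combinations that reassemble into $\psi([a,u_1])$, with the only surviving $z$-term being $z(s|[a,u_1])$ coming from $\rho\{a_\lambda u_1\}_z$ in Table~\ref{table3}.

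Next I would handle the Virasoro identities. Formula \eqref{20130322:eq3} follows by applying $\pi$ to the expression \eqref{virL} for $L$ (using $\mf g_{\frac12}=0$, so the sums over $J_{\frac12}$ are empty and $L=f+\frac12\sum_{j\in J_0}a_ja^j$), decomposing $\mf g_0=\mf g_0^f\oplus[f,\mf g_1]$, and checking that $\pi(L)=f+\frac12\sum_{i\in J_0^f}a_ia^i=\psi(f)+\frac12\sum a_ia^i$ in $\mc V(\mf g^f)$; then applying $\pi^{-1}$ gives \eqref{20130322:eq3}. The brackets \eqref{20130322:eq9} then follow from the general theory: $\{L_\lambda a\}_{z,\rho}=(\partial+\lambda)a$ and $\{L_\lambda\psi(u)\}_{z,\rho}=(\partial+2\lambda)\psi(u)+O(\lambda^2)$-plus-$z$-terms are forced by Theorem~\ref{daniele2}(d) together with the conformal-weight bookkeeping of Lemma~\ref{20130320:lem} (the error terms must live in $\mc W\{0\}=\mb F$ and $\mc W\{-\tfrac12\}=0$ respectively, so only a scalar $\lambda^3$-term and a scalar $z\lambda$-term can occur); their precise coefficients $-\frac12(e|u)$ and $(s|u)$ can be extracted either from \eqref{virasoro} by polarization along $\psi(f)$ using $\psi(f)=L-\frac12\sum a_ia^i$, or directly by tracking the derivative terms in \eqref{virL} as was done for $\varphi(a)$ and $\psi(u)$ in the proof of Theorem~\ref{prop:minimal}.

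The main work, and the main obstacle, is \eqref{20130320:eq2b}. Here I would write $\pi\{\psi(u)_\lambda\psi(u_1)\}_{z,\rho}$ by expanding both copies of \eqref{psiu} and applying the Leibniz rules, as in equation \eqref{20130402:eq2} of the minimal case, producing a long sum of terms of the form $\rho\{(\cdots)_{\partial+\lambda}(\cdots)\}_z$ with coefficients that are products of $[u,u^k]^\sharp$, $[u_1,u_h]^\sharp$, etc. Every individual bracket is read off Table~\ref{table3}, and the completeness relations \eqref{complete1}, \eqref{complete2} are used repeatedly to contract sums over $J_1$. The genuinely delicate part is the organization: one must use the Jacobi identity and the $\mb Z/2\mb Z$-grading of Lemma~\ref{20130417:lem1}, together with $\mf{sl}_2$-representation theory identities such as $(f|[v,[e,u]])=-2(v|u)$ and the Jordan-product relations $a\circ b=[[e,a],b]$, to recognize that certain combinations reassemble into $\psi(u\circ u_k)$ and into $[[e,u],[e,u_1]]$, and to collect the many $\lambda$-, $\lambda^2$-, $\lambda^3$- and $z$-contributions correctly --- the $\lambda^3$ coefficient $\frac14(e|u\circ u_1)$ and the $\partial$-weighted terms like $-\frac14(3\lambda^2+3\lambda\partial+\partial^2)[[e,u],[e,u_1]]$ in particular require careful sesquilinearity tracking. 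Once the projected bracket is in the form of the RHS of \eqref{20130320:eq2b} but with every $\psi(u\circ u_k)$ replaced by $u\circ u_k$ and every $L$-type term absent, I would apply $\pi^{-1}$ (Corollary~\ref{20130402:cor2}), which sends $u\circ u_k\mapsto\psi(u\circ u_k)$ and fixes the $\mf g_0^f$-combinations, yielding \eqref{20130320:eq2b} exactly. A final consistency check is that the $z$-independent, $\lambda$-independent part must equal $\pi^{-1}$ applied to something expressible through $L$; this is automatic from Lemma~\ref{20130320:lem} since $\psi(u),\psi(u_1)\in\mc W\{2\}$ forces the weight-$2$ part of the $H$-bracket to lie in $\psi(\mf g_{-1})\oplus S^2\mf g_0^f\oplus\partial\mf g_0^f$, matching the shape of the answer.
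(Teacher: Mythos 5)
Your overall strategy is the one the paper actually uses: project with $\pi$ into $\mc V(\mf g^f)$, read everything off Table \ref{table3} and the completeness relations \eqref{complete2}, exploit the $\mb Z/2\mb Z$-grading of Lemma \ref{20130417:lem1}, and pull back with $\pi^{-1}$ via Corollary \ref{20130402:cor2}; your derivation of \eqref{20130322:eq3} by projecting \eqref{virL} is a harmless variant of the paper's direct comparison of the two expressions. The one caveat on the main identity is that you only sketch the computation of \eqref{20130320:eq2b}; the paper's proof is precisely the long expansion you describe (its equations \eqref{20130403:eq4}--\eqref{20130403:eq5}), so your plan is right, but as written it is a roadmap rather than a verification of the many numerical coefficients, which is where all the content of \eqref{20130320:eq2b} lives.

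There is one genuine gap in your argument for \eqref{20130322:eq9}. You claim the error terms are forced to be scalars by Theorem \ref{daniele2}(d) and the weight bookkeeping of Lemma \ref{20130320:lem}. But with $d=1$, $L\in\mc W\{2\}$ and $\psi(u)\in\mc W\{2\}$, the lemma only puts $L_{(0,K)}\psi(u)$ in $\mc W\{1\}=\mf g_0^f$, which is nonzero; so a $z$-term valued in $\mf g_0^f$ (with no $\lambda$) is not excluded by weights, and similarly $L_{(2,H)}a$ and $L_{(0,K)}a$ land in $\mc W\{0\}=\mb F$, so scalar $\lambda^2$- and $z$-terms in $\{L_\lambda a\}_{z,\rho}$ are not excluded either. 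Indeed the presence of the $z(s|u)\lambda$ term in the answer shows the $K$-contributions are genuinely nonzero here, unlike in the minimal case. Your fallback of ``polarizing'' \eqref{virasoro} along $\psi(f)$ also does not reach a general $u$. The way to close this, and what the paper does, is to specialize the already-proved formula \eqref{20130320:eq2b} to $u_1=f$ (using $f\circ u=-2u$ and $[f,\mf g_1]\subset\ker\pi$), add the bracket with $\frac12\sum_{i\in J_0^f}a_ia^i$ from Table \ref{table4}, invoke \eqref{20130322:eq3}, and finish by skew-symmetry; this both produces the coefficients $-\frac12(e|u)$ and $(s|u)$ and shows the potential $z\cdot\mf g_0^f$ term cancels.
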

\begin{remark}\label{20130403:rem}
Note that the formula \eqref{20130320:eq2b} for $\{\psi(u)_\lambda\psi(u_1)\}_{z,\rho}$
is indeed skewsymmetric w.r.t. exchanging $u$ with $u_1$ and $\lambda$ with $-\lambda-\partial$.
This follows form 
the following identity in $(\mf g_0^f)^{\otimes2}$ ($u,u_1\in\mf g_{-1}$):
\begin{equation}\label{20130403:eq6}
\sum_{k\in J_1} [[e,u],[e,u_k]] \otimes [u_1,u^k]^\sharp 
=
\sum_{k\in J_1} [u,u^k]^\sharp\otimes [[e,u_1],[e,u_k]] 
\,,
\end{equation}
which can be easily checked,
by taking inner product with an element $a\otimes b\in (\mf g_0^f)^{\otimes2}$.
\end{remark}
\begin{proof}[{Proof of Theorem \ref{prop:short}}]
The $\lambda$-bracket $\{a_\lambda b\}_{z,\rho}$ is given by Table \ref{table3}.
Next, we compute $\{a_\lambda\psi(u)\}_{z,\rho}$ for $a\in\mf g_0^f$ and $u\in\mf g_{-1}$.
As in the proof of Theorem \ref{prop:minimal},
we compute, instead,
$\pi\{a_\lambda\psi(u)\}_{z,\rho}$,
and then apply the inverse map $\pi^{-1}$, using Corollary \ref{20130402:cor2}.
Recalling that $[e,\mf g_{-1}]=[f,\mf g_1]\subset\ker(\pi)$ we have, 
by the formula \eqref{psiu} for $\psi(u)$ and the Leibniz rule,
\begin{equation}\label{20130403:eq1}
\begin{array}{c}
\displaystyle{
\pi\{a_\lambda\psi(u)\}_{z,\rho}
=
\pi\big(\rho\{a_\lambda u\}_z\big)
-\frac12\sum_{k\in J_1}[u,u^k]^\sharp\pi\big(\rho\{a_\lambda [e,u_k]\}_z\big)
} \\
\displaystyle{
+\frac12(\partial+\lambda)\pi\big(\rho\{a_\lambda [e,u]\}_z\big)
\,.}
\end{array}
\end{equation}
Using Table \ref{table3} and the definition of $\pi$, equation \eqref{20130403:eq1}
gives
\begin{equation}\label{20130403:eq2}
\begin{array}{c}
\displaystyle{
\pi\{a_\lambda\psi(u)\}_{z,\rho}
=
[a,u]+z(s|[a,u])
-\frac12\sum_{k\in J_1}[u,u^k]^\sharp
\big([a,[e,u_k]]^\sharp+(a|[e,u_k])\lambda\big)
} \\
\displaystyle{
+\frac12(\partial+\lambda)
\big([a,[e,u]]^\sharp+(a|[e,u])\lambda\big)
\,.}
\end{array}
\end{equation}
By assumption, $a\in\mf g_0^f=\mf g_0^e$. Hence, 
by invariance of the bilinear form we have $(a|[e,u])=0$ for all $u\in\mf g_{-1}$,
and by the Jacobi identity and the definition of $\sharp$ we have $[a,[e,u]]^\sharp=0$ for all $u\in\mf g_{-1}$.
Therefore, equation \eqref{20130403:eq2} reduces to
\begin{equation}\label{20130403:eq3}
\pi\{a_\lambda\psi(u)\}_{z,\rho}
=
[a,u]+z(s|[a,u])\,.
\end{equation}
Applying $\pi^{-1}$ to equation \eqref{20130403:eq3}, we get,
according to Corollary \ref{20130402:cor2},
$\{a_\lambda\psi(u)\}_{z,\rho}=\psi([a,u])+z(s|[a,u])$,
as stated in Table \ref{table4}.

Next, we want to compute the formula for the $\lambda$-bracket $\{\psi(u)_\lambda\psi(u_1)\}_{z,\rho}$.
As before, we first compute $\pi\{\psi(u)_\lambda\psi(u_1)\}_{z,\rho}$,
and then apply the inverse map $\pi^{-1}$.
Recalling that $[e,\mf g_{-1}]=[f,\mf g_1]\subset\ker(\pi)$ we have, 
by the formula \eqref{psiu} for $\psi(u)$ and the Leibniz rule,
\begin{equation}\label{20130403:eq4}
\begin{array}{l}
\displaystyle{
\pi\{\psi(u)_\lambda\psi(u_1)\}_{z,\rho}
=
\pi\big(\rho\{ u _\lambda u_1 \}_z\big)
-\frac12\sum_{k\in J_1}[u_1,u^k]^\sharp
\pi\big(\rho\{ u _\lambda [e,u_k] \}_z\big)
} \\
\displaystyle{
-\frac12\sum_{k\in J_1}
\pi\big(\rho\{ [e,u_k] _{\partial+\lambda} u_1 \}_z\big)_\to [u,u^k]^\sharp
+\frac12(\partial+\lambda) \pi\big(\rho\{ u _\lambda [e,u_1] \}_z\big)
} \\
\displaystyle{
-\frac12\lambda
\pi\big(\rho\{ [e,u] _\lambda u_1 \}_z\big)
+\frac14\sum_{h,k\in J_1}
[u_1,u^k]^\sharp \pi\big(\rho\{ [e,u_h] _{\partial+\lambda} [e,u_k] \}_z\big)_\to [u,u^h]^\sharp
} \\
\displaystyle{
-\frac14\sum_{k\in J_1}
(\partial+\lambda) \pi\big(\rho\{ [e,u_k] _{\partial+\lambda} [e,u_1] \}_z\big)_\to [u,u^k]^\sharp
} \\
\displaystyle{
+\frac14\sum_{k\in J_1}\lambda
[u_1,u^k]^\sharp \pi\big(\rho\{ [e,u] _\lambda [e,u_k] \}_z\big)
} \\
\displaystyle{
-\frac14\lambda(\partial+\lambda)
\pi\big(\rho\{ [e,u] _\lambda [e,u_1] \}_z\big)
\,.}
\end{array}
\end{equation}
According to Table \ref{table3} we have, by definition of $\pi$,
for arbitrary $u,u_1\in\mf g_{-1}$:
$$
\begin{array}{l}
\vphantom{\Big(}
\displaystyle{
\pi\big(\rho\{ u _\lambda u_1 \}_z\big)=0\,,
} \\
\vphantom{\Big(}
\displaystyle{
\pi\big(\rho\{ [e,u] _\lambda u_1 \}_z\big)
=-\pi\big(\rho\{ u _\lambda [e,u_1] \}_z\big)
=u\circ u_1+z(s|u\circ u_1)
\,,} \\
\vphantom{\Big(}
\displaystyle{
\pi\big(\rho\{ [e,u] _\lambda [e,u_1] \}_z\big)
=
[[e,u],[e,u_1]]-(e|u\circ u_1)\lambda
\,.
}
\end{array}
$$
In the last identity we used the fact that $[[e,\mf g_{-1}],[e,\mf g_{-1}]]\subset\mf g_0^f$
(cf. Lemma \ref{20130417:lem1}).
Using the above identities and the completeness relations \eqref{complete2}, 
equation \eqref{20130403:eq4} gives
\begin{equation}\label{20130403:eq5}
\begin{array}{l}
\displaystyle{
\pi\{\psi(u)_\lambda\psi(u_1)\}_{z,\rho}
=
\frac12\sum_{k\in J_1}(u\circ u_k) [u_1,u^k]^\sharp
-\frac12\sum_{k\in J_1} (u_1\circ u_k) [u,u^k]^\sharp
} \\
\displaystyle{
+\frac12z [[s,u],[e,u_1]]^\sharp
+\frac12 z [[e,u],[s,u_1]]^\sharp
} \\
\displaystyle{
-\frac12(\partial+\lambda) u\circ u_1
-\frac12\lambda u\circ u_1
-z \lambda (s|u\circ u_1)
} \\
\displaystyle{
+\frac14\sum_{h,k\in J_1} [[e,u_h],[e,u_k]] [u,u^h]^\sharp [u_1,u^k]^\sharp 
+\frac14\sum_{k\in J_1} [u_1,u^k]^\sharp (\partial+\lambda) [[e,u],[e,u_k]]
} \\
\displaystyle{
-\frac14\sum_{k\in J_1} (\partial+\lambda) [[e,u_k],[e,u_1]] [u,u^k]^\sharp
-\frac14 (\partial+\lambda)^2 [[e,u],[e,u_1]]
} \\
\displaystyle{
+\frac14 \lambda \sum_{k\in J_1} [[e,u],[e,u_k]] [u_1,u^k]^\sharp 
-\frac14 \lambda^2 [[e,u],[e,u_1]]
} \\
\displaystyle{
-\frac14\lambda(\partial+\lambda)[[e,u],[e,u_1]]
+\frac14\lambda^3(e|u\circ u_1)
\,.}
\end{array}
\end{equation}
Applying the bijective map $\pi^{-1}:\,\mc V(\mf g^f)\to\mc W$
to both sides of equation \eqref{20130403:eq5}
and using Corollary \ref{20130402:cor2} and equation \eqref{20130403:eq6}, 
we get equation \eqref{20130320:eq2b}.

The formula \eqref{virL} for $L$ reduces, in the special case of a short nilpotent element $f$, to
\begin{equation}\label{20130403:eq7}
L=
f+x'+\frac12\sum_{j\in J_0}a_ja^j
\,,
\end{equation}
while equation \eqref{psiu} reduces, for $u=f$, to
\begin{equation}\label{20130403:eq8}
\psi(f)
=f+x'-\frac14\sum_{k\in J_1}[f,u^k][e,u_k]
\,.
\end{equation}
Here we used the fact that $f\circ u=-2u$ for every $u\in\mf g_{-1}$.
Equation \eqref{20130322:eq3} follows from equations \eqref{20130403:eq7} and \eqref{20130403:eq8}
and the fact that 
$\big\{-\frac12[e,u_k]\big\}_{k\in J_1}$, $\big\{[f,u^k]\big\}_{k\in J_1}$,
are dual bases of $[f,\mf g_1]\subset\mf g_0$,
and $\mf g_0^f$ is the orthocomplement to $[f,\mf g_1]$ in $\mf g_0$.

Finally, we prove equations \eqref{20130322:eq9}.
If $a\in\mf g_0^f$ we have, from Table \ref{table4}, 
that $\{a_\lambda\psi(f)\}_{z,\rho}=0$,
and 
$$
\{a_\lambda\frac12\sum_{i\in J_0^f}a_ia^i\}_{z,\rho}
=\sum_{i\in J_0^f}\big([a,a^i]+(a|a^i)\lambda\big)a_i
=a\lambda\,.
$$
Here we used the fact that $\sum_{i\in J_0^f}a_ia^i$ is invariant with respect
to the adjoint action of $\mf g_0^f$,
and the completeness relation \eqref{complete1}.
Therefore, 
by equation \eqref{20130322:eq3}, we have
$$
\{a_\lambda L\}_{z,\rho}
=a\lambda\,.
$$
The first equation in \eqref{20130322:eq9} follows by skew-symmetry.
Similarly, for the second equation in \eqref{20130322:eq9} we have,
from equation \eqref{20130320:eq2b},
\begin{equation}\label{20130403:eq9}
\begin{array}{l}
\displaystyle{
\{\psi(u)_\lambda\psi(f)\}_{z,\rho}
=
\sum_{k\in J_1}\psi(u_k)[u,u^k]^\sharp
+(\partial+2\lambda)\psi(u)
-\frac12(e|u)\lambda^3
}\\
\displaystyle{
+\frac12z[[e,u],[s,f]]^\sharp
+2(s|u)z\lambda
\,.}
\end{array}
\end{equation}
Here we used the facts that $[f,\mf g_1]\subset\ker\pi$, so that $[f,v]^\sharp=0$ for every $v\in\mf g_1$,
that $f\circ u=u\circ f=-2u$ for every $u\in\mf g_{-1}$,
and that $[e,f]=2x$ commutes with $\mf g_0$.
Furthermore, by Table \ref{table4}, the left Leibniz rule, 
and the completeness relation \eqref{complete1}, 
we have
\begin{equation}\label{20130403:eq10}
\{\psi(u)_\lambda\frac12\sum_{i\in J_0^f}a_ia^i\}_{z,\rho}
=
\sum_{i\in J_0^f}\psi([u,a^i])a_i
+z[s,u]^\sharp
\,.
\end{equation}
It is not hard to check, using the completeness relations \eqref{complete1} and \eqref{complete2}, that
$$
\sum_{k\in J_1}\psi(u_k)[u,u^k]^\sharp
=-\sum_{i\in J_0^f}\psi([u,a^i])a_i
\,,
$$
and, by the Jacobi identity, that
$$
[[e,u],[s,f]]^\sharp
=-2[s,u]^\sharp
\,.
$$
Therefore, combining equations \eqref{20130403:eq9} and \eqref{20130403:eq10}, we get,
by \eqref{20130322:eq3}, that
$$
\{\psi(u)_\lambda L\}_{z,\rho}
=
(\partial+2\lambda)\psi(u)
-\frac12(e|u)\lambda^3
+2(s|u)z\lambda
\,.
$$
The second equation in \eqref{20130322:eq9} follows by skew-symmetry.
\end{proof}

\section{Generalized Drinfeld-Sokolov integrable bi-Hamiltonian hierarchies}
\label{sec:4}

In this section we remind the construction of generalized Drinfeld-Sokolov integrable
bi-Hamiltonian hierarchies, following \cite{DSKV12}.

According to the Lenard-Magri scheme of integrability \eqref{20130314:eq3},
in order to construct an integrable hierarchy of bi-Hamiltonian equations in $\mc W$,
we need to find a sequence of local functionals
$\tint g_n\in\quot{\mc W}{\partial\mc W},\,n\in\mb Z_+$, 
satisfying the following recursive equations ($w\in\mc W$):
\begin{equation}\label{eq:lenard2}
\tint \{{g_0}_\lambda w\}_{K,\rho}\big|_{\lambda=0}=0
\,\,\,\text{ and }\,\,\,
\tint \{{g_n}_\lambda w\}_{H,\rho}\big|_{\lambda=0}
=\tint \{{g_{n+1}}_\lambda w\}_{K,\rho}\big|_{\lambda=0}
\,.
\end{equation}
In this case it is well known and easy to prove (see e.g. \cite[Lem.2.6]{BDSK09})
that the $\tint g_n$'s are in involution with respect to both $H$ and $K$:
$$
\{\tint g_m,\tint g_n\}_{H,\rho}
=\{\tint g_m,\tint g_n\}_{K,\rho}=0
\qquad\text{for all }m,n\in\mb Z_+\,,
$$
and we get the corresponding integrable hierarchy of Hamiltonian equations:
\begin{equation}\label{eq:hierarchy}
\frac{dw}{dt_n}
=\{{g_n}_\lambda w\}_{H,\rho}\big|_{\lambda=0}
\,\,,\,\,\,\,
n\in\mb Z_+\,,
\end{equation}
provided that the $\tint g_n$'s span an infinite dimensional subspace of 
$\quot{\mc W}{\partial\mc W}$.

Consider the Lie algebra $\mf g((z^{-1}))=\mf g\otimes\mb F((z^{-1}))$,
where $\mb F((z^{-1}))$ is the field of formal Laurent series in the indeterminate $z^{-1}$.
Introduce the $\mb Z$-grading of $\mf g((z^{-1}))$ by letting 
$$
\deg(a\otimes z^k)=i-(d+1)k
\,\,,\,\,\,\,
\text{ for } 
a\in\mf g_i
\text{ and }
k\in\mb Z
\,,
$$
where $d$ is the depth of the grading \eqref{dec}.
Then, for $s\in\mf g_d$, the element $f+zs$ is homogeneous of degree $-1$.
We denote by $\mf g((z^{-1}))_i\subset\mf g((z^{-1}))$
the space of homogeneous elements of degree $i$.
Then we have the grading
\begin{equation}\label{decz}
\mf g((z^{-1}))=\widehat\bigoplus_{i\in\frac12\mb Z}\mf g((z^{-1}))_i\,,
\end{equation}
where the direct sum is completed by allowing infinite series in positive degrees.

Recall the following well known facts about semisimple elements in a simple,
or more generally reductive, Lie algebra:
\begin{lemma}\label{20130506:lem}
The following conditions are equivalent for an element $\Lambda$ 
of a reductive Lie algebra $\mf g$:
\begin{enumerate}[(i)]
\item
$\ad\Lambda$ is a semisimple endomorphism of $\mf g$;
\item
the adjoint orbit of $\Lambda$ is closed;
\item
$\ker(\ad\Lambda)\cap\im(\ad\Lambda)=0$;
\item
$\mf g=\ker(\ad\Lambda)+\im(\ad\Lambda)$;
\item
$\mf g=\ker(\ad\Lambda)\oplus\im(\ad\Lambda)$.
\end{enumerate}
\end{lemma}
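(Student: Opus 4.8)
The plan is to establish $(i)\Rightarrow(v)\Rightarrow(iv)\Rightarrow(iii)\Rightarrow(i)$ together with $(i)\Leftrightarrow(ii)$; throughout I write $A=\ad\Lambda\in\End(\mf g)$ and fix a non-degenerate symmetric invariant bilinear form on the reductive Lie algebra $\mf g$. The equivalences among $(iii)$, $(iv)$, $(v)$ are pure linear algebra: for any $A\in\End(\mf g)$ one has $\dim\ker A+\dim\im A=\dim\mf g$, so $\ker A\cap\im A=0$ iff $\dim(\ker A+\im A)=\dim\mf g$ iff $\ker A+\im A=\mf g$, and each of these is equivalent to $\mf g=\ker A\oplus\im A$. (More structurally: $A$ is skew-adjoint for the chosen form, hence $\im A=(\ker A)^{\perp}$, so $\ker A\cap\im A$ is the radical of the form restricted to $\mf g^{\Lambda}=\ker A$, and $(iii)$ says precisely that this restriction is non-degenerate.) For $(i)\Rightarrow(v)$: if $A$ is semisimple then, $\mb F$ being algebraically closed, $\mf g=\bigoplus_{\mu}\ker(A-\mu\,\id)$, with $\ker A$ the $\mu=0$ summand and $\im A$ the sum of the summands with $\mu\neq0$, so $\mf g=\ker A\oplus\im A$.

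The substantive step is $(iii)\Rightarrow(i)$, which I would prove contrapositively. Assume $A$ is not semisimple and use the Jordan decomposition $\Lambda=\Lambda_s+\Lambda_n$ in $\mf g$, with $\Lambda_s$ semisimple, $\Lambda_n$ nilpotent and lying in $[\mf g,\mf g]$, and $[\Lambda_s,\Lambda_n]=0$; since $\mf g$ is reductive, $A=\ad\Lambda_s+\ad\Lambda_n$ is the additive Jordan decomposition of the endomorphism $A$, so $\ad\Lambda_n\neq0$. Because $\ad\Lambda_s$ is semisimple and commutes with the nilpotent $\ad\Lambda_n$, the generalized $0$-eigenspace of $A$ is $\ker(\ad\Lambda_s)=\mf g^{\Lambda_s}$, the centralizer of $\Lambda_s$, on which $A$ acts as the nilpotent operator $N:=\ad_{\mf g^{\Lambda_s}}\Lambda_n$. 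A short computation with the generalized eigenspace decomposition of $A$ then shows that $\ker A\subseteq\mf g^{\Lambda_s}$, that $\im A$ is $\im N$ plus all generalized eigenspaces of $A$ for nonzero eigenvalues, and hence that $\ker A\cap\im A=\ker N\cap\im N$.

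Two observations finish this step. First, a nonzero nilpotent operator $N$ always satisfies $\ker N\cap\im N\neq0$: choosing $k\geq1$ maximal with $N^{k}\neq0$, the nonzero subspace $\im N^{k}$ lies in $\ker N$ (as $N^{k+1}=0$) and in $\im N$ (as $k\geq1$). Second, $N\neq0$: otherwise $\Lambda_n$ would be central in $\mf g^{\Lambda_s}$, but $\mf g^{\Lambda_s}$ is reductive (centralizer of a semisimple element) with semisimple — hence centerless — derived subalgebra, while a nilpotent element of a reductive Lie algebra lying in its derived subalgebra has zero component along any Cartan subalgebra, in particular along one containing $\Lambda_s$ (which contains the center of $\mf g^{\Lambda_s}$); thus $\Lambda_n\in Z(\mf g^{\Lambda_s})$ would force $\Lambda_n=0$, contradicting $\ad\Lambda_n\neq0$. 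Combining, $\ker A\cap\im A\neq0$, i.e.\ $(iii)$ fails, which proves $(iii)\Rightarrow(i)$.

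Finally, $(i)\Leftrightarrow(ii)$ is classical structure theory of adjoint orbits: a semisimple orbit is closed — e.g.\ by Matsushima's criterion, since the isotropy subalgebra $\mf g^{\Lambda}$ is reductive — whereas the closure of any orbit contains a unique closed orbit, which meets the orbit of $\Lambda_s$, and if $\Lambda$ is not semisimple then the orbit of $\Lambda$ has strictly larger dimension than that of $\Lambda_s$, so it does not coincide with that closed orbit and is therefore not closed. I expect the main obstacle to be the step $(iii)\Rightarrow(i)$: linear algebra alone only yields that the generalized $0$-eigenspace of $\ad\Lambda$ coincides with $\ker(\ad\Lambda)$, which is strictly weaker than semisimplicity, and closing this gap genuinely requires the Lie-theoretic input that the nilpotent part of the Jordan decomposition cannot be a nonzero central element of the centralizer of the semisimple part. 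The equivalence $(i)\Leftrightarrow(ii)$, though not elementary, is entirely standard.
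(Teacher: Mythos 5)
Your proof is correct and follows essentially the same route as the paper's: the linear-algebra equivalence of (iii)--(v), the classical fact for (i)$\Leftrightarrow$(ii), and --- for the key implication (iii)$\Rightarrow$(i) --- the Jordan decomposition $\Lambda=\Lambda_s+\Lambda_n$ combined with the observations that a nonzero nilpotent operator has nonzero kernel--image intersection and that a nilpotent central element of the reductive centralizer $\mf g^{\Lambda_s}$ must vanish. The only differences are presentational: you argue (iii)$\Rightarrow$(i) contrapositively and sketch (i)$\Leftrightarrow$(ii) via orbit closures rather than citing it, as the paper does.
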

\begin{proof}
The equivalence of (i) and (ii) is well known, \cite{OV89}.
Clearly, conditions (iii), (iv) and (v) are equivalent by linear algebra,
since $\mf g$ is finite dimensional.
Moreover, condition (i) obviously implies (iii), again by linear algebra.
To conclude, we shall prove that condition (iii) implies condition (i).
Consider the Jordan decomposition $\Lambda=s+n$,
where $s,n\in\mf g$ are respectively the semisimple and nilpotent parts of $\Lambda$.
Since $s$ and $n$ commute,
$n$ lies in $\mf g^s$, the centralizer of $s$,
and so $\ad n|_{\mf g^s}$ is a nilpotent endomorphism of $\mf g^s$.
By assumption, $\ker(\ad\Lambda)\cap\im(\ad\Lambda)=0$,
and, a fortiori, $\ker(\ad\Lambda|_{\mf g^s})\cap\im(\ad\Lambda|_{\mf g^s})=0$,
which is the same as saying that $\ker(\ad n|_{\mf g^s})\cap\im(\ad n|_{\mf g^s})=0$.
But for a nilpotent element, kernel and image having zero intersection is the same
as the element being zero. Therefore, $\ad n|_{\mf g^s}=0$,
i.e. $n$ is a central element of $\mf g^s$.
But $\mf g^s\subset\mf g$ is a reductive subalgebra,
hence its center consists of semisimple elements of $\mf g$, \cite{OV89}.
It follows that $n=0$.
\end{proof}
We shall assume that $f+s$ is a semisimple element of $\mf g$.
Then $f+zs$ is a semisimple element of $\mf g((z^{-1}))$.
(Indeed, for any scalar $t$ we have a Lie algebra automorphism acting as $t^i$ in $\mf g_i$.
On the other hand, $f\in\mf g_{-1}$, and $s$ is a homogeneous element of $\mf g_{\geq0}$.
Hence, considering $z$ as an element of the field $\mb F((z^{-1}))$,
$f+s$ is semisimple if and only if $f+zs$ is semisimple.)
Therefore we have the following direct sum decomposition 
\begin{equation}\label{eq:dech}
\mf g((z^{-1}))=\mf h\oplus\mf h^\perp
\,,\,\,\text{ where }
\mf h:=\Ker\ad(f+zs)
\,\,\text{ and }\,\,
\mf h^\perp:=\im\ad(f+zs)
\,.
\end{equation}
Since $f+zs$ is homogeneous,
the decomposition \eqref{decz} induces the corresponding decompositions
of $\mf h$ and $\mf h^\perp$:
\begin{equation}\label{20130404:eq6}
\mf h=\widehat\bigoplus_{i\in\frac12\mb Z}\mf h_i
\,\,\text{ and }\,\,
\mf h^\perp=\widehat\bigoplus_{i\in\frac12\mb Z}\mf h^\perp_i
\,.
\end{equation}

Recall that $\mc V(\mf g_{\leq\frac12})$ is a commutative associative algebra
with derivation $\partial$.
Hence we may consider the Lie algebra 
$$
\mb F\partial\ltimes\big(\mf g((z^{-1}))\otimes\mc V(\mf g_{\leq\frac12})\big)\,,
$$
where $\partial$ acts on the second factor.
Note that $\mf g((z^{-1}))_{>0}\otimes\mc V(\mf g_{\leq\frac12})$
is a pro-nilpotent subalgebra.
Hence,
for $U(z)\in \mf g((z^{-1}))_{>0}\otimes\mc V(\mf g_{\leq\frac12})$
we have a well defined automorphism
$e^{\ad U(z)}$ of the Lie algebra 
$\mb F\partial\ltimes\big(\mf g((z^{-1}))\otimes\mc V(\mf g_{\leq\frac12})\big)$.

As in Section \ref{sec:2.1},
we fix a basis $\{q_i\}_{i\in J_{\leq\frac12}}$ of $\mf g_{\leq\frac12}$,
and we let $\{q^i\}_{i\in J_{\leq\frac12}}$ be the dual basis of $\mf g_{\geq-\frac12}$,
w.r.t. the bilinear form $(\cdot\,|\,\cdot)$.
We denote
\begin{equation}\label{20130404:eq1}
q=\sum_{i\in J_{\leq\frac12}}q^i\otimes q_i
\,\in\mf g\otimes\mc V(\mf g_{\leq\frac12})
\,.
\end{equation}
\begin{proposition}[{\cite[Prop.4.5]{DSKV12}}]
\label{int_hier2_ds}
There exist formal Laurent series $U(z)\in\mf{g}((z^{-1}))_{>0}\otimes\mc V(\mf g_{\leq\frac12})$
and $h(z)\in\mf{h}_{>-1}\otimes\mc V(\mf g_{\leq\frac12})$ such that
\begin{equation}\label{L0_dsr}
e^{\ad U(z)}(\partial+(f+zs)\otimes1+q)=\partial+(f+zs)\otimes1+h(z)\,.
\end{equation}
The elements $U(z)$ and $h(z)$ solving \eqref{L0_dsr} are uniquely determined
if we require that
$U(z)\in\mf h^\perp_{>0}\otimes\mc V(\mf g_{\leq\frac12})$
\end{proposition}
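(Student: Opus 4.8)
The plan is to solve \eqref{L0_dsr} recursively, degree by degree, exploiting the $\frac12\mb Z$-grading \eqref{decz} and the decomposition \eqref{eq:dech} of $\mf g((z^{-1}))$ into $\mf h\oplus\mf h^\perp$. Expand $U(z)=\sum_{k>0}U^{(k)}$ and $h(z)=\sum_{k>-1}h^{(k)}$ into homogeneous components with respect to \eqref{decz}, with $U^{(k)}\in\mf g((z^{-1}))_k\otimes\mc V(\mf g_{\leq\frac12})$ and (when we impose the normalization) $U^{(k)}\in\mf h^\perp_k\otimes\mc V(\mf g_{\leq\frac12})$, and similarly $h^{(k)}\in\mf h_k\otimes\mc V(\mf g_{\leq\frac12})$. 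Note that $\partial+(f+zs)\otimes1$ has a homogeneous part of degree $-1$ (namely $(f+zs)\otimes1$) and a part of degree $0$ (namely $\partial$, since $\partial$ does not shift the grading of $\mf g((z^{-1}))$-components), while $q$ in \eqref{20130404:eq1} lies in $\bigoplus_{k\geq-\frac12}\mf g((z^{-1}))_k\otimes\mc V(\mf g_{\leq\frac12})$ because each $q^i\otimes q_i$ has $q_i\in\mf g_{\leq\frac12}$ and $q^i\in\mf g_{\geq-\frac12}$, so the relevant homogeneous summands of $q$ have degree $\geq-\frac12$.

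The recursion is driven by comparing homogeneous components of degree $n$ in \eqref{L0_dsr}. Since $\mf g((z^{-1}))_{>0}\otimes\mc V(\mf g_{\leq\frac12})$ is pro-nilpotent, $e^{\ad U(z)}=\sum_{m\geq0}\frac{1}{m!}(\ad U(z))^m$ is a well-defined automorphism and each $\ad U(z)$ raises degree by at least $\frac12$. Writing $\mc L=\partial+(f+zs)\otimes1+q$ and $e^{\ad U(z)}\mc L=\mc L+[U(z),\mc L]+\tfrac12[U(z),[U(z),\mc L]]+\cdots$, the degree-$n$ component of $e^{\ad U(z)}\mc L$ equals $q^{(n)}+[U^{(n+1)},(f+zs)\otimes1]$ plus terms depending only on $U^{(k)}$ with $k\leq n$ (here $q^{(n)}$ denotes the degree-$n$ component of $q$, which vanishes for $n<-\frac12$, and the bracket $[U^{(n+1)},(f+zs)\otimes1]$ has degree $n$ because $(f+zs)\otimes1$ has degree $-1$). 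Thus at each step $n\geq-\frac12$ we must solve
\begin{equation*}
[U^{(n+1)},(f+zs)\otimes1]+\big(\text{known terms of degree }n\big)=h^{(n)}\,,
\end{equation*}
i.e. $-\ad(f+zs)\,U^{(n+1)}=h^{(n)}-B^{(n)}$, where $B^{(n)}\in\mf g((z^{-1}))_n\otimes\mc V(\mf g_{\leq\frac12})$ is already determined by the previously constructed $U^{(k)}$, $k\leq n$. Using the splitting $\mf g((z^{-1}))_n=\mf h_n\oplus\mf h^\perp_n$ from \eqref{20130404:eq6}, and the fact that $\ad(f+zs)$ is invertible on $\mf h^\perp$, we are forced to set $h^{(n)}=$ the $\mf h_n$-component of $B^{(n)}$ and $U^{(n+1)}=$ the unique preimage in $\mf h^\perp_{n+1}\otimes\mc V(\mf g_{\leq\frac12})$ of minus the $\mf h^\perp_n$-component of $B^{(n)}$ under $\ad(f+zs)$. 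This both establishes existence and, under the normalization $U(z)\in\mf h^\perp_{>0}\otimes\mc V(\mf g_{\leq\frac12})$, pins down $U^{(n+1)}$ and $h^{(n)}$ uniquely; an easy downward induction shows that dropping the normalization, any two solutions differ by conjugation by $e^{\ad V(z)}$ with $V(z)\in\mf h_{>0}\otimes\mc V(\mf g_{\leq\frac12})$, which changes $U(z)$ but leaves $h(z)$ alone in the $\mf h$ part up to such conjugation — so uniqueness of $h(z)$ up to the stated freedom, and uniqueness of $U(z)$ with the normalization, both follow.

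The main obstacle, and the only place where the hypothesis that $f+s$ is semisimple is used, is precisely the invertibility of $\ad(f+zs)$ on $\mf h^\perp_n$ for every $n$: this is exactly the content of the decomposition \eqref{eq:dech}, which rests on Lemma \ref{20130506:lem} and the preceding discussion that $f+zs$ is semisimple in $\mf g((z^{-1}))$ whenever $f+s$ is semisimple in $\mf g$. One must also check that the grading components $\mf g((z^{-1}))_n$ are finite-dimensional over $\mb F((z^{-1}))$ in an appropriate sense — more precisely, that at each fixed degree $n$ the relevant coefficient spaces are finite-dimensional $\mc V(\mf g_{\leq\frac12})$-modules, so that $\ad(f+zs)$ restricted to $\mf h^\perp_n\otimes\mc V(\mf g_{\leq\frac12})$ is genuinely a bijection with inverse given by a finite matrix over $\mb F((z^{-1}))$ tensored with the identity on $\mc V(\mf g_{\leq\frac12})$; this is immediate from the homogeneity of $f+zs$ and the description of \eqref{decz}. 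Finally one checks that the constructed $h(z)$ lands in $\mf h_{>-1}\otimes\mc V(\mf g_{\leq\frac12})$: there is no degree-$(-1)$ contribution because the only degree-$(-1)$ term on the left of \eqref{L0_dsr} is $(f+zs)\otimes1$ itself, which is matched identically, and $q$ has no component of degree $\leq-1$. This completes the recursion and hence the proof; since the argument is a standard Drinfeld-Sokolov-type dressing transformation, I would simply cite \cite[Prop.4.5]{DSKV12} for the details and sketch only the recursion above.
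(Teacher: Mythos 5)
Your recursion is exactly the argument behind the cited \cite[Prop.4.5]{DSKV12}, which this paper does not reprove but reproduces in practice when it solves \eqref{L0_dsr} degree by degree in Sections \ref{sec:5a}, \ref{sec:5b} and \ref{sec:8}: semisimplicity of $f+zs$ gives the splitting \eqref{eq:dech}, $\ad(f+zs)$ restricts to a degree-$(-1)$ bijection of $\mf h^\perp$ onto itself, and at each half-integer degree $n\geq-\frac12$ the $\mf h_n$-component of the equation forces $h^{(n)}$ while the $\mf h^\perp_n$-component forces $U^{(n+1)}$ once the normalization $U(z)\in\mf h^\perp_{>0}\otimes\mc V(\mf g_{\leq\frac12})$ is imposed. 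The only slip is your claim that the remaining degree-$n$ terms involve only $U^{(k)}$ with $k\leq n$: iterated brackets such as $\big[U^{(1/2)},\big[U^{(n+1/2)},(f+zs)\otimes1\big]\big]$ also contribute in degree $n$, so the correct bound is $k\leq n+\frac12$; since those components were already fixed at the earlier stage $n-\frac12$ of the half-integer recursion, the induction still closes and the proof is correct.
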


The key result of this section is the following theorem,
which allows us to construct an integrable hierarchy of bi-Hamiltonian equations.
\begin{theorem}[{\cite[Thm.4.18]{DSKV12}}]\label{final}
Assume that $s\in\mf g_d$ is such that $f+s$ is a semisimple element of $\mf g$.
Let $U(z)\in\mf g((z^{-1}))_{>0}\otimes\mc V(\mf g_{\leq\frac12})$ 
and $h(z)\in\mf h_{>-1}\otimes\mc V(\mf g_{\leq\frac12})$
be a solution of equation \eqref{L0_dsr}.
Let $0\neq a(z)\in Z(\mf h)$, the center of $\mf h\subset\mf g((z^{-1}))$.
Then, the coefficients $\tint g_n,\,n\in\mb Z_+$,
of the Laurent series $\tint g(z)=\sum_{n\in\mb Z_+}\tint g_nz^{N-n}$ 
defined by
\begin{equation}\label{gz}
\tint g(z)=\tint (a(z)\otimes1 | h(z))\,,
\end{equation}
span an infinite-dimensional subspace of $\quot{\mc W}{\partial\mc W}$
and they satisfy the Lenard-Magri recursion conditions \eqref{eq:lenard2}. 
Hence, 
we get an integrable hierarchy of bi-Hamiltonian equations \eqref{eq:hierarchy},
called a \emph{generalized Drinfeld-Sokolov hierarchy}.
\end{theorem}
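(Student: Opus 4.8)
The plan is to realize the generating function \eqref{gz} of Hamiltonians as a ``trace of a power of the Lax operator'' and to exploit a zero-curvature identity, following the pattern of the classical Drinfeld--Sokolov construction. First I would set $G(z):=e^{-\ad U(z)}\big(a(z)\otimes1\big)\in\mf g((z^{-1}))\otimes\mc V(\mf g_{\leq\frac12})$, the conjugate of $a(z)$ by the inverse of the dressing transformation of Proposition \ref{int_hier2_ds}. Since $a(z)$ has scalar coefficients, $[\partial,a(z)\otimes1]=0$; and since $a(z)\in Z(\mf h)$ while $h(z)\in\mf h_{>-1}\otimes\mc V(\mf g_{\leq\frac12})$ and $f+zs\in\mf h$, the element $a(z)\otimes1$ commutes with $\partial+(f+zs)\otimes1+h(z)$. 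Conjugating \eqref{L0_dsr} back by $e^{-\ad U(z)}$ therefore gives the zero-curvature identity
\begin{equation}\label{zc}
\big[\partial+(f+zs)\otimes1+q,\,G(z)\big]=0\,.
\end{equation}

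Next I would compute the variational derivative of $\tint g(z)$ with respect to the generators of $\mc W$. Writing $h(z)=e^{\ad U(z)}\big(\partial+(f+zs)\otimes1+q\big)-\partial-(f+zs)\otimes1$ inside the Lie algebra $\mb F\partial\ltimes\big(\mf g((z^{-1}))\otimes\mc V(\mf g_{\leq\frac12})\big)$, and using the invariance of $(\cdot\,|\,\cdot)$ under the inner automorphism $e^{\ad U(z)}$, the pro-nilpotency of $U(z)$, and once more $[\partial,a(z)\otimes1]=0$, one shows that $\big(a(z)\otimes1\,|\,h(z)\big)$ is congruent to $\big(G(z)\,|\,q\big)$ modulo $\partial\mc V(\mf g_{\leq\frac12})$ and modulo scalars in $\mb F((z^{-1}))$; in particular $\tint g(z)$ has representatives in $\mc W$. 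Combined with the zero-curvature identity \eqref{zc} — which makes the contribution to the variation coming from the implicit $q$-dependence of $G(z)$ a total derivative — this yields that $\frac{\delta\tint g(z)}{\delta q}$ is the projection of $G(z)$ onto $\mf g_{\leq\frac12}$, expressed via the dual bases $\{q_i\},\{q^i\}$. Feeding this into the Hamiltonian structure of $\mc W$, and recalling that the $z$-bracket splits as $\{\cdot\,_\lambda\,\cdot\}_{z,\rho}=\{\cdot\,_\lambda\,\cdot\}_{H,\rho}-z\{\cdot\,_\lambda\,\cdot\}_{K,\rho}$ with $H,K$ as in \eqref{HKds}, the identity \eqref{zc} translates, on $\mc W$, into the single relation $H(\partial)F(z)=zK(\partial)F(z)$ for the generating series $F(z)=\sum_{n\in\mb Z_+}F_nz^{N-n}$ of the variational derivatives $F_n=\frac{\delta\tint g_n}{\delta q}$. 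Matching powers of $z$ — the top power being $z^{N+1}$, which occurs only on the right — gives $K(\partial)F_0=0$ and $H(\partial)F_n=K(\partial)F_{n+1}$ for all $n\in\mb Z_+$, which are exactly the Lenard--Magri recursion conditions \eqref{eq:lenard2}.

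The remaining points are comparatively routine. The Poisson structures $H$ and $K$ are compatible: since $\{\cdot\,_\lambda\,\cdot\}_{z,\rho}$ is a PVA bracket for every value of the parameter $z\in\mb F$ (Theorem \ref{daniele1}(e)), every linear combination $\alpha H+\beta K$ is a Poisson structure. Hence by \cite[Lem.2.6]{BDSK09} the $\tint g_n$ are pairwise in involution with respect to both $\{\cdot\,,\,\cdot\}_{H,\rho}$ and $\{\cdot\,,\,\cdot\}_{K,\rho}$, and \eqref{eq:hierarchy} is an integrable hierarchy once we know that the $\tint g_n$ span an infinite-dimensional subspace of $\quot{\mc W}{\partial\mc W}$. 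For this last point I would use the conformal weight grading of $\mc W$ from Theorem \ref{daniele2} together with Lemma \ref{20130320:lem}: the construction of $U(z)$ and $h(z)$ in \eqref{L0_dsr} is homogeneous for a grading that combines the $z$-degree $\deg(a\otimes z^k)=i-(d+1)k$ on $\mf g((z^{-1}))$ with the conformal weight on $\mc V(\mf g_{\leq\frac12})$, so each $\tint g_n$ is homogeneous of a conformal weight which, by the explicit leading behaviour of $U(z)$ and $h(z)$ from Proposition \ref{int_hier2_ds}, grows without bound with $n$; hence infinitely many of the $\tint g_n$ are nonzero and of pairwise distinct conformal weights, so they are linearly independent.

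The main obstacle is the middle step: proving that the variational derivative of the ``trace'' functional $\tint\big(a(z)\otimes1\,|\,h(z)\big)$ coincides, modulo total derivatives, with the conserved element $G(z)$ of \eqref{zc}. This requires carefully combining the invariance of the bilinear form under $e^{\ad U(z)}$, the vanishing $[\partial,a(z)\otimes1]=0$, and the zero-curvature identity \eqref{zc}, while keeping precise track of the $\mb F\partial$-component in the semidirect product $\mb F\partial\ltimes\big(\mf g((z^{-1}))\otimes\mc V(\mf g_{\leq\frac12})\big)$; the $z$-expansion bookkeeping, the compatibility of $H$ and $K$, the involutivity, and the infinite-dimensionality count are all either standard or quoted from known results.
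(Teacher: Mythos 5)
First, note that this paper does not actually prove Theorem \ref{final}: it is quoted verbatim from \cite[Thm.4.18]{DSKV12}, so there is no in-paper proof to compare against. Your plan reconstructs the argument of the cited source, and its architecture is the correct one: set $G(z)=e^{-\ad U(z)}(a(z)\otimes 1)$, observe that $a(z)\otimes 1$ commutes with $\partial+(f+zs)\otimes 1+h(z)$ (because $a(z)\in Z(\mf h)$, $f+zs\in\mf h$, $h(z)\in\mf h\otimes\mc V(\mf g_{\leq\frac12})$, and $a(z)$ has constant coefficients), conjugate back through equation \eqref{L0_dsr} to get $[\partial+(f+zs)\otimes 1+q,\,G(z)]=0$, identify the variational derivative of $\tint(a(z)\otimes1\,|\,h(z))$ with the appropriate projection of $G(z)$, and read off $K(\partial)F_0=0$ and $H(\partial)F_n=K(\partial)F_{n+1}$ from $H(\partial)F(z)=zK(\partial)F(z)$ by matching powers of $z$. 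The compatibility of $H$ and $K$ (from the pencil of PVA brackets $\{\cdot\,_\lambda\,\cdot\}_{z,\rho}$) and the involutivity via \cite[Lem.2.6]{BDSK09} are handled exactly as in the source.

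Two points remain genuinely open in your write-up. The central identity --- that the variational derivative of $\tint(a(z)\otimes1\,|\,h(z))$ is the projection of $G(z)$, and that $g(z)$ admits a representative in $\mc W$ (gauge invariance of the densities) --- is only described, not proved; you correctly flag it as the crux, but it is precisely the content of the technical lemmas preceding Thm.~4.18 in \cite{DSKV12} and cannot be waved through. More seriously, your infinite-dimensionality argument contains a non sequitur: from the fact that $\tint g_n$ is homogeneous of a conformal weight that grows with $n$ you conclude that ``infinitely many of the $\tint g_n$ are nonzero.'' Homogeneity in pairwise distinct weights gives linear independence of the \emph{nonzero} coefficients, but it gives no lower bound on how many are nonzero; a priori the sequence could vanish from some point on. One must separately show that infinitely many coefficients survive in $\quot{\mc W}{\partial\mc W}$, which in \cite{DSKV12} is done by analyzing $G(z)=a(z)\otimes1-[U(z),a(z)\otimes 1]+\dots$ and the resulting nonvanishing of the variational derivatives $F_n$, not by the weight count alone.
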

\begin{remark}\label{20130404:rem}
It follows from \cite[Prop.2.10]{BDSK09}
that all the $\tint g_n$'s obtained by taking all possible $a(z)\in Z(\mf h)$
are in involution.
Thus we attach an integrable bi-Hamiltonian hierarchy to a nilpotent element $f$
of a simple Lie algebra $\mf g$ and a choice of $s\in\mf g_d$ such that $f+s$
is a semisimple element of $\mf g$.
\end{remark}

\section{Generalized Drinfeld-Sokolov hierarchy for a minimal nilpotent}
\label{sec:5a}

\subsection{Preliminary computations}

As in Section \ref{sec:3} we assume, without loss of generality, that $s=e$.
We start by finding the direct sum decomposition \eqref{eq:dech}.
\begin{lemma}\label{minimal_semisimple}
The element $f+ze\in\mf g((z^{-1}))$ is semisimple, and we have the decomposition
$\mf g((z^{-1}))=\mf h\oplus\mf h^\perp$, where
\begin{equation}\label{20130404:eq4}
\mf h=\Ker\ad(f+ze)=
\mf g_0^f((z^{-1}))\oplus\mb F(f+ze)((z^{-1})) 
\end{equation}
and
\begin{equation}\label{20130404:eq5}
\mf h^\perp=\im\ad(f+ze)=
\mf g_{-\frac12}((z^{-1}))\oplus\mb Fx((z^{-1}))\oplus\mf g_{\frac12}((z^{-1}))
\oplus\mb F(f-ze)((z^{-1}))\,.
\end{equation}
\end{lemma}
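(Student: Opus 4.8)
The plan is to exploit the fact that, since $\mf g$ is minimal nilpotent, the $\ad x$-decomposition has only five pieces $\mf g=\mb Ff\oplus\mf g_{-\frac12}\oplus\mf g_0\oplus\mf g_{\frac12}\oplus\mb Fe$ with $\mf g_0=\mf g_0^f\oplus\mb Fx$, so that $\ad(f+ze)$ can be computed block by block. First I would observe that semisimplicity of $f+ze$ follows from the general principle already stated in the excerpt: $f+e$ is semisimple in $\mf g$ (it is a nonzero multiple of a Cartan element, being the sum of a lowest and highest root vector of the same $\mf{sl}_2$), hence by the rescaling argument ($t^i$ on $\mf g_i$) $f+ze$ is semisimple in $\mf g((z^{-1}))$, and then Lemma \ref{20130506:lem} gives the direct sum decomposition $\mf g((z^{-1}))=\mf h\oplus\mf h^\perp$. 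So the real content is identifying $\mf h=\ker\ad(f+ze)$ and $\mf h^\perp=\im\ad(f+ze)$ explicitly.

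For the kernel, I would argue degree by degree, but since everything is $\mb F((z^{-1}))$-linear it suffices to work over $\mb F((z^{-1}))$ and decompose an arbitrary element as $\alpha f+u+(\beta x+a)+v+\gamma e$ with $\alpha,\beta,\gamma\in\mb F((z^{-1}))$, $u\in\mf g_{-\frac12}((z^{-1}))$, $v\in\mf g_{\frac12}((z^{-1}))$, $a\in\mf g_0^f((z^{-1}))$. I would compute $[f+ze,\,\cdot\,]$ on each summand using Table \ref{table1}-type brackets (or directly): $[f+ze,f]=z[e,f]=2zx$; $[f+ze,e]=-2x$; $[f+ze,x]=f-ze$ (since $[f,x]=f$, $[e,x]=-e$); $[f+ze,u]=z[e,u]\in\mf g_{\frac12}$ for $u\in\mf g_{-\frac12}$; $[f+ze,v]=[f,v]\in\mf g_{-\frac12}$ for $v\in\mf g_{\frac12}$ (the bracket $[e,v]\in\mf g_{\frac32}=0$); and $[f+ze,a]=[f,a]\in\mf g_{-\frac12}$ plus $z[e,a]\in\mf g_{\frac12}$ for $a\in\mf g_0^f$. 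Setting the total bracket to zero: the $\mf g_{-\frac12}$ component gives $\beta[f,x]=0$ forcing... wait, more carefully, the $\mf g_{\pm\frac12}$ components involve $u,v,a$, and using that $\ad e:\mf g_{-\frac12}\to\mf g_{\frac12}$ and $\ad f:\mf g_{\frac12}\to\mf g_{-\frac12}$ are bijections (equation \eqref{20130201:eq1}) and $[f,a],[e,a]$ for $a\in\mf g_0^f$ — here I'd use that $\mf g_0^f=\mf g_0^e$ from $\mf{sl}_2$-theory, so actually $[e,a]=0$; hence $a$ contributes nothing, $v$ and $u$ must vanish, and from the $\mb Ff\oplus\mb Fx\oplus\mb Fe$ part $2z\beta x=0$... one gets $\beta=0$ and the relation forcing $\alpha,\gamma$ to be proportional with $\gamma=z\alpha$. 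This yields $\mf h=\mf g_0^f((z^{-1}))\oplus\mb F(f+ze)((z^{-1}))$, matching \eqref{20130404:eq4}.

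For the image $\mf h^\perp$, having computed all the brackets above, I would simply note that $\im\ad(f+ze)$ is spanned (over $\mb F((z^{-1}))$) by $2zx$ (from $f$), $-2x$ (from $e$), $f-ze$ (from $x$), $[e,u]$ ranging over all of $\mf g_{\frac12}$ (from $u\in\mf g_{-\frac12}$, by bijectivity of $\ad e$), $[f,v]$ ranging over all of $\mf g_{-\frac12}$ (from $v\in\mf g_{\frac12}$), and $[f,a]$ (which is automatically in the span of the previous since it lies in $\mf g_{-\frac12}$); but wait $[f+ze,a]=[f,a]+z[e,a]=[f,a]$ since $a\in\mf g_0^e$, and $[f,a]$ runs over $[f,\mf g_0^f]$. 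Thus $\mf h^\perp\supseteq\mf g_{-\frac12}((z^{-1}))\oplus\mb Fx((z^{-1}))\oplus\mf g_{\frac12}((z^{-1}))\oplus\mb F(f-ze)((z^{-1}))$, and a dimension count (this RHS has codimension $\dim\mf g_0^f+1$, complementary to $\mf h$) together with $\mf h\oplus\mf h^\perp=\mf g((z^{-1}))$ forces equality, giving \eqref{20130404:eq5}. The main obstacle — really a bookkeeping rather than a conceptual one — is being careful that $\ad(f+ze)$ does not mix the graded pieces in a way that would make, e.g., some combination of $f$ and $x$ land in the kernel unexpectedly; this is handled cleanly once one records that $[e,\mf g_{1/2}]=\mf g_{3/2}=0$ and $[e,\mf g_0^f]=0$, which collapse the potentially troublesome cross-terms.
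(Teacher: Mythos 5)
Your proposal is correct and follows essentially the same route as the paper: compute $\ad(f+ze)$ block by block against the five-part grading, using $\mf g_0^f=\mf g_0^e$, the bijectivity of $\ad e$ and $\ad f$ between $\mf g_{\pm\frac12}$, and the relations $[f+ze,x]=f-ze$, $[f+ze,\tfrac14(f-ze)z^{-1}]=x$, then conclude by complementarity of the two candidate subspaces. The only cosmetic difference is that you solve for the kernel directly and get the image by a dimension count, whereas the paper exhibits the two inclusions and deduces both equalities at once; the computations involved are identical.
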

\begin{proof}
Since $\mf g_0^f=\mf g_0^e$, we have
$\mf g_0^f((z^{-1}))\subset\Ker\ad(f+ze)$. 
Moreover, obviously $f+ze\in\Ker\ad(f+ze)$.
Therefore
\begin{equation}\label{20130404:eq2}
\mf g_0^f((z^{-1}))\oplus\mb F(f+ze)((z^{-1})) 
\subset\ker\ad(f+ze)\,.
\end{equation}
On the other hand, we have
$$
\begin{array}{l}
\displaystyle{
\vphantom{\Big(}
[f+ze,\mf g_{\frac12}((z^{-1}))]=[f,\mf g_{\frac12}((z^{-1}))]=\mf g_{-\frac12}((z^{-1}))
\,,} \\
\displaystyle{
\vphantom{\Big(}
[f+ze,\mf g_{-\frac12}((z^{-1}))]=z[e,\mf g_{-\frac12}((z^{-1}))]=\mf g_{\frac12}((z^{-1}))
\,,}
\end{array}
$$
and
\begin{equation}\label{20130405:eq5}
[f+ze,\big(\frac14(f-ze)z^{-1}\big)\big]=x
\,\,\text{ and }\,\,
[f+ze,x]=f-ze\,.
\end{equation}
Therefore
\begin{equation}\label{20130404:eq3}
\mf g_{-\frac12}((z^{-1}))\oplus\mb Fx((z^{-1}))\oplus\mf g_{\frac12}((z^{-1}))
\oplus\mb F(f-ze)((z^{-1}))
\subset\im\ad(f+ze)
\,.
\end{equation}
Equalities \eqref{20130404:eq4} and \eqref{20130404:eq5}
immediately follow from the inclusions \eqref{20130404:eq2} and \eqref{20130404:eq3}.
\end{proof}

Since $d=1$, the degree of $z$ equals $-2$.
It is then easy to find each piece $\mf h_i$ and $\mf h^\perp_i$, $i\in\frac12\mb Z$, 
of the decompositions \eqref{20130404:eq6}.
We have
\begin{enumerate}[(i)]
\item
$\mf h_i=0$ for $i\in\mb Z+\frac12$,
\item
$\mf h_i=\mf g_0^fz^{-\frac{i}{2}}$ for $i\in2\mb Z$,
\item
$\mf h_i=\mb F (f+ze)z^{-\frac{i+1}{2}}$ for $i\in2\mb Z-1$,
\item
$\mf h^\perp_i=\mf g_{-\frac12}z^{-\frac{2i+1}{4}}$ for $i\in2\mb Z-\frac12$,
\item
$\mf h^\perp_i=\mf g_{\frac12}z^{-\frac{2i-1}{4}}$ for $i\in2\mb Z+\frac12$,
\item
$\mf h^\perp_i=\mb Fxz^{-\frac{i}{2}}$ for $i\in2\mb Z$,
\item
$\mf h^\perp_i=\mb F (f-ze)z^{-\frac{i+1}{2}}$ for $i\in2\mb Z-1$.
\end{enumerate}

Recall from Section \ref{sec:2.1}
that 
$\{a_j\}_{j\in J_0^f}$ and $\{a^j\}_{j\in J_0^f}$ denote dual bases of $\mf g_0^f$
with respect to the inner product $(\cdot\,|\,\cdot)$,
and that $\{v^k\}_{k\in J_{\frac12}}$ and $\{v_k\}_{k\in J_{\frac12}}$
denote bases of $\mf g_{\frac12}$ 
dual with respect to the skewsymmetric bilinear form $\omega_+$
defined in \eqref{20130201:eq5}.
Therefore, $\{[f,v_k]\}_{k\in J_{\frac12}}\subset\mf g_{-\frac12}$ 
and $\{v^k\}_{k\in J_{\frac12}}\subset\mf g_{\frac12}$
are dual bases with respect to $(\cdot\,|\,\cdot)$.
Then, the element $q\in\mf g_{\geq-\frac12}\otimes\mc V(\mf g_{\leq\frac12})$ defined
in \eqref{20130404:eq1} is the following
\begin{equation}\label{20130404:eq7}
q=
\sum_{k\in J_\frac12}[f,v_k]\otimes v^k
+\sum_{k\in J_\frac12} v^k\otimes[f,v_k]
+\sum_{i\in J_0^f}a_i\otimes a^i
+\frac x{(x|x)}\otimes x
+\frac e{2(x|x)}\otimes f
\,.
\end{equation}

In order to apply Theorem \ref{final}
we need to find (unique)
$U(z)\in\mf h^\perp_{>0}\otimes\mc V(\mf g_{\leq\frac12})$ 
and $h(z)\in\mf h_{>-1}\otimes\mc V(\mf g_{\leq\frac12})$
solving equation \eqref{L0_dsr}.
We will do it recursively by degree.
We extend the degree \eqref{decz} of $\mf g((z^{-1}))$ 
to $\mf g((z^{-1}))\otimes\mc V(\mf g_{\leq\frac12})$
by letting the degree of elements of $\mc V(\mf g_{\leq\frac12})$ be zero.
We can then expand $U(z)$ and $h(z)$ 
according to the decompositions \eqref{20130404:eq6}
for $\mf h$ and $\mf h^\perp$:
$$
\begin{array}{l}
\displaystyle{
\vphantom{\Big(}
U(z)=U(z)_{\frac12}+U(z)_1+U(z)_{\frac32}+U(z)_2+\dots
\,\,\text{ with } U(z)_i\in\mf h^\perp_i\otimes\mc V(\mf g_{\leq\frac12})
\,,} \\
\displaystyle{
\vphantom{\Big(}
h(z)=h(z)_0+h(z)_1+h(z)_2+\dots
\,\,\text{ with } h(z)_i\in\mf h_i\otimes\mc V(\mf g_{\leq\frac12})
\,,}
\end{array}
$$
(here we are using the fact that $\mf h_i=0$ for semi-integer $i$),
and we can also expand accordingly the element $q$ in \eqref{20130404:eq7}
as $q=q_{-\frac12}+q_0+q_{\frac12}+q_1$, where
$$
\begin{array}{l}
\displaystyle{
q_{-\frac12}=
\sum_{k\in J_\frac12}[f,v_k]\otimes v^k
\,\,,\,\,\,\,
q_{0}=
\sum_{i\in J_0^f}a_i\otimes a^i
+\frac x{(x|x)}\otimes x
\,,} \\
\displaystyle{
q_{\frac12}=
\sum_{k\in J_\frac12} v^k\otimes[f,v_k]
\,\,,\,\,\,\,
q_{1}=
\frac e{2(x|x)}\otimes f
\,.}
\end{array}
$$

Clearly, both sides of equation \eqref{L0_dsr} have the form $\partial+(f+ze)\otimes1+$
some expression in $\mf g((z^{-1}))_{>-1}\otimes\mc V(\mf g_{\leq\frac12})$,
and we could solve it, degree by degree, by comparing the terms in
$\mf g((z^{-1}))_i\otimes\mc V(\mf g_{\leq\frac12})$ for each $i\geq-\frac12$
in both sides of the equation.
Instead, we will use a trick similar to the one used for the proof of Theorem \ref{prop:minimal}.
Recall the definition of the quotient map $\pi:\,\mc V(\mf g_{\leq\frac12})\to\mc V(\mf g^f)$,
which, by Corollary \ref{20130402:cor1}, restricts to a bijection on $\mc W$,
and the inverse map $\pi^{-1}:\,\mc V(\mf g^f)\to\mc W$.
We extend these maps to
\begin{equation}\label{20130406:eq4}
\pi:\,\mf g((z^{-1}))\otimes\mc V(\mf g_{\leq\frac12})\twoheadrightarrow\mf g((z^{-1}))\otimes\mc V(\mf g^f)
\,\,,\,\,\,\,
\pi^{-1}:\,\mf g((z^{-1}))\mc V(\mf g^f)\stackrel{\sim}{\to}\mf g((z^{-1}))\otimes\mc W\,,
\end{equation}
by acting as the identity on the first factors.
Then, instead of computing $U(z)$ and $h(z)$ in low degrees (which 
becomes quite lengthy even at low degrees),
we will compute their projections $\pi U(z)$ and $\pi h(z)$
in $\mf h^\perp_{>0}\otimes\mc V(\mf g^f)$ and $\mf h_{>-1}\otimes\mc V(\mf g^f)$
respectively, degree by degree.
Using equation \eqref{gz} we then get $\tint \pi g(z)$,
and therefore, applying the inverse map $\pi^{-1}$ by use of Corollary \ref{20130402:cor1}
(which is a differential algebra homomorphism, and therefore it commutes with taking $\tint$),
we get $\tint g(z)\in\quot{\mc W}{\partial\mc W}((z^{-1}))$.

Since $\pi:\,\mc V(\mf g_{\leq\frac12})\to\mc V(\mf g^f)$
is a homomorphism of differential algebras,
it follows that $\pi:\,\mf g((z^{-1}))\otimes\mc V(\mf g_{\leq\frac12})\to\mf g((z^{-1}))\otimes\mc V(\mf g^f)$
is a homomorphism of Lie algebras.
Therefore, applying $\pi$ to both sides of equation \eqref{L0_dsr},
we get
\begin{equation}\label{20130405:eq6}
e^{\ad \pi U(z)}(\partial+(f+ze)\otimes1+\pi q)=\partial+(f+ze)\otimes1+\pi h(z)\,,
\end{equation}
and, by the formula \eqref{20130404:eq7} for $q$ we get
$\pi q=\pi q_0+\pi q_{\frac12}+\pi q_1$, where
\begin{equation}\label{20130405:eq7}
\pi q_{0}=
\sum_{i\in J_0^f}a_i\otimes a^i
\,\,,\,\,\,\,
\pi q_{\frac12}=
\sum_{k\in J_\frac12} v^k\otimes[f,v_k]
\,\,,\,\,\,\,
\pi q_{1}=
\frac e{2(x|x)}\otimes f
\,.
\end{equation}

We start by looking at the homogeneous components of degree $-\frac12$
in both sides of equation \eqref{20130405:eq6}. We get
the following equation
$$
[\pi U(z)_{\frac12},(f+ze)\otimes1]=0\,,
$$
which implies $\pi U(z)_{\frac12}=0$,
since $\ad(f+ze)$ is bijective on $\mf h^\perp$.
Similarly, taking the homogeneous components of degree $0$ in both sides 
of equation \eqref{20130405:eq6}, we get
$$
\pi h(z)_0
=
\pi q_0+
[\pi U(z)_1,(f+ze)\!\otimes\!1]
\,.
$$
By equation \eqref{20130405:eq7} we have $\pi q_0\in\mf g^f_0\otimes\mc V(\mf g^f)$.
Therefore, 
by looking at the components in $\mf h_0=\mf g_0^f$
and $\mf h^\perp_0=\mb Fx$ separately, we get
that $\pi U(z)_1=0$, and
\begin{equation}\label{20130405:eq3}
\pi h(z)_0
=
\sum_{i\in J_0^f}a_i\otimes a^i
\,.
\end{equation}
Next, we take the homogeneous components of degree $\frac12$ in both sides 
of equation \eqref{20130405:eq6}:
$$
\pi q_{\frac12}+[\pi U(z)_{\frac32},(f+ze)\otimes1]=0\,.
$$
Recalling the expression \eqref{20130405:eq7} for $\pi q_{\frac12}$,
and using the fact that $v^k\!=\![f+ze,[f,v^k]z^{-1}]$,
we deduce that
\begin{equation}\label{20130405:eq8}
\pi U(z)_{\frac32}
=
\sum_{k\in J_\frac12} [f,v^k]z^{-1}\otimes[f,v_k]
\,.
\end{equation}
We then take the homogeneous components of degree $1$ in both sides of equation \eqref{20130405:eq6}:
$$
\pi h(z)_1=
\pi q_1+[\pi U(z)_2,(f+ze)\otimes1]\,.
$$
Recalling the expression \eqref{20130405:eq7} of $\pi q_1$,
using the obvious decomposition
$$
e=\frac12(f+ze)z^{-1}-\frac12(f-ze)z^{-1}\in\mf h_1\oplus\mf h^\perp_1\,,
$$
and the second equation in \eqref{20130405:eq5}, we get
\begin{equation}\label{20130405:eq9}
\begin{array}{l}
\displaystyle{
\vphantom{\Big(}
\pi h(z)_1
=
\frac1{4(x|x)}
(f+ze)z^{-1}
\otimes f
\,,} \\
\displaystyle{
\pi U(z)_2
=
-\frac1{4(x|x)}xz^{-1}\otimes f
\,.}
\end{array}
\end{equation}
Next, we take the terms of degree $\frac32$ in both sides of equation \eqref{20130405:eq6}:
$$
-[\partial,\pi U(z)_{\frac32}]+[\pi U(z)_{\frac52},(f+ze)\otimes1]
+[\pi U(z)_{\frac32},\pi q_0]=0
\,.
$$
Recalling the formulas \eqref{20130405:eq7} for $\pi q_0$ 
and \eqref{20130405:eq8} for $\pi U(z)_{\frac32}$,
we get
$$
[(f+ze)\otimes1,\pi U(z)_{\frac52}]
=
-\sum_{k\in J_\frac12} [f,v^k]z^{-1}\otimes\partial[f,v_k]
+\!\!
\sum_{i\in J_0^f,k\in J_\frac12}\!\!
[ [f,v^k], a^i ]z^{-1} \otimes a_i[f,v_k]
\,.
$$
Note that $[f,v^k]=[f+ze,v^k]$, and, since $a^i\in\mf g_0^f$, we also have,
by the Jacobi identity, $[ [f,v^k], a^i ]=-[f+ze,[a^i,v^k]]$.
Hence,
\begin{equation}\label{20130406:eq1}
\pi U(z)_{\frac52}
=
-\sum_{k\in J_\frac12} v^k z^{-1}\otimes\partial[f,v_k]
-\sum_{i\in J_0^f,k\in J_\frac12}
[a^i,v^k]z^{-1} \otimes a_i[f,v_k]
\,.
\end{equation}
Next, taking the terms of degree $2$ in both sides of equation \eqref{20130405:eq6}, we get:
$$
\begin{array}{l}
\displaystyle{
\pi h(z)_2+[(f+ze)\otimes1,\pi U(z)_3]
=
-[\partial,\pi U(z)_2]
+[\pi U(z)_2,\pi q_0]
} \\
\displaystyle{
+[\pi U(z)_{\frac32},\pi q_{\frac12}]
+\frac12[\pi U(z)_{\frac32},[\pi U(z)_{\frac32},(f+ze)\otimes1]]
\,.}
\end{array}
$$
Substituting the expressions 
\eqref{20130405:eq7} for $\pi q_0$ and $\pi q_{\frac12}$,
\eqref{20130405:eq8} for $\pi U(z)_{\frac32}$,
and \eqref{20130405:eq9} for $\pi U(z)_2$,
we get
$$
\begin{array}{l}
\displaystyle{
\pi h(z)_2+[(f+ze)\otimes1,\pi U(z)_3]
=
\frac1{4(x|x)} xz^{-1}\otimes \partial f
} \\
\displaystyle{
+\frac12
\sum_{h,k\in J_\frac12} [[f,v^h],v^k] z^{-1}\otimes[f,v_h][f,v_k]
\,.}
\end{array}
$$
Here we used the fact that, by Lemma \ref{20130315:lem2}(a), $[[f,v^k] ,f+ze]=-v^k$.
To find $\pi h(z)_2$ and $\pi U(z)_3$,
we find the components in 
$\mf h_2\otimes\mc V(\mf g^f)=\mf g_0^fz^{-1}\otimes\mc V(\mf g^f)$
and in $\mf h^\perp_2\otimes\mc V(\mf g^f)=\mb Fxz^{-1}\otimes\mc V(\mf g^f)$
in the RHS above.
By the definition \eqref{20130201:eq5} of the skew-symmetric form $\omega_+$,
we have the decomposition 
$$
[[f,v^h],v^k ]=[[f,v^h],v^k]^\sharp+\frac12\frac{\omega_+(v^k,v^h)}{(x|x)}x
\in\mf g_0^f\oplus\mb Fx
\,.
$$
Moreover, by the completeness relation \eqref{completeness}, we have
$$
\sum_{h,k\in J_\frac12} \omega_+(v^k,v^h) [f,v_h][f,v_k]
=\sum_{k\in J_\frac12} [f,v^k][f,v_k]\,,
$$
which is zero by skewsymmetry of the bilinear form $\omega_+$.
Therefore, by the first equation in \eqref{20130405:eq5}, we get
\begin{equation}\label{20130406:eq2}
\begin{array}{l}
\displaystyle{
\pi h(z)_2
=
\frac12
\sum_{h,k\in J_\frac12} [[f,v^h],v^k]^\sharp z^{-1}\otimes[f,v_h][f,v_k]
} \\
\displaystyle{
\pi U(z)_3
=
\frac1{16(x|x)} (f-ze)z^{-2}\otimes\partial f
\,.}
\end{array}
\end{equation}
It turns out that, in order to compute $\pi h(z)_3$, we do not need to compute $\pi U(z)_{\frac72}$.
Therefore, we look at the terms of degree $3$ in both sides of equation \eqref{20130405:eq6}.
We get
$$
\begin{array}{l}
\displaystyle{
\vphantom{\Big(}
\pi h(z)_3+[(f+ze)\otimes1,\pi U(z)_4]
=
[\pi U(z)_3,\partial]
+[\pi U(z)_3,\pi q_0]
+[\pi U(z)_{\frac52},\pi q_{\frac12}]
\,} \\
\displaystyle{
\vphantom{\Big(}
+[\pi U(z)_2,\pi q_1]
+\frac12[\pi U(z)_{\frac32},[\pi U(z)_{\frac32},\partial]]
+\frac12[\pi U(z)_{\frac32},[\pi U(z)_{\frac52},(f+ze)\otimes1]]
\,} \\
\displaystyle{
\vphantom{\Big(}
+\frac12[\pi U(z)_2,[\pi U(z)_2,(f+ze)\otimes1]]
+\frac12[\pi U(z)_{\frac52},[\pi U(z)_{\frac32},(f+ze)\otimes1]]
\,} \\
\displaystyle{
\vphantom{\Big(}
+\frac12[\pi U(z)_{\frac32},[\pi U(z)_{\frac32},\pi q_0]]
\,.}
\end{array}
$$
By equations \eqref{20130405:eq7}, \eqref{20130405:eq8},
\eqref{20130405:eq9}, \eqref{20130406:eq1}, and \eqref{20130406:eq2},
we get, after some manipulations using equation \eqref{20130315:eq5} and \eqref{20130315:eq6},
and the completeness relation \eqref{completeness},
$$
\begin{array}{l}
\displaystyle{
\vphantom{\Big(}
\pi h(z)_3+[(f+ze)\otimes1,\pi U(z)_4]
\,} \\
\displaystyle{
\vphantom{\Big(}
=
-\frac1{16(x|x)} (f-ze)z^{-2}\otimes\partial^2 f
+\frac1{32(x|x)^2}(f-3ze)z^{-2}\otimes f^2
} \\
\displaystyle{
\vphantom{\Big(}
-\frac1{4(x|x)} e z^{-1}\otimes \sum_{k\in J_\frac12}[f,v^k] \partial[f,v_k]
} \\
\displaystyle{
\vphantom{\Big(}
-\frac1{4(x|x)} e z^{-1}\otimes \sum_{i\in J_0^f,k\in J_\frac12}[f,[a^i,v^k]] a_i[f,v_k]
\,.}
\end{array}
$$
Taking the component of both sides in 
$\mf h_3\otimes\mc V(\mf g^f)=\mb F(f+ze)z^{-2}\otimes\mc V(\mf g^f)$,
i.e. replacing $f$ and $ze$ by $\frac12(f+ze)$ in the first factors,
we get the formula for $\pi h(z)_3$:
\begin{equation}\label{20130406:eq3}
\begin{array}{l}
\displaystyle{
\vphantom{\Big(}
\pi h(z)_3
=
(f+ze)z^{-2}\otimes
\Big(
-\frac1{32(x|x)^2} f^2
-\frac1{8(x|x)} \sum_{k\in J_\frac12}[f,v^k] \partial[f,v_k]
} \\
\displaystyle{
\vphantom{\Big(}
-\frac1{8(x|x)} \sum_{i\in J_0^f,k\in J_\frac12}[a^i,[f,v^k]] a_i[f,v_k]
\Big)
\,.}
\end{array}
\end{equation}

\subsection{First few equations of the hierarchies}\label{sec:6.2}

According to Theorem \ref{final},
for each non-zero element $a(z)\in Z(\mf h)$
there is an associated integrable bi-Hamiltonian hierarchy
corresponding to the Laurent series $\tint g(z)\in\quot{\mc W}{\partial\mc W}((z^{-1}))$
defined by \eqref{gz}.
By equation \eqref{20130404:eq4},
the center of $\mf h$ is spanned over $\mb F((z^{-1}))$
by $Z(\mf g_0^f)$ and the element $f+ze$.
(Note that $Z(\mf g_0^f)$ is 1-dimensional for $\mf g=\mf{sl}_n$, $n\geq3$,
and it is zero in all other cases.)
Hence, we will consider separately the following two choices:
\begin{enumerate}[(a)]
\item $a(z)=c\in Z(\mf g_0^f)$,
\item $a(z)=f+ze$.
\end{enumerate}

\subsubsection*{Case (a)}

Let $a(z)=c\in Z(\mf g_0^f)$.
By equation \eqref{gz} and the description (i)-(iii) of the subspaces $\mf h_i$, $i\in\frac12\mb Z$,
we obtain $\tint g(z)=\sum_{n\in\mb Z}\tint g_n z^{-n}$, where
\begin{equation}\label{20130406:eq5}
\tint g_n z^{-n}=
\tint (c\otimes1 | h(z)_{2n-1}+h(z)_{2n})
\,\,\text{ for all }\,\,
n\in\mb Z_+
\,.
\end{equation}
Equations \eqref{20130405:eq3}, \eqref{20130405:eq9} and \eqref{20130406:eq2}
give us the value of $\pi h(z)_n$ for $n=0,1,2$.
On the other hand, the quotient map 
$\pi:\,\mf g((z^{-1}))\times\mc V(\mf g_{\leq\frac12})\to\mf g((z^{-1}))\times\mc V(\mf g^f)$ 
defined in \eqref{20130406:eq4}
acts as the identity on the first factor,
and it is a differential algebra homomorphism on the second factor.
Therefore, it commutes with taking inner product $(\cdot\,|\,\cdot)$,
and with the $\tint$ sign.
Therefore, applying $\pi$ to both sides of equations \eqref{20130406:eq5}, we get
$$
\tint \pi g_n z^{-n}=
\tint (c\otimes1 | \pi h(z)_{2n-1}+\pi h(z)_{2n})
\,\,\text{ for all }\,\,
n\in\mb Z_+
\,,
$$
and in order to reconstruct $\tint g_n$ from this equation,
we just apply the inverse map $\pi^{-1}$ using Corollary \ref{20130402:cor1}.
Therefore,
$$
\tint g_0=\tint \phi(c)
\quad\text{and}\quad
\tint g_1=\frac12\sum_{k\in J_{\frac12}} \tint \psi([f,v_k])\psi([c,[f,v^k]])
\,.
$$
%
We can use these first two integrals of motion to write down the corresponding first two equations
of the hierarchy:
$\frac{dw}{dt_n}=\{{g_n}_\lambda w\}_{H,\rho}\big|_{\lambda=0}$, $n\in\mb Z_+$,
where $w$ is a generator of the $\mc W$-algebra.
They are ($a\in\mf g_0^f,\,u\in\mf g_{-\frac12}$):
\begin{equation}\label{eq:6.17}
\frac{d\phi(a)}{dt_{\tilde0}}=0
\,\,,\,\,\,\,
\frac{d\psi(u)}{dt_{\tilde0}}=\psi([c,u])
\,\,,\,\,\,\,
\frac{d L}{dt_{\tilde0}}=0,
\end{equation}
and
\begin{equation}\label{eq:6.18}
\begin{array}{l}
\displaystyle{
\frac{d\phi(a)}{dt_{\tilde1}}=0
\,\,,\,\,\,\,
\frac{d L}{dt_{\tilde1}}
=\sum_{k\in J_{\frac12}}\left(\psi([f,v_k])\psi([c,[f,v^k]])
\right)^\prime
\,,
}\\
\displaystyle{
\frac{d\psi(u)}{dt_{\tilde1}}=-\frac1{2(x|x)}\tilde{L}\psi([c,u])
+\sum_{i,j\in J_0^f}\phi(a_i)\phi(a_j)\psi([c,[a^i,[a^j,u]]])}\\
\displaystyle{
-\sum_{i\in J_0^f}\psi([c,[a^i,u]])\phi(a_i)'
-2\sum_{i\in J_0^f}\phi(a_i)\psi([c,[a^i,u]])'
+\psi([c,u])''
}
\end{array}
\end{equation}
where $\tilde{L}$ was defined in Corollary \ref{20130402:cor1}.
\begin{remark}\label{20130406:rem}
Note that
elements $\varphi(a)$, $a\in\mf g_0^f$, are central with respect to the $\lambda$-bracket 
$\{\cdot\,_\lambda\,\cdot\}_{K,\rho}$.
Therefore, $\frac{d\varphi(a)}{dt_{\tilde n}}=0$ for all $n\in\mb Z_+$.
\end{remark}

\subsubsection*{Case (b)}

Let $a(z)=f+ze\in Z(\mf h)$.
As before, we write the integrals of motion $\tint g_n$, $n\in\mb Z_+$,
in terms of the various components of $h(z)\in\mf h\otimes\mc V(\mf g_{\leq\frac12})$.
Recall that $\mf h_{2n}=\mf g_0^fz^{-n}$, which is orthogonal to $e$ and $f$ w.r.t. $(\cdot\,|\,\cdot)$.
Therefore the even components of $h(z)$ do not contribute to $\tint g(z)$.
Furthermore, 
we have $\mf h_{2n-1}=\mb F(f+ze)z^{-n}$,
so we can write $h(z)_{2n-1}=(f+ze)z^{-n}\otimes H_{2n-1}$,
for some $H_{2n-1}\in\mc V(\mf g_{\leq\frac12})$.
Since $(f+ze|f+ze)=4(x|x)z$, we thus  get that
$$
\tint g_n=
\tint 4(x|x) H_{2n+1}
\,\,\text{ for all }\,\,
n\in\mb Z_+
\,.
$$
%
Using the same trick as before, we obtain the values of $\tint g_n$ for $n=0,1$,
by the formulas \eqref{20130405:eq9} and \eqref{20130406:eq3}
for $\pi h(z)_1$ and $\pi h(z)_3$,
and applying at the end the map $\pi^{-1}$, using Corollary \ref{20130402:cor1}.
The results are as follows: $\tint g_0=\tint \tilde{L}$, and
$$
\begin{array}{l}
\displaystyle{
\tint g_1=
\int\Big(
-\frac1{8(x|x)}\tilde{L}^2
-\frac12\sum_{k\in J_1} \psi([f,v^k])\partial\psi([f,v_k])
} \\
\displaystyle{
-\frac12 \sum_{i\in J_0^f,k\in J_\frac12}
\varphi(a_i)\psi([a^i,[f,v^k]])\psi([f,v_k])
\Big)
\,.}
\end{array}
$$
%
The associated evolution equations are ($a\in\mf g_0^f,\,u\in\mf g_{-\frac12}$)
\begin{equation}\label{eq:6.19}
\frac{d\phi(a)}{dt_0}=0
\,,\,\,
\frac{d\psi(u)}{dt_0}=\psi(u)'
-\sum_{i\in J_0^f}\phi(a_i)\psi([a^i,u])
\,,\,\,
\frac{d L}{dt_0}=\widetilde L'
\,,
\end{equation}
and
\begin{equation}\label{eq:6.20}
\begin{array}{rcl}
\displaystyle{
\frac{d\phi(b)}{dt_1}
}
&=&0,
\\
\displaystyle{
\frac{d\psi(u)}{dt_1}
}
&=&
\displaystyle{
\psi(u)'''
-2\sum_{i\in J_0^f}\phi(a_i)\psi([a^i,u])''
-\sum_{i\in J_0^f}\phi(a_i)'\psi([a^i,u])'
}\\
&-&\displaystyle{
\sum_{i\in J_0^f}\!\!\left(
\phi(a_i)\psi([a^i,u])
\right)''
+\sum_{i,j\in J_0^f}\phi(a_i)\phi(a_j)\psi([a^i,[a^j,u])'}\\
&+&\displaystyle{
2\!\!\sum_{i,j\in J_0^f}\phi(a_j)\left(\phi(a_i)\psi([a^i[a^j,u]])\right)'
+\!\!\sum_{i,j\in J_0^f}
\phi(a_i)\psi([a^i[a^j,u]])\phi(a_j)'
}\\
&-&\displaystyle{
\frac3{4(x|x)}\widetilde L \psi(u)'
-\frac3{8(x|x)}\psi(u)\widetilde L'
+\frac1{4(x|x)}\widetilde L\sum_{i\in J_0^f}\phi(a_i)\psi([a^i,u])}\\
&-&\displaystyle{
\frac12\sum_{i\in J_0^f,k\in J_{\frac12}}\psi([a_i,u])\psi([a^i,[f,v^k]])\psi([f,v_k])
}\\
&-&\displaystyle{
\sum_{i,j,l\in J_0^f}\phi(a_i)\phi(a_j)\phi(a_l)\psi([a^i,[a^j,[a^l,u]]]),
}\\
\displaystyle{
\frac{d L}{dt_1}
}
&=&
\displaystyle{
\frac14\widetilde L'''
-\frac3{4(x|x)}\widetilde L\widetilde L'
+\frac32\sum_{k\in J_1}\psi([f,v_k])\psi([f,v^k])''
}\\
&-&\displaystyle
{\frac{3}{2}\sum_{i\in J_0^f,k\in J_{\frac12}}
\left(\phi(a_i)\psi([a^i,[f,v^k]])\psi([f,v_k])\right)'\,.
}
\end{array}
\end{equation}
Recall that all equations \eqref{eq:6.17}--\eqref{eq:6.20} are compatible and have common integrals
of motion due to Remark \ref{20130404:rem}

\begin{remark}\label{20130416:rem}
With respect to the $K$-Poisson structure all the functions $\varphi(a)$, $a\in\mf g^f_{0}$,
are central, and therefore they generate a Poisson vertex algebra ideal $J_K$.
This is clear from Table \ref{table2}.
Therefore, we can consider the PVA $\mc W/J_K$,
generated by the elements $\psi(u)$, $u\in\mf g_{-\frac12}$, and $L$.
The corresponding $\lambda$-brackets induced by $\{\cdot\,_\lambda\,\cdot\}_{K,\rho}$
are given by Table \ref{table2}:
$$
\{\psi(u)_\lambda\psi(u_1)\}=-\omega_-(u,u_1)
\,\,,\,\,\,\,
\{L_\lambda L\}=-4(x|x)\lambda
\,\,,\,\,\,\,
\{L_\lambda\psi(u)\}=0\,.
$$
As a result we obtain the following integrable Hamiltonian equations on the functions
$\psi(u)$, $u\in\mf g_{-\frac12}$, and $L$:
\begin{equation}\label{20130501:eq1}
\begin{array}{rcl}
\displaystyle{
\frac{d\psi(u)}{dt_1}
}
&=&
\displaystyle{
\psi(u)'''
-\frac3{4(x|x)} L \psi(u)'
-\frac3{8(x|x)}\psi(u) L'
}\\
&-&\displaystyle{
\frac12\sum_{i\in J_0^f,k\in J_{\frac12}}\psi([a_i,u])\psi([a^i,[f,v^k]])\psi([f,v_k])
\,,}\\
\displaystyle{
\frac{d L}{dt_1}
}
&=&
\displaystyle{
\frac14 L'''
-\frac3{4(x|x)} LL'
+\frac32\sum_{k\in J_1}\psi([f,v_k])\psi([f,v^k])''
\,.
}
\end{array}
\end{equation}
In the case of $\mf g=\mf{sl}_n$, $n\geq3$, when there exists a non-zero
central element $c\in\mf g_0^f$, we have additional equations
\begin{equation}\label{eq:minimal-reduced}
\begin{array}{l}
\displaystyle{
\frac{d\psi(u)}{dt_{\tilde{1}}}
=\psi([c,u])''-\frac{1}{2(x|x)}L\psi([c,u])
}\\
\displaystyle{
\frac{dL}{dt_{\tilde{1}}}
=\sum_{k\in J_{\frac12}}\left(
\psi([f,v_k])\psi([c,[f,v^k]])
\right)^\prime
\,.
}
\end{array}
\end{equation}
The $H$-Poisson structure does not induce a PVA $\lambda$-bracket on $\mc W/J_K$,
and in order to obtain the second Poisson structure on it
we have to apply Dirac's reduction, developed in \cite{DSKV13}.
\end{remark}

\begin{example}[see Example 3.20 in \cite{DSKV12}]\label{ex:w_3minimal}

Let $\mf g=\mf{sl}_3$. Then $f=E_{31}$ is a minimal nilpotent element which
is embedded in the $\mf{sl}_2$-triple $\{f,h=2x,e\}\subset\mf{sl}_3$, where
$h=E_{11}-E_{33}$ and $e=E_{13}$.
It is easy to check that $\mf g_0^f=\mb Fa$, where $a=E_{11}-2E_{22}+E_{33}$
and $\mf g_{-\frac12}=\mb Fu_1\oplus\mb Fu_2$, where $u_1=E_{21}$ and
$u_2=E_{32}$.
Furthermore, let us write $\mf g_{\frac12}=\mb Fv_1\oplus\mb Fv_2$, where
$v_1=E_{12}$ and $v_2=E_{23}$.
Then $v^1=\frac1{2(x|x)}v_2$ and $v^2=-\frac1{2(x|x)}v_1$.

Let us assume $s=e$ and denote $\phi=\phi(a)$, $\psi_+=\psi(u_1)$ and
$\psi_-=\psi(u_2)$. Then, by Theorem \ref{prop:minimal},
we get the following $\lambda$-bracket on the
generators of $\mc W$:
\begin{align*}
\{L_\lambda L\}_{z,\rho}&=(\partial+2\lambda)L-(x|x)\lambda^3+4(x|x)z\lambda,\\
\{L{}_\lambda \psi_{\pm}\}_{z,\rho}&=(\partial+\frac32\lambda)\psi_{\pm},\\
\{L{}_\lambda \phi\}_{z,\rho}&=(\partial+\lambda)\phi,\\
\{\psi_{\pm}{}_\lambda \psi_{\pm}\}_{z,\rho}&=0,\\
\{\psi_+{}_\lambda \psi_-\}_{z,\rho}&=
-L+\frac1{6(x|x)}\phi^2
-\frac12(\partial+2\lambda)\phi+2(x|x)\lambda^2-2(x|x)z,\\
\{\psi_{\pm}{}_\lambda \phi\}_{z,\rho}&=\pm3\psi_{\pm},\\
\{\phi{}_\lambda \phi\}_{z,\rho}&=12(x|x)\lambda\,.
\end{align*}
According to the above discussion, choosing $a(z)=a$ we get $\tint g_0=\tint \phi$
and $\tint g_1=\frac{3}{2(x|x)}\tint \psi_+\psi_-$.
The corresponding Hamiltonian equations are
$$
\frac{d\phi}{dt_0}=0
\,\,,\,\,\,\,
\frac{d\psi_{\pm}}{dt_0}=\mp3\psi_{\pm}
\,\,,\,\,\,\,
\frac{d L}{dt_0}=0
\,,
$$
and
$$
\begin{array}{l}
\displaystyle{
\frac{d\phi}{dt_1}=0
\,\,,\,\,\,\,
\frac{d L}{dt_1}=\frac{3}{(x|x)}\left(\psi_+\psi_-\right)^\prime
\,,
}\\
\displaystyle{
\frac{d\psi_{\pm}}{dt_1}=
\pm\frac3{2(x|x)} L\psi_{\pm}
\mp\frac{1}{4(x|x)^2}\phi^2\psi_{\pm}
-\frac3{4(x|x)}\psi_{\pm}\phi'
-\frac3{2(x|x)}\phi\psi_{\pm}'
\mp3\psi_{\pm}''
\,.
}
\end{array}
$$
On the other hand, if we choose $a(z)=f+ze$ we get $\tint g_0=\tint \widetilde L$ and
$$
\tint g_1=\int\left(\frac1{4(x|x)}\left(\psi_+\partial\psi_--\psi_-\partial\psi_+\right)
-\frac{1}{8(x|x)^2}\phi\psi_+\psi_-
-\frac{1}{8(x|x)}\widetilde L^2\right)\,.
$$
The corresponding Hamiltonian equations are
$$
\frac{d\phi}{dt_0}=0
\,\,,\,\,\,\,
\frac{d L}{dt_0}=\partial\widetilde L
\,\,,\,\,\,\,
\frac{d\psi_{\pm}}{dt_0}=\psi_{\pm}'
\pm\frac{1}{4(x|x)}\phi\psi_{\pm},
$$
and
$$
\begin{array}{rcl}
\displaystyle{
\frac{d\phi}{dt_1}
}
&=&0,
\\
\displaystyle{
\frac{d\psi_\pm}{dt_1}
}
&=&
\displaystyle{
\psi_\pm'''
\pm\frac{3}{4(x|x)}\phi\psi_{\pm}''
\pm\frac{1}{4(x|x)}\psi_{\pm}\phi''
\pm\frac{3}{4(x|x)}\phi' \psi_{\pm}'
+\frac3{16(x|x)^2}\phi^2\psi_{\pm}'
}\\
&-&
\displaystyle{
\frac3{4(x|x)}\widetilde L \psi_\pm'
-\frac3{8(x|x)}\psi_\pm \widetilde L'
+\frac{3}{16(x|x)^2}\psi_{\pm}\phi \phi'
}\\
&\mp&
\displaystyle{
\frac3{16(x|x)^2}\phi\psi_\pm\widetilde L
\pm\frac{3}{8(x|x)^2}\psi_{\pm}\psi_+\psi_-
\pm\frac{1}{64(x|x)^3}\phi^3\psi_{\pm}\,,
}\\
\displaystyle{
\frac{d L}{dt_1}
}
&=&
\displaystyle{
\frac14\widetilde L'''-\frac3{4(x|x)}\widetilde L\widetilde L'+
\frac3{4(x|x)}\left(\psi_+\psi_-''-\psi_-\psi_+''\right)
-\frac{3}{8(x|x)^2}(\phi\psi_+\psi_-)'\,.
}
\end{array}
$$
As pointed out in Remark \ref{20130416:rem},
$\phi$ generates a central Poisson vertex algebra ideal
with respect to the $K$-Poisson structure,
and we can consider the induced PVA structure 
on the quotient algebra $\quot{\mc W}{(\phi)}$,
where $(\phi)$ denotes the differential algebra ideal generated by $\phi$.
The corresponding $\lambda$-brackets are
\begin{equation}\label{20130501:eq2}
{\{{\psi_{\pm}}_\lambda\psi_{\pm}\}}=0
\,\,,\,\,\,\,
{\{{\psi_+}_\lambda\psi_-\}}=-\frac3{2c}
\,\,,\,\,\,\,
{\{L_\lambda L\}}=\frac3c\lambda
\,\,,\,\,\,\,
{\{L_\lambda\psi_{\pm}\}}=0\,.
\end{equation}
Hence, for an arbitrary non-zero constant $c$, the following Hamiltonian equations on the functions $\psi_+$, $\psi_-$ and $L$ are integrable and compatible:
\begin{equation}\label{20130501:eq3-2}
\begin{array}{l}
\displaystyle{
\frac{d\psi_\pm}{dt_{\tilde1}}
=\pm cL\psi_{\pm}\mp3\psi_{\pm}^{\prime\prime}
\,,
}\\
\displaystyle{
\frac{dL}{dt_{\tilde1}}
=\frac c4\left(\psi_+\psi_-\right)^\prime
\,,}
\end{array}
\end{equation}
and
\begin{equation}\label{20130501:eq3}
\begin{array}{l}
\displaystyle{
\frac{d\psi_\pm}{dt_1}=
\psi_\pm'''
-c L\psi_\pm'
-\frac c2\psi_\pm L'
\pm\frac23c^2\psi_{\pm}\psi_+\psi_-\,,
}\\
\displaystyle{
\frac{d L}{dt_1}=
\frac14L'''-c L L'+
c\left(\psi_+\psi_-''-\psi_-\psi_+''\right)
\,.
}
\end{array}
\end{equation}
\end{example}

\section{Generalized Drinfeld-Sokolov hierarchies for a short nilpotent}
\label{sec:5b}

\subsection{Preliminary computations}

For simplicity we will assume, as in the case of a minimal nilpotent element, that $s=e$.
In this case, $f+ze$ is a semisimple element of $\mf g((z^{-1}))$,
and we can describe explicitly the direct sum decomposition \eqref{eq:dech}.
\begin{lemma}\label{lem:short_dec}
We have the decomposition
$\mf g((z^{-1}))=\mf h\oplus\mf h^{\perp}$, where
\begin{equation}\label{eq:h_short}
\mf h=\Ker\ad(f+ze)=
\mf g_0^f((z^{-1}))\oplus((\ad f)^2-2z)\mf g_1((z^{-1}))\,,
\end{equation}
and
\begin{equation}\label{eq:hperp_short}
\mf h^\perp=\im\ad(f+ze)=
[f,\mf g_1]((z^{-1}))\oplus((\ad f)^2+2z)\mf g_1((z^{-1}))\,.
\end{equation}
\end{lemma}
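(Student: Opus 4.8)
The plan is to reduce the computation to $\mf{sl}_2$-representation theory for the triple $\{f,2x,e\}$, extract a handful of explicit bracket identities, and then close the argument with a dimension count. First I would decompose $\mf g=\mf g_0^f\oplus V$, where $V$ is the sum of all the $3$-dimensional irreducible $\mf{sl}_2$-submodules (the trivial submodules fill up exactly $\mf g_0^e=\mf g_0^f$, since $\mf g_{\ge 2}=0$). If $v\in\mf g_1$, then $v$ is a highest weight vector of weight $2$ in a $3$-dimensional submodule, so $[e,v]=0$ and $[f,[f,[f,v]]]=0$, while the standard $\mf{sl}_2$ computation (using $[e,f]=2x$) gives $[e,[f,v]]=2v$ and $[e,[f,[f,v]]]=2[f,v]$. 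I will also use that $(\ad f)^2\colon\mf g_1\to\mf g_{-1}$ is bijective, that $\mf g_0^f=\mf g_0^e$, and the decomposition $\mf g_0=\mf g_0^f\oplus[f,\mf g_1]$ from \eqref{20130322:eq4}.

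From these identities one computes $\ad(f+ze)$ on the generators of the two candidate subspaces: it kills $\mf g_0^f((z^{-1}))$, and for $v\in\mf g_1$ it sends $v\mapsto[f,v]$, $[f,v]\mapsto(\ad f)^2v+2zv=((\ad f)^2+2z)v$, and $(\ad f)^2v\mapsto 2z[f,v]$. In particular $\ad(f+ze)\big(((\ad f)^2-2z)v\big)=0$, so the right-hand side of \eqref{eq:h_short} is contained in $\Ker\ad(f+ze)$; and the right-hand side of \eqref{eq:hperp_short} is contained in $\im\ad(f+ze)$, since $\ad(f+ze)$ maps $\mf g_1((z^{-1}))$ onto $[f,\mf g_1]((z^{-1}))$ and maps $[f,\mf g_1]((z^{-1}))$ onto $((\ad f)^2+2z)\mf g_1((z^{-1}))$. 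Each of the two sums is direct: inspecting $\mf g_0$- and $\mf g_1$-components, the two summands in each meet only in $0$, and the map $((\ad f)^2\mp 2z)\colon\mf g_1((z^{-1}))\to\mf g((z^{-1}))$ is injective because its $\mf g_1$-component is the invertible scalar $\mp2z$.

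Finally I would observe that the two candidate subspaces together span $\mf g((z^{-1}))$: they contain $\mf g_0^f((z^{-1}))$ and $[f,\mf g_1]((z^{-1}))$, hence $\mf g_0((z^{-1}))$ by \eqref{20130322:eq4}; and they contain $((\ad f)^2-2z)v$ and $((\ad f)^2+2z)v$, hence $v$ and $(\ad f)^2v$ for every $v\in\mf g_1((z^{-1}))$, hence $\mf g_1((z^{-1}))$ and, by bijectivity of $(\ad f)^2$, also $\mf g_{-1}((z^{-1}))$. Their dimensions add up to $\dim\mf g_0^f+\dim[f,\mf g_1]+2\dim\mf g_1=\dim\mf g$ (using $\dim\mf g_{-1}=\dim\mf g_1$), so $\mf g((z^{-1}))$ is their direct sum. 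Since one of them lies in $\Ker\ad(f+ze)$ and the other in $\im\ad(f+ze)$, this forces $\mf g((z^{-1}))=\Ker\ad(f+ze)+\im\ad(f+ze)$, so by Lemma~\ref{20130506:lem} the element $f+ze$ is semisimple, the sum is direct, and the dimension count then identifies $\Ker\ad(f+ze)$ with the right-hand side of \eqref{eq:h_short} and $\im\ad(f+ze)$ with the right-hand side of \eqref{eq:hperp_short}. (Alternatively one may simply invoke the semisimplicity of $f+ze$, as in Section~\ref{sec:4}, and then use Lemma~\ref{20130506:lem}.) The only real obstacle is keeping the $\mf{sl}_2$ normalizations and the dimension bookkeeping straight; the rest is a short direct calculation.
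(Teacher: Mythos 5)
Your proof is correct and follows essentially the same route as the paper: verify by direct $\mf{sl}_2$ computation that $\mf g_0^f((z^{-1}))\oplus((\ad f)^2-2z)\mf g_1((z^{-1}))\subset\Ker\ad(f+ze)$ and $[f,\mf g_1]((z^{-1}))\oplus((\ad f)^2+2z)\mf g_1((z^{-1}))\subset\im\ad(f+ze)$, then conclude equality from the fact that the two candidate subspaces are complementary in $\mf g((z^{-1}))$. You merely spell out the spanning/dimension argument and the appeal to Lemma~\ref{20130506:lem} that the paper compresses into ``the equalities immediately follow from the inclusions.''
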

\begin{proof}
Since $\mf g_0^f=\mf g_0^e$, clearly $\mf g_0^f\subset\ker\ad(f+ze)$.
Moreover, for $v\in\mf g_1$, we have
$$
\ad(f+ze)((\ad f)^2+2z)(v)=2z[f,v]+z[e,[f,[f,v]]]=0\,.
$$
Hence, 
\begin{equation}\label{20130416:eq1}
\mf g_0^f((z^{-1}))\oplus((\ad f)^2-2z)\mf g_1((z^{-1}))\subset\ker\ad(f+ze)\,.
\end{equation}
On the other hand, we have
$\ad(f+ze)\mf g_1=[f,\mf g_1]$, and $\ad(f+ze)[f,\mf g_1]=((\ad f)^2+2z)\mf g_1$.
Therefore, 
\begin{equation}\label{20130416:eq2}
[f,\mf g_1]((z^{-1}))\oplus((\ad f)^2+2z)\mf g_1((z^{-1}))\subset\im\ad(f+ze)\,.
\end{equation}
The equalities \eqref{eq:h_short} and \eqref{eq:hperp_short}
immediately follow from the inclusions \eqref{20130416:eq1} and \eqref{20130416:eq2}.
\end{proof}
\begin{lemma}\label{20130417:lem2}
The decomposition $\mf g((z^{-1}))=\mf h\oplus\mf h^{\perp}$
is a $\mb Z/2\mb Z$-grading of the Lie algebra $\mf g((z^{-1}))$,
namely
$[\mf h,\mf h]\subset\mf h$,
$[\mf h,\mf h^\perp]\subset\mf h^\perp$,
and
$[\mf h^\perp,\mf h^\perp]\subset\mf h$.
\end{lemma}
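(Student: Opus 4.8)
The plan is to establish the three inclusions one at a time; the first two are formal and only the last requires a computation. For $[\mf h,\mf h]\subset\mf h$ it suffices to note that, by definition, $\mf h=\Ker\ad(f+ze)$ is the centralizer of the element $f+ze$ in $\mf g((z^{-1}))$, hence a Lie subalgebra by the Jacobi identity. For $[\mf h,\mf h^\perp]\subset\mf h^\perp$ I would invoke the general fact that the image of an inner derivation is stable under the adjoint action of the centralizer of the corresponding element: if $a\in\mf h$ and $w=[f+ze,y]\in\mf h^\perp=\im\ad(f+ze)$, then by the Jacobi identity and $[a,f+ze]=0$ we get $[a,w]=[f+ze,[a,y]]\in\mf h^\perp$.

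The content of the lemma is thus the inclusion $[\mf h^\perp,\mf h^\perp]\subset\mf h$, and the key step is to compute $\ad(f+ze)$ explicitly on $\mf h^\perp$. By \eqref{eq:hperp_short}, $\mf h^\perp$ is spanned over $\mb F((z^{-1}))$ by the elements $[f,v]$ and $((\ad f)^2+2z)v$, with $v$ ranging over $\mf g_1$. For $v\in\mf g_1$ one has $[e,v]=0$ (as $\mf g_2=0$) and $[e,[f,v]]=[[e,f],v]+[f,[e,v]]=[2x,v]=2v$, whence
\begin{equation*}
\ad(f+ze)[f,v]=[f,[f,v]]+z[e,[f,v]]=((\ad f)^2+2z)v\,.
\end{equation*}
Similarly, $(\ad f)^3v\in\mf g_{-2}=0$ and $[e,(\ad f)^2v]=[[e,f],[f,v]]+[f,[e,[f,v]]]=[f,2v]=2[f,v]$ (using $[2x,[f,v]]=0$), so that $\ad(f+ze)\big(((\ad f)^2+2z)v\big)=z[e,(\ad f)^2v]+2z[f,v]=4z[f,v]$. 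Combining the two displays, $(\ad(f+ze))^2=4z\,\id$ on $\mf h^\perp$; this is the only genuine computation in the proof and is the main point, the rest being elementary $\mf{sl}_2$-bookkeeping and linear algebra.

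Finally I would conclude by a linear-algebra argument. Since $\mf g((z^{-1}))=\mf h\oplus\mf h^\perp$ (Lemma \ref{lem:short_dec}), and $\ad(f+ze)$ vanishes on $\mf h$ and squares to $4z\,\id$ on $\mf h^\perp$, the operator $\ad(f+ze)$ is annihilated by $\lambda^3-4z\lambda=\lambda(\lambda-2z^{\frac12})(\lambda+2z^{\frac12})$, a polynomial with pairwise distinct roots over $K:=\mb F((z^{-\frac12}))$. Hence, after extension of scalars from $\mb F((z^{-1}))$ to $K$, the operator $\ad(f+ze)$ is diagonalizable on $\mf g\otimes_{\mb F}K$ with eigenvalues among $0,\pm2z^{\frac12}$: the $0$-eigenspace is $\mf h\otimes K$, and $\mf h^\perp\otimes K=V_+\oplus V_-$, where $V_\pm$ is the $(\pm2z^{\frac12})$-eigenspace. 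Since $\ad(f+ze)$ is a derivation, $[V_+,V_+]$ and $[V_-,V_-]$ lie in the $(\pm4z^{\frac12})$-eigenspaces, both of which are zero because $\pm4z^{\frac12}$ are not eigenvalues, while $[V_+,V_-]$ lies in the $0$-eigenspace $\mf h\otimes K$; therefore $[\mf h^\perp\otimes K,\mf h^\perp\otimes K]\subset\mf h\otimes K$. Intersecting with $\mf g((z^{-1}))$ gives $[\mf h^\perp,\mf h^\perp]\subset\mf h$, which completes the proof; the argument parallels that of Lemma \ref{20130417:lem1} once the displayed identities are in place.
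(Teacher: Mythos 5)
Your proof is correct, but it takes a different route from the paper's main argument for the only nontrivial inclusion $[\mf h^\perp,\mf h^\perp]\subset\mf h$. The paper handles the first two inclusions exactly as you do (centralizer plus Jacobi identity), and then proves the third by brute force: using the explicit description of $\mf h^\perp$ from Lemma \ref{lem:short_dec} it reduces to the three containments $[[f,v],[f,v_1]]\in\mf g_0^f$, $[[f,v],((\ad f)^2+2z)v_1]\in((\ad f)^2-2z)\mf g_1$, and $[((\ad f)^2+2z)v,((\ad f)^2+2z)v_1]\subset\mf g_0^f$, each verified by a chain of Jacobi-identity manipulations (the first being Lemma \ref{20130417:lem1}(c)). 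Your argument is instead the spectral one: you compute that $\ad(f+ze)$ interchanges the two summands of $\mf h^\perp$ up to the factor $4z$, deduce $(\ad(f+ze))^2=4z\,\id$ on $\mf h^\perp$, diagonalize over $\mb F((z^{-1/2}))$ with eigenvalues $0,\pm2z^{1/2}$, and use that a derivation adds eigenvalues. This is precisely the alternative proof the authors sketch in Remark \ref{20130417:rem2} (where they invoke conjugacy of $f+ze$ to a multiple of $x$ over a field extension to get the eigenvalues, quoted there as $0,\pm\sqrt z$ up to normalization); your version is actually more self-contained, since you verify the annihilating polynomial by a two-line computation rather than appealing to $\mf{sl}_2$-conjugacy, and you spell out the descent step $(\mf h\otimes K)\cap\mf g((z^{-1}))=\mf h$. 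What the spectral route buys is brevity and the structural explanation for why the $\mb Z/2\mb Z$-grading exists at all; what the paper's computational route buys is the explicit form of the brackets (e.g.\ that $[[f,v],((\ad f)^2+2z)v_1]=((\ad f)^2-2z)[v,[f,v_1]]$), which is then reused in the subsequent Drinfeld--Sokolov computations of Section \ref{sec:5b}. Both proofs are complete and correct.
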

\begin{proof}
The first two inclusions are immediate, by the definitions of $\mf h$ and $\mf h^\perp$
and the Jacobi identity.
We are left to prove that $[\mf h^\perp,\mf h^\perp]\subset\mf h$.
By Lemma \ref{lem:short_dec}, it is enough to prove that,
for $v,v_1\in\mf g_1$, we have
\begin{enumerate}[(i)]
\item
$[[f,v],[f,v_1]]\in\mf g_0^f$,
\item
$[[f,v],((\ad f)^2+2z)v_1]\in((\ad f)^2-2z)\mf g_1$,
\item
$[((\ad f)^2+2z)v,((\ad f)^2+2z)v_1]\subset\mf g_0^f$.
\end{enumerate}
Part (i) is stated in Lemma \ref{20130417:lem1}.
By the Jacobi identity and part (i) we also have
$$
\begin{array}{l}
\displaystyle{
\vphantom{\Big(}
[[f,v],((\ad f)^2+2z)v_1]
=
[[f,v],[f,[f,v_1]]]+2z[[f,v],v_1]
=
[f,[v,[f,[f,v_1]]]]
} \\
\displaystyle{
\vphantom{\Big(}
-2z[v,[f,v_1]]
=
[f,[f,[v,[f,v_1]]]]
-2z[v,[f,v_1]]
=
((\ad f)^2-2z)[v,[f,v_1]]
\,,}
\end{array}
$$
proving part (ii).
We are left to prove part (iii). Again by the Jacobi identity, we have
$$
\begin{array}{l}
\displaystyle{
\vphantom{\Big(}
[((\ad f)^2+2z)v,((\ad f)^2+2z)v_1]
=
2z[[f,[f,v]],v_1]
+2z[v,[f,[f,v_1]]]
} \\
\displaystyle{
\vphantom{\Big(}
=
2z[f,[[f,v],v_1]]
+2z[f,[v,[f,v_1]]
-4z[[f,v],[f,v_1]]
=
-4z[[f,v],[f,v_1]]
\,,}
\end{array}
$$
and this lies in $\mf g_0^f$ by part (i).
\end{proof}
\begin{remark}\label{20130417:rem2}
As an alternative proof of Lemma \ref{20130417:lem2},
we can just note that the element $f+ze$ is a semisimple element of $\mf{sl}_2((z^{-1}))$,
and therefore it is conjugate to a multiple of $x$
over some field extension of $\mb F((z^{-1}))$.
Its eigenvalues on $\mf g((z^{-1}))$ are $0,\pm\sqrt{z}$,
and $\mf h$ is the eigenspace with eigenvalue $0$,
while $\mf h^\perp$ is the sum of eigenspaces with eigenvalues $\pm\sqrt{z}$.
The claim of the lemma follows immediately from this observation.
\end{remark}

Since $d=1$, the degree of $z$ equals $-2$.
It is then easy to find each piece $\mf h_i$ and $\mf h^\perp_i$, $i\in\mb Z$, 
of the decompositions \eqref{20130404:eq6}
(note that, since $f$ is an even nilpotent element, the degrees are all integers).
We have
\begin{enumerate}[(i)]
\item
$\mf h_i=\mf g_0^fz^{-\frac{i}{2}}$ for $i\in2\mb Z$,
\item
$\mf h_i=((\ad f)^2-2z)\mf g_1 z^{-\frac{i+1}{2}}$ for $i\in2\mb Z-1$,
\item
$\mf h^\perp_i=[f,\mf g_1]z^{-\frac{i}{2}}$ for $i\in2\mb Z$,
\item
$\mf h^\perp_i=((\ad f)^2+2z)\mf g_1 z^{-\frac{i+1}{2}}$ for $i\in2\mb Z-1$.
\end{enumerate}

We let, as in Section \ref{sec:3.5},
$\{a_j\}_{j\in J_0^f}$ and $\{a^j\}_{j\in J_0^f}$ be dual bases of $\mf g_0^f$
with respect to the inner product $(\cdot\,|\,\cdot)$,
and $\{u_k\}_{k\in J_1}$ and $\{u^k\}_{k\in J_1}$
be dual bases of $\mf g_{-1}$ and $\mf g_1$ respectively.
Note also that, then,
$\{[e,u_k]\}_{k\in J_1}$ and $\big\{-\frac12[f,u^k]\big\}_{k\in J_1}$
are dual bases of $[f,\mf g_1]\subset\mf g_0$.
Hence, the element $q\in\mf g_{\geq0}\otimes\mc V(\mf g_{\leq0})$ defined
in \eqref{20130404:eq1} is the following
\begin{equation}\label{20130404:eq7b}
q=
\sum_{j\in J_0^f}a^j\otimes a_j
-\frac12\sum_{k\in J_1}[f,u^k]\otimes [e,u_k]
+\sum_{k\in J_1}u^k\otimes u_k
\,.
\end{equation}

As in Section \ref{sec:5a},
we will solve equation \eqref{L0_dsr}
for $U(z)\in\mf h^\perp_{\geq1}\otimes\mc V(\mf g_{\leq0})$ 
and $h(z)\in\mf h_{\geq0}\otimes\mc V(\mf g_{\leq0})$
degree by degree, applying 
the quotient map $\pi:\,\mc V(\mf g_{\leq0})\to\mc V(\mf g^f)$.
Indeed, by Corollary \ref{20130402:cor2} we have maps
$$
\pi:\,\mf g((z^{-1}))\otimes\mc V(\mf g_{\leq0})\twoheadrightarrow\mf g((z^{-1}))\otimes\mc V(\mf g^f)
\,\,,\,\,\,\,
\pi^{-1}:\,\mf g((z^{-1}))\mc V(\mf g^f)\stackrel{\sim}{\to}\mf g((z^{-1}))\otimes\mc W\,,
$$
acting as identity on the first factors,
which restrict to bijections of $\mf g((z^{-1}))\otimes\mc W$.
Applying $\pi$ to both sides of equation \eqref{L0_dsr},
we get
\begin{equation}\label{20130405:eq6b}
\begin{array}{l}
\displaystyle{
e^{\ad(\pi U(z)_1+ \pi U(z)_2+\dots)}
(\partial+(f+ze)\otimes1+\pi q_0+\pi q_1)
} \\
\displaystyle{
=\partial+(f+ze)\otimes1+\pi h(z)_0+\pi h(z)_1+\dots
\,,}
\end{array}
\end{equation}
and, by the formula \eqref{20130404:eq7b} for $q$ we get
$\pi q=\pi q_0+\pi q_1$, where
\begin{equation}\label{20130405:eq7b}
\pi q_{0}=
\sum_{i\in J_0^f}a_i\otimes a^i
\,\,,\,\,\,\,
\pi q_{1}=
q_{1}=
\sum_{k\in J_1}u^k\otimes u_k
\,.
\end{equation}

%
Equating the homogeneous components of degree $0$
in both sides of equation \eqref{20130405:eq6b}, we get
the equation
$$
\pi h(z)_0+[(f+ze)\otimes1,\pi U(z)_1]=\pi q_0\,,
$$
which implies 
$$
\pi h(z)_0
=\pi q_0=
\sum_{i\in J_0^f}a_i\otimes a^i
\,,\,\,
\pi U(z)_1=0
\,.
$$
%
Next, equating the homogeneous components of degree $1$ in both sides 
of equation \eqref{20130405:eq6b}, we get
$$
\pi h(z)_1+[(f+ze)\otimes1,\pi U(z)_2]=\pi q_1\,.
$$
Recalling the expression \eqref{20130405:eq7b} for $\pi q_1$,
and using the obvious decomposition
\begin{equation}\label{20130417:eq1}
u^k=-\frac14((\ad f)^2-2z)u^k z^{-1}+\frac14[f+ze,[f,u^k]] z^{-1}
\,\in\mf h\oplus\mf h^\perp
\,,
\end{equation}
we deduce that
\begin{equation}\label{20130405:eq8b}
\pi h(z)_1
=
-\frac14
\sum_{k\in J_1} 
((\ad f)^2-2z)u^k z^{-1}
\otimes u_k
\,,\,\,
\pi U(z)_2
=
\frac14
\sum_{k\in J_1} 
[f,u^k] z^{-1}
\otimes u_k
\,.
\end{equation}
Similarly, 
equating the homogeneous components of degree $2$ in both sides 
of equation \eqref{20130405:eq6b}, we get
$$
\pi h(z)_2+[(f+ze)\otimes1,\pi U(z)_3]=
-\partial\pi U(z)_2+[\pi U(z)_2,\pi q_0]
\,,
$$
which implies, after a straightforward computation,
$$
\pi h(z)_2
=0
\,,\,\,
\pi U(z)_3
=
\frac1{16}
\sum_{k\in J_1}
((\ad f)^2+2z)u^k z^{-2}
\otimes
\Big(
-\partial u_k
+\sum_{j\in J_0^f}a_j[a^i,u_k]
\Big)
\,.
$$
%
Finally, we compute $\pi h(z)_3$ by equating 
the homogeneous components of degree $2$ in both sides 
of equation \eqref{20130405:eq6b}.
We get
$$
\begin{array}{l}
\displaystyle{
\vphantom{\Big(}
\pi h(z)_3+[(f+ze)\otimes1,\pi U(z)_4]=
-\partial\pi U(z)_3
+[\pi U(z)_3,\pi q_0]
} \\
\displaystyle{
+[\pi U(z)_2,\pi q_1]
+\frac12[\pi U(z)_2,[\pi U(z)_2,(f+ze)\otimes1]]
\,.}
\end{array}
$$
Note that $\partial\pi U(z)_3\in\mf h^\perp\otimes\mc V(\mf g^f)$.
Moreover, for $j\in J_0^f$, $\ad a^j$ commutes with $(\ad f)^2+2z$.
Therefore, 
$[\pi U(z)_3,\pi q_0]\in\mf h^\perp\otimes\mc V(\mf g^f)$.
It follows that the only 
contributions to $\pi h(z)_3$
are the components in $((\ad f)^2-2z)\mf g_1z^{-2}\otimes\mc V(\mf g^f)$ of
$$
\begin{array}{l}
\displaystyle{
[\pi U(z)_2,\pi q_1]
+\frac12[\pi U(z)_2,[\pi U(z)_2,(f+ze)\otimes1]]
} \\
\displaystyle{
=
\frac14
\!
\sum_{h,k\in J_1} 
\!
[[f,u^h],u^k] z^{-1}
\otimes u_h u_k
-\frac1{32}
\!
\sum_{h,k\in J_1} 
\!
\big[
[f,u^h] ,((\ad f)^2+2z)u^k
\big]
z^{-2}
\otimes u_h u_k
.
}
\end{array}
$$
By Lemma \ref{20130417:lem2}, we have
$\big[[f,u^h] ,((\ad f)^2+2z)u^k\big]\in\mf h$,
and according to Lemma \ref{20130417:lem2} 
and the decomposition \eqref{20130417:eq1},
the component in $\mf h$ of $[[f,u^h],u^k]$ can be
expressed in the following two alternative ways:
$$
\pi_{\mf h}[[f,u^h],u^k]
=
\frac14[[f,u^h],
((\ad f)^2+2z)u^k] z^{-1}
=
-\frac14((\ad f)^2-2z)[[f,u^h],u^k] z^{-1}
\,.
$$
Therefore,
\begin{equation}\label{20130406:eq3b}
\begin{array}{l}
\displaystyle{
\pi h(z)_3
=
\frac1{32}
\sum_{h,k\in J_1} 
[[f,u^h],
((\ad f)^2+2z)u^k] z^{-2}
\otimes u_h u_k
} \\
\displaystyle{
=
-\frac1{32}
\sum_{h,k\in J_1} 
((\ad f)^2-2z)[[f,u^h],u^k] z^{-2}
\otimes u_h u_k
} \\
\displaystyle{
=
\frac1{32}
\sum_{k\in J_1} 
((\ad f)^2-2z)u^k z^{-2}
\otimes
\sum_{h\in J_1} 
[[f,u^h],u_k]u_h
\,.
}
\end{array}
\end{equation}
In the last identity we used the completeness relations \eqref{complete2}.

\subsection{First few equations of the hierarchies}

We make the choice $a(z)=f+ze$.
According to Theorem \ref{final},
there is an associated integrable bi-Hamiltonian hierarchy
corresponding to the Laurent series $\tint g(z)\in\quot{\mc W}{\partial\mc W}((z^{-1}))$
defined by 
$$
\tint g(z)=\tint \big((f+ze)\otimes1|h(z)\big)\,.
$$
We write the integrals of motion $\tint g_n$, $n\in\mb Z_+$,
in terms of the various components of $h(z)\in\mf h\otimes\mc V(\mf g_{\leq0})$.
Recall that $\mf h_{2n}=\mf g_0^fz^{-n}$, which is orthogonal to $e$ and $f$ w.r.t. $(\cdot\,|\,\cdot)$.
Therefore the even components of $h(z)$ do not contribute to $\tint g(z)$.
Furthermore, 
we have $\mf h_{2n-1}=((\ad f)^2-2z)\mf g_1 z^{-n}$,
so we can write 
$$
h(z)_{2n-1}=
\sum_{k\in J_1}
((\ad f)^2-2z)u^k z^{-n}\otimes H_{2n-1,k}
\,,
$$
for some $H_{2n-1,k}\in\mc V(\mf g_{\leq0})$.
For $n=1,2$ we get, from 
the formulas \eqref{20130405:eq8b} and \eqref{20130406:eq3b}
for $\pi h(z)_1$ and $\pi h(z)_3$, that
\begin{equation}\label{20130417:eq5}
\pi H_{1,k}
=
-\frac14 u_k
\,\,,\,\,\,\,
\pi H_{3,k}
=
\frac1{32}
\sum_{h\in J_1} 
[[f,u^h],u_k]u_h
\,,
\end{equation}
for $k\in J_1$.
Note also that
$$
(f+ze|((\ad f)^2-2z)u^k)
=-2z(f|u^k)+z(e|(\ad f)^2u^k)
=-4z(f|u^k)
\,.
$$
Therefore, $\tint g(z)=\sum_{n\in\mb Z_+}\tint g_nz^{-n}$, where
$$
\tint g_n
=
-4\sum_{k\in J_1}
\tint (f|u^k) H_{2n+1,k}
\,.
$$
%
As in Section \ref{sec:6.2}, 
we obtain the values of $\tint g_n$ for $n=0,1$,
from the formulas \eqref{20130417:eq5} for $\pi H_{1,k}$ and $\pi H_{3,k}$,
and applying at the end the map $\pi^{-1}$, using Corollary \ref{20130402:cor2}.
The results are as follows: 
$$
\tint g_0
=
\tint \psi(f)
\,,
$$
and
$$
\tint g_1
=
\frac1{8}
\int 
\sum_{k\in J_1} 
\psi([f,[f,u^k]])\psi(u_k)
\,.
$$
%
Recalling Table \ref{table4}, we get the corresponding hierarchy of Hamiltonian equations
$\frac{dw}{dt_n}=\{{g_n}_\lambda w\}_{H,\rho}\big|_{\lambda=0}$.
They are ($a\in\mf g_0^f$, $u\in\mf g_{-1}$):
$$
\frac{d a}{dt_0}=0
\,\,,\,\,\,\,
\frac{d\psi(u)}{dt_0}=\psi(u)'
-\sum_{i\in J_0^f}\psi([a^i,u])a_i
\,,
$$
and
$$
\begin{array}{l}
\displaystyle{
\frac{d a}{dt_1}=0
\,\,,\,\,\,\,
\frac{d\psi(u)}{dt_1}=
\frac14\psi(u)'''
-\frac14\sum_{i\in J_0^f}\psi([a^i,u])a_i''
-\frac34\sum_{i\in J_0^f}a_i' \psi([a^i,u])'
}\\
\displaystyle{
-\frac34\sum_{i\in J_0^f}a_i\psi([a^i,u])''
+\frac12\sum_{i,j\in J_0^f}a_j\psi([a^i,[a^j,u]]) a_i'}\\
\displaystyle{
+\frac34\sum_{i,j\in J_0^f}a_ia_j \psi([a^i,[a^j,u]])'
+\frac14\sum_{i,j\in J_0^f}a_j\psi([a^j,[a^i,u]]) a_i'}\\
\displaystyle{
-\frac34\sum_{k\in J_1}\psi([[f,u^k],u]) \psi(u_k)'
-\frac14\sum_{i,j,l\in J_0^f} a_i a_j a_l\psi([a^i,[a^j,[a^l,u]]])
}\\
\displaystyle{
+\sum_{i\in J_0^f,k\in J_1}\left(
\psi(u_k\circ[a_i,u])-\psi([a_i,u_k]\circ u)
\right)a_i\psi([f,[f,u^k]])\,.
}
\end{array}
$$
%
\begin{remark}\label{20130416:remb}
As noted in Remark \ref{20130416:remb}, 
we easily see from Table \ref{table4} that
all the functions $a\in\mf g^f_{0}$ are central, 
and therefore they generate a Poisson vertex algebra ideal $J_K$.
Therefore, we can consider the PVA $\mc W/J_K$,
generated by elements $\psi(u)$, $u\in\mf g_{-1}$.
The corresponding $\lambda$-brackets 
induced by $\{\cdot\,_\lambda\,\cdot\}_{K,\rho}$
are given by equation \eqref{20130320:eq2b}:
\begin{equation}\label{20130417:eq6}
\begin{array}{l}
\displaystyle{
\{\psi(u)_\lambda\psi(u_1)\}
=
(e|u\circ u_1)\lambda
\,.
}
\end{array}
\end{equation}
As a result we obtain the following integrable equations on the functions
$\psi(u)$, $u\in\mf g_{-1}$:
$$
\frac{d\psi(u)}{dt_0}=\psi(u)'
\,,
$$
and
$$
\frac{d\psi(u)}{dt_1}=
\frac14\psi(u)'''
+\frac34\sum_{h,k\in J_1}
(u^k*u^h
|u)
\psi(u_h) \psi(u_k)'
\,,
$$
where $*$ is the Jordan product on $\mf g_1$ defined in \eqref{20130320:eq1}.
The last equation is, after a rescaling of the variables, the Svinolupov equation
associated to this Jordan product, \cite{Svi91}.
Hence, we proved that the Svinolupov equation is Hamiltonian
with Poisson structure given by \eqref{20130417:eq6}.
The second Poisson structure for this equation can be obtained via 
the Dirac reduction, which will be explained in our forthcoming publication \cite{DSKV13}.
\end{remark}

\subsection{Another example: a generic choice for \texorpdfstring{$s$}{s} when
\texorpdfstring{$\mf g=\mf{sl}_{2n}$}{g=sl(2n)}}

In the case of $\mf g=\mf{sl}_{2n}$ there is a unique conjugacy class of short
nilpotent elements.
Let us write $X\in\mf{sl}_{2n}$ as a block matrix
$$
X=
\begin{pmatrix}
A & B\\
C & D
\end{pmatrix}\,,
$$
where $A,B,C,D\in\mf{gl}_n$ and $\tr A=-\tr D$. Then, the element
$f=\begin{pmatrix}0&0\\ \id_n & 0\end{pmatrix}$ is a short nilpotent element.
Indeed, it
is contained in the following $\mf{sl}_2$-triple: $\{f,h=2x,e\}\subset\mf{sl}_{2n}$,
where
$$
h=
\begin{pmatrix}
\id_n & 0\\
0 & -\id_n
\end{pmatrix}
\quad\text{and}\quad
e=
\begin{pmatrix}
0 & \id_n\\
0 & 0
\end{pmatrix}\,,
$$
and we have the following $\ad x$-eigenspace decomposition:
$\mf{sl}_{2n}=\mf g_{-1}\oplus\mf g_0\oplus\mf g_{1}$,
where
\begin{align*}
\mf g_{-1}= & \left\{
\begin{pmatrix}
0&0\\
A&0
\end{pmatrix}
\mid A\in\mf{gl}_n\right\},
\quad
\mf g_0=\left\{
\begin{pmatrix}
A&0\\
0&B
\end{pmatrix}
\mid A,B\in\mf{gl}_n, \tr A=-\tr B\right\}\,,\\
& \qquad\qquad\qquad\mf g_{1}=\left\{
\begin{pmatrix}
0&A\\
0&0
\end{pmatrix}
\mid A\in\mf{gl}_n\right\}\,.
\end{align*}
Hence, the adjoint orbit of $f$ consists of all short nilpotent elements of $\mf{sl}_{2n}$. 

We note that the direct sum decomposition \eqref{20130322:eq4} is, in this case,
$\mf g_0=\mf g_0^f\oplus[f,\mf g_1]$, where
$$
\mf g_0^f=\left\{
\begin{pmatrix}
A&0\\
0&A
\end{pmatrix}
\mid A\in\mf{sl}_n\right\}
\quad\text{and}\quad
[f,\mf g_1]=\left\{
\begin{pmatrix}
A&0\\
0&-A
\end{pmatrix}
\mid A\in\mf{gl}_n\right\}\,.
\\
$$
Let us consider
$$
s=\begin{pmatrix}
0&S\\
0&0
\end{pmatrix}\,,
$$
where $S\in\mf{gl}_n$ is a semisimple element with non-zero
eigenvalues. Then, $f+zs\in\mf g((z^{-1}))$ is semisimple and we can describe
explicitly the decomposition \eqref{eq:dech}, generalizing Lemma
\ref{lem:short_dec} (which corresponds to the choice $s=e$).
\begin{lemma}\label{lem:short_dec_sl_2n}
We have the decomposition
$\mf g((z^{-1}))=\mf h\oplus\mf h^{\perp}$, where
\begin{equation}\label{eq:h_short_sl_2n}
\mf h=\Ker\ad(f+zs)=
\left\{
\begin{pmatrix}
A & zSB\\
B & A
\end{pmatrix}
\Big|
\begin{array}{l}
A,B\in\mf{gl}_n((z^{-1})),\\
\left[S,A\right]=\left[S,B\right]=\tr A=0
\end{array}
\right\}\,,
\end{equation}
and
\begin{equation}
\begin{split}\label{eq:hperp_short_sl_2n}
&\mf h^\perp=\im\ad(f+zs)\\
&=
\left\{\!
\begin{pmatrix}
A+C&-zSB+D\\
B+E&-A+F
\end{pmatrix}
\Bigg|
\begin{array}{l}
A,B,C,D,E,F\in\mf{gl}_n((z^{-1})),\\
\left[S,A\right]=\left[S,B\right]=0,\\
C,D,E,F\in\im\{\ad S:\mf{gl}_n\to\mf{gl}_n\}((z^{-1}))
\end{array}\!\!
\right\}\,.
\end{split}
\end{equation}
\end{lemma}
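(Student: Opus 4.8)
The plan is as follows. Since $f+zs=\bigl(\begin{smallmatrix}0&zS\\ \id_n&0\end{smallmatrix}\bigr)$ has square $(f+zs)^2=z\bigl(\begin{smallmatrix}S&0\\0&S\end{smallmatrix}\bigr)$, which is semisimple and \emph{invertible} in $\mf{gl}_{2n}((z^{-1}))$ (here we use both that $z$ is invertible and that $S$ has non-zero eigenvalues), the element $f+zs$ is itself semisimple, so $\ad(f+zs)$ is a semisimple endomorphism of $\mf g((z^{-1}))$. Hence Lemma \ref{20130506:lem} already provides the direct sum decomposition $\mf g((z^{-1}))=\mf h\oplus\mf h^\perp$, and the only task is to identify $\mf h=\Ker\ad(f+zs)$ and $\mf h^\perp=\im\ad(f+zs)$ explicitly. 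Following the pattern of Lemmas \ref{minimal_semisimple} and \ref{lem:short_dec}, I will show that the right-hand side of \eqref{eq:h_short_sl_2n} is contained in $\mf h$ and that the right-hand side of \eqref{eq:hperp_short_sl_2n} is contained in $\mf h^\perp$, and then deduce both equalities at once from the facts that these two subspaces span $\mf g((z^{-1}))$ and meet only in $0$ (the latter being automatic since $\mf h\cap\mf h^\perp=0$).

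The computational input is the block form of $\ad(f+zs)$: for $X=\bigl(\begin{smallmatrix}A&B\\ C&D\end{smallmatrix}\bigr)$ one has
$$
[f+zs,X]=\begin{pmatrix}zSC-B & z(SD-AS)\\ A-D & B-zCS\end{pmatrix}.
$$
Setting the right-hand side to zero and using that $z$ is invertible is an immediate linear computation giving $A=D$, $[S,A]=0$, $[S,C]=0$ and $B=zSC=zCS$, while the $\mf{sl}_{2n}$-condition $\tr A+\tr D=0$ forces $\tr A=0$; this is precisely \eqref{eq:h_short_sl_2n} after renaming $C$ to $B$. (In fact this computes $\mf h$ on the nose, so no complement argument is needed for it.)

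The real work is \eqref{eq:hperp_short_sl_2n}. Write $V$ for the subspace on its right-hand side. To prove $V\subseteq\im\ad(f+zs)$, I will exhibit a preimage of a general $Y=\bigl(\begin{smallmatrix}A+C&-zSB+D\\ B+E&-A+F\end{smallmatrix}\bigr)\in V$ (with $A,B\in\Ker\ad S$ and $C,D,E,F\in\im\ad S$): one looks for $X=\bigl(\begin{smallmatrix}\mc A&\mc B\\ \mc C&\mc D\end{smallmatrix}\bigr)$ with $\mc D=\mc A-B-E$, $\mc B=zS\mc C-A-C$, and with the $\im\ad S$-components of $\mc A$ and $\mc C$ chosen so that $[S,\mc A]=z^{-1}D+SE$ and $z[S,\mc C]=C+F$; this is solvable because $\ad S$ is invertible on $\im\ad S$ and $\im\ad S$ is stable under left multiplication by $S$, and the residual freedom in the $\Ker\ad S$-components of $\mc A,\mc C$ covers the trace constraint $\tr\mc A+\tr\mc D=0$. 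For the spanning statement, I take an arbitrary $\bigl(\begin{smallmatrix}P&Q\\ R&T\end{smallmatrix}\bigr)\in\mf{sl}_{2n}((z^{-1}))$, decompose every block along $\mf{gl}_n((z^{-1}))=\Ker(\ad S)((z^{-1}))\oplus\im(\ad S)((z^{-1}))$, and write it as an element of the right-hand side of \eqref{eq:h_short_sl_2n} plus an element of $V$; the resulting linear system is solved using invertibility of $S$ on $\Ker\ad S$, and the only possible trace obstruction cancels because $\tr P+\tr T=0$. Then $V\subseteq\mf h^\perp$, $\mf h\cap V=0$ and $\mf h+V=\mf g((z^{-1}))$ together with $\mf g((z^{-1}))=\mf h\oplus\mf h^\perp$ force $V=\mf h^\perp$.

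The main obstacle is the last paragraph: keeping the $\Ker\ad S$/$\im\ad S$ bookkeeping consistent across all four $\mf{gl}_n$-blocks --- in particular tracking exactly where the hypothesis that $S$ is invertible (not merely semisimple) enters --- and making sure the $\mf{sl}_{2n}$ trace constraint matches up on both sides of the spanning decomposition. The semisimplicity of $f+zs$ and the identification of $\mf h$ are routine.
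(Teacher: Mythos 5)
Your proposal is correct and follows essentially the same route as the paper: the same block formula for $\ad(f+zs)$, explicit preimages to show the right-hand side of \eqref{eq:hperp_short_sl_2n} lies in the image, and the decomposition $\mf{gl}_n=\Ker(\ad S)\oplus\im(\ad S)$ (with $S$ invertible) to get spanning and hence equality. The only harmless variation is that you derive semisimplicity of $f+zs$ upfront from $(f+zs)^2=z\,\diag(S,S)$ and invoke Lemma \ref{20130506:lem}, whereas the paper lets the direct-sum property fall out of the two inclusions.
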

\begin{proof}
Let $X=\begin{pmatrix}A&B\\C&D\end{pmatrix}\in\mf{sl}_{2n}((z^{-1}))$. We have
$$
[f+zs,X]=
\begin{pmatrix}
zSC-B& z(SD-AS)\\
A-D&B-zCS
\end{pmatrix}\,,
$$
from which follows that $\mf h$ given in \eqref{eq:h_short_sl_2n} is contained
in $\ker\ad(f+zs)$.
Moreover, if $X=\begin{pmatrix}A&-zSB\\B&-A\end{pmatrix}$, with $[S,A]=[S,B]=0$,
then
$$
X=\left[f+zs,\begin{pmatrix}\frac12B&-\frac12A\\
\frac1{2z}S^{-1}A&-\frac12B\end{pmatrix}\right]\,,
$$
and, if $X=\begin{pmatrix}C&D\\E&F\end{pmatrix}$, with
$C,D,E,F\in\im\{\ad S:\mf{gl}_n\to\mf{gl}_n\}((z^{-1}))$,
then
$$
X=\left[f+zs,
\begin{pmatrix}E+(\ad S)^{-1}(Dz^{-1}+ES) & F+(\ad S)^{-1}(C+F)S\\
(\ad S)^{-1}(C+F)z^{-1} & (\ad S)^{-1}(Dz^{-1}+ES)\end{pmatrix}\right]\,,
$$
from which follows that $\mf h^{\perp}$ given in \eqref{eq:hperp_short_sl_2n}
is contained in $\im\ad(f+zs)$.
Since, by assumption, $\ad S:\mf{gl}_n\to\mf{gl}_n$ is semisimple,
the equalities \eqref{eq:h_short_sl_2n} and \eqref{eq:hperp_short_sl_2n}
follow.
\end{proof}
By Theorem \ref{final} we can construct a generalized Drinfeld-Sokolov hierarchy of bi-Hamiltonian equations.
However the general formula of the integrals of motion and of the corresponding integrable hierarchy
is very complicated. For $n=1$ one gets the well known KdV hierarchy. In the next example we will
treat in detail the case corresponding to $n=2$.
\begin{example}
Let $\mf g=\mf{sl}_4$ be the Lie algebra of $4\times4$ traceless matrices,
and let us assume $S=\diag(s_1,s_2)$ is a diagonal matrix with distinct eigenvalues (the case $s_1=s_2$
was treated in the previous sections, since in this case $s$ becomes a scalar multiple of $e$).
A basis of $\mf g_0^f$ is given by the following matrices:
\begin{align*}
&A_{11}=E_{11}-E_{22}+E_{33}-E_{44},
\qquad A_{12}=E_{12}+E_{34}\,,\\
&A_{21}=E_{21}+E_{43},
\qquad E_{31},\qquad E_{32},\qquad E_{41},\qquad E_{42}\,.
\end{align*}
By Theorem \ref{20120727:thm1b}, $\mc W$ is the algebra of differential polynomials with
generators $A_{11}$, $A_{12}$, $A_{21}$ and $\psi(E_{ij})$, for $i=3,4$ and $j=1,2$, where
$\psi:\mf g_{-1}\to\mc W\{2\}$ is the map defined by \eqref{psiu}.
The $\lambda$-bracket among generators of $\mc W$ can be computed using Table \ref{table4}.

In order to construct the generalized Drinfeld-Sokolov integrable hierarchy we
need to compute $U(z)\in\mf g((z^{-1}))\otimes\mc V(\mf g_{\leq0})$ and
$h(z)=h(z)_0+h(z)_1+h(z)_2+\ldots\in\mf g((z^{-1}))\otimes\mc V(\mf g_{\leq0})$
satisfying \eqref{L0_dsr}. The first terms of the series $h(z)$ are
(we let $c=(E_{ij}|E_{ji})$, for any $i,j=1,\ldots,4$):
\begin{align}
\begin{split}\label{20130605:eq1}
&h(z)_0=A_{11}\otimes \frac{A_{11}}{4c}\,,
\qquad
h(z)_1=\\
&(E_{13}s_1+E_{32}z^{-1})\otimes\frac1{8c^2s_1(s_1-s_2)}
\left((3s_1+s_2)A_{12}A_{21}-4c(s_1-s_2)\psi(E_{31})\right)\\
&+(E_{24}s_2+E_{42}z^{-1})\otimes\frac1{8c^2s_2(s_2-s_1)}
\left((s_1+3s_2)A_{12}A_{21}-4c(s_1-s_2)\psi(E_{42})\right)\,,\\
&h(z)_2=A_{11}z^{-1}\otimes\frac{(s_1+s_2)}{4c^3(s_1-s_2)^2}
\left(cA_{12}'A_{21}-cA_{12}A_{21}'-A_{11}A_{12}A_{21}\right)\\
&\qquad\qquad\qquad\quad
+\frac{1}{2c^2(s_1-s_2)}\left(A_{12}\psi(E_{41})+A_{21}\psi(E_{32})
\right)\,.
\end{split}
\end{align}
By Theorem \ref{final} we get an integrable hierarchy of bi-Hamiltonian equations
for any $0\neq a(z)\in Z(\mf h)$, where $\mf h$ is defined in \eqref{eq:h_short_sl_2n},
and the integrals of motion are the coefficients of the power series 
$\tint g(z)=\tint (a(z)\otimes1|h(z))=\tint g_0+g_1z^{-1}+\ldots$, whose first terms 
can be computed using \eqref{20130605:eq1}.
We consider the following three choices of $a(z)$:
\begin{enumerate}[(a)]
\item $a(z)=f+zs$;
\item $a(z)=s^{-1}+ez$;
\item $a(z)=A_{11}$.
\end{enumerate}

In case (a) we get
$$
\tint g_0=\int\psi(f)+\frac1{4c}A_{12}A_{21}\,.
$$
The corresponding system of Hamiltonian equations is
\begin{align*}
&\frac{d A_{11}}{dt_0}=0\,,
\qquad
\frac{d A_{12}}{dt_0}=A_{12}'-\frac{A_{11}A_{12}}{2c}\,,
\qquad
\frac{d A_{21}}{dt_0}=A_{21}'-\frac{A_{11}A_{21}}{2c}\,,\\
&\frac{d \psi(E_{31})}{dt_0}=\psi(E_{31})'\,,
\qquad
\frac{d \psi(E_{32})}{dt_0}=\psi(E_{32})'-\frac{A_{11}\psi(E_{32})}{2c}\,,\\
&\frac{d \psi(E_{41})}{dt_0}=\psi(E_{41})'+\frac{A_{11}\psi(E_{41})}{2c}\,,
\qquad
\frac{d \psi(E_{42})}{dt_0}=\psi(E_{42})'\,.
\end{align*}

In case (b) we get
$$
\tint g_0=\int\psi(s^{-1})-\frac{s_1+s_2}{4cs_1s_2}A_{12}A_{21}\,.
$$
The corresponding system of Hamiltonian equations is
\begin{align*}
&\frac{d A_{11}}{dt_0}=0\,,
\qquad
\frac{d A_{12}}{dt_0}
=\frac{s_1+s_2}{4cs_1s_2}(A_{11}A_{12}-2cA_{12}')
-\frac{s_1-s_2}{s_1s_2}\psi(E_{32})\,,
\\
&\frac{d A_{21}}{dt_0}
=\frac{s_1+s_2}{4cs_1s_2}(-A_{11}A_{21}-2cA_{21}')
+\frac{s_1-s_2}{s_1s_2}\psi(E_{41})\,,
\\
&\frac{d \psi(E_{31})}{dt_0}
=\frac{\psi(E_{31})'}{s_1}
+\frac{3(s_1-s_2)}{8cs_1s_2}(A_{12}A_{21})'
+\frac{3s_1+s_2}{4cs_1s_2}\left(A_{21}\psi(E_{32})-A_{12}\psi(E_{41})
\right)\,,
\\
&\frac{d \psi(E_{32})}{dt_0}
=\frac{s_1+s_2}{4cs_1s_2}
\left(2c\psi(E_{32})'-A_{11}\psi(E_{32})\right)
+\frac{A_{12}}{c}\left(\frac{\psi(E_{31})}{s_1}-\frac{\psi(E_{42})}{s_2}\right)\\
&\qquad\qquad
+\frac{s_1-s_2}{16cs_1s_2}\left(4cA_{12}''-3A_{11}A_{12}'-A_{11}'A_{12}
+A_{11}^2A_{12}+4A_{12}^2A_{21}\right)\,,
\\
&\frac{d \psi(E_{41})}{dt_0}=\frac{s_1+s_2}{4cs_1s_2}
\left(2c\psi(E_{41})'+A_{11}\psi(E_{41})\right)
+\frac{A_{21}}{c}\left(\frac{\psi(E_{41})}{s_2}-\frac{\psi(E_{31})}{s_1}\right)\\
&\qquad\qquad
-\frac{s_1-s_2}{16cs_1s_2}\left(4cA_{21}''+3A_{11}A_{21}'+A_{11}'A_{21}
+A_{11}^2A_{21}+4A_{12}A_{21}^2\right)\,,
\\
&\frac{d \psi(E_{42})}{dt_0}
=\frac{\psi(E_{42})'}{s_2}
-\frac{3(s_1-s_2)}{8cs_1s_2}(A_{12}A_{21})'
+\frac{s_1+3s_2}{4cs_1s_2}\left(A_{12}\psi(E_{41})-A_{21}\psi(E_{32})
\right)\,.
\end{align*}

Finally, in case (c) we get $\tint g_0=\tint A_{11}$ and
$$
\tint g_1\!\!=\!\int\!\! \frac{(s_1+s_2)}{c^2(s_1-s_2)^2}
\!\left(2cA_{12}'A_{21}\!-\!A_{11}A_{12}A_{21}\right)
+\frac{2}{c(s_1-s_2)}\!\left(A_{12}\psi(E_{41})
\!+\!A_{21}\psi(E_{32})\right).
$$
The system of Hamiltonian equations corresponding to $\tint g_0$ is
\begin{gather*}
\frac{d A_{11}}{dt_0}=0\,,
\qquad
\frac{d A_{12}}{dt_0}=2A_{12}\,,
\qquad
\frac{d A_{21}}{dt_0}=-2A_{21}\,,\\
\frac{d \psi(E_{31})}{dt_0}=0\,,
\quad
\frac{d \psi(E_{32})}{dt_0}=2\psi(E_{32})\,,
\quad
\frac{d \psi(E_{41})}{dt_0}=-2\psi(E_{41})\,,
\quad
\frac{d \psi(E_{42})}{dt_0}=0\,.
\end{gather*}
The system of Hamiltonian equations corresponding to $\tint g_1$ has a much
more complicated expression.
However, since $A_{11}\in\mf g_0^s$, we can see from Table \ref{table4}
that it is a central element for the $K$ Poisson structure. Hence it generates
a central PVA ideal $J_K$. Then, we can consider the quotient PVA $\quot{\mc W}{J_K}$
and the corresponding reduced Hamiltonian equations.
The reduced Hamiltonian equation corresponding to $\tint g_1$ becomes
\begin{align*}
\frac{d A_{12}}{dt_1}&
=\frac{s_1+s_2}{c^2(s_1-s_2)^2}\left(
4c^2A_{12}''-2A_{12}^2A_{21}\right)\\
&+\frac{2}{c(s_1-s_2)}\left(2c\psi(E_{32})'-A_{12}(\psi(E_{31})-\psi(E_{42}))
\right)\,,
\\
\frac{d A_{21}}{dt_1}&
=\frac{s_1+s_2}{c^2(s_1-s_2)^2}
\left(-4c^2A_{21}''+2A_{12}A_{21}^2\right)\\
&+\frac{2}{c(s_1-s_2)}\left(2c\psi(E_{41})'+A_{21}(\psi(E_{31})-\psi(E_{42}))
\right)\,,
\\
\frac{d\psi(E_{31})}{dt_1}&
=\frac{1}{c^2(s_1-s_2)}
\left(cA_{12}\psi(E_{41})'+cA_{21}\psi(E_{32})'
+cA_{12}A_{21}''-cA_{12}''A_{21}\right)\\
&+\frac{2s_1}{c(s_1-s_2)^2}\left(A_{12}'\psi(E_{41})+A_{21}'\psi(E_{32})\right)\,,
\\
\frac{d\psi(E_{32})}{dt_1}&
=\frac{s_1+s_2}{c^2(s_1-s_2)^2}
\left(2cA_{12}'(\psi(E_{42})-\psi(E_{31}))-2A_{12}A_{21}\psi(E_{32})\right)\\
&
+\frac{1}{8c^3(s_1-s_2)}\left(
-8c^3A_{12}'''
+6cA_{12}^2A_{21}'+16c^2A_{12}'\psi(f)+8c^2A_{12}\psi(f)'
\right.\\
&\left.-6cA_{12}A_{12}'A_{21}
+16c^2\psi(E_{32})(\psi(E_{42})-\psi(E_{31}))
\right)\,,
\\
\frac{d\psi(E_{41})}{dt_1}&
=\frac{s_1+s_2}{c^2(s_1-s_2)^2}
\left(2cA_{21}'(\psi(E_{42})-\psi(E_{31}))+2A_{12}A_{21}\psi(E_{41})\right)\\
&
+\frac{1}{8c^3(s_1-s_2)}
\left(-8c^3A_{21}'''
+6cA_{12}'A_{21}^2+16c^2A_{21}'\psi(f)+8c^2A_{21}\psi(f)'\right.\\
&\left.-6cA_{12}A_{21}A_{21}'
+16c^2\psi(E_{41})(\psi(E_{42})-\psi(E_{31}))
\right)\,,
\\
\frac{d\psi(E_{42})}{dt_1}&
=\frac{1}{c^2(s_1-s_2)}
\left(cA_{12}\psi(E_{41})'+cA_{21}\psi(E_{32})'+cA_{12}''A_{21}-cA_{12}A_{21}''\right)\\
&
-\frac{2s_2}{c(s_1-s_2)^2}\left(A_{12}'\psi(E_{41})+A_{21}'\psi(E_{32})\right)\,.
\end{align*}
\end{example}

\section{Generalized Drinfeld-Sokolov hierarchies for a minimal nilpotent
and a choice of a maximal isotropic subspace}\label{sec:8}

\subsection{The classical \texorpdfstring{$\mc W$}{W}-algebra $\mc W(\mf l)$}
\label{sec:8.1}

In the previous publication \cite{DSKV12} we considered a slightly more general
construction of the classical $\mc W$-algebras,
associated to the nilpotent element $f$ and a choice of an isotropic subspace
$\mf l\subset\mf g_{\frac12}$,
with respect to the skew-symmetric bilinear form $\omega_+$ defined in \eqref{20130201:eq5}.
The construction for $\mf l=0$ is described in Section \ref{sec:2.2}.

In the present section we consider the case when $f$ is a minimal nilpotent element,
and $\mf l\subset\mf g_{\frac12}$ is a maximal isotropic subspace,
that is $\mf l=\mf l^{\perp\omega_+}$.
We review the construction of the $\mc W$-algebra in this case,
and we study the associated generalized Drinfeld-Sokolov 
integrable bi-Hamiltonian hierarchies.

We fix throughout the section a maximal isotropic subspace $\mf l\subset\mf g_{\frac12}$,
and a maximal isotropic subspace $\mf l'\subset\mf g_{\frac12}$ complementary to $\mf l$.
Let $\{v_k\}_{k\in L}$ be a basis of $\mf l$,
and let $\{v^k\}_{k\in L}$ be the $\omega_+$-dual basis of $\mf l'$: 
\begin{equation}\label{20130507:eq3}
\omega_+(v_h,v^k)=\delta_{h,k}
\,.
\end{equation}
Then, clearly, a basis for $\mf g_{\frac12}$ is
$$
\{v_k\}_{k\in J_{\frac12}}
=\{v_k\}_{k\in L}\cup\{v^k\}_{k\in L}
\,,
$$
and the $\omega_+$-dual basis, again of $\mf g_{\frac12}$, is
$$
\{v^k\}_{k\in J_{\frac12}}
=\{v^k\}_{k\in L}\cup\{-v_k\}_{k\in L}\,.
$$

Using the same notation as in \cite{DSKV12}, we consider the direct sum decomposition
$\mf g=\mf n\oplus\mf p$,
where
$$
\mf n=\mf l\oplus\mf g_{1}
\,\,\text{ and }\,\,
\mf p=\mf l'\oplus\mf g_{\leq0}
\,.
$$
The orthogonal complement to $\mf n$ w.r.t. $(\cdot\,|\,\cdot)$ is 
$$
\mf n^\perp=[f,\mf l]\oplus\mf g_{\geq0}
\,.
$$
Note that, since $\mf l$ is isotropic, we have $[\mf n,\mf n]=0$,
thus we can choose $s$ to be any homogeneous 
(with respect to the decomposition \eqref{dec}) element $s\in\mf n$.

The \emph{classical} $\mc W$-\emph{algebra} $\mc W(\mf l)$ is, by definition,
the differential algebra
$$
\mc W(\mf l)
=\big\{g\in\mc V(\mf p)\,\big|\,a\,^\rho_\lambda\,g=0\,\text{ for all }a\in\mf n\}\,,
$$
endowed with the following PVA $\lambda$-bracket
$$
\{g_\lambda h\}_{z,\rho}^{\mf l}=\rho_{\mf l}\{g_\lambda h\}_z,
\qquad g,h\in\mc W\,,
$$
where the $\lambda$-bracket $\{\cdot_{\lambda}\cdot\}_z$ is defined in \eqref{lambda}
and $\rho_{\mf l}:\,\mc V(\mf g)\twoheadrightarrow\mc V(\mf p)$ is
the differential algebra homomorphism defined on generators by
$$
\rho_\mf l(a)=\pi_{\mf p}(a)+(f| a),
\qquad a\in\mf g\,,
$$
where $\pi_{\mf p}:\mf g\to\mf p$ denotes the projection with kernel $\mf n$.

It is proved in \cite{DSKV12} that, for $z=0$,
the $\mc W$-algebra $\mc W(\mf l)$ associated to $\mf l$
is isomorphic to the $\mc W$-algebra $\mc W$
(associated to the choice $\mf l=0$).
Moreover, for $s=e$, we get a PVA isomorphism $\mc W\stackrel{\sim}{\longrightarrow}\mc W(\mf l)$ for arbitrary $z$.
We can describe explicitly the map
$\mc W\stackrel{\sim}{\longrightarrow}\mc W(\mf l)$,
considered as a differential algebra isomorphism.
It is given by the restriction to $\mc W\subset\mc V(\mf g_{\leq\frac12})$ of the differential
algebra homomorphism $\pi_{\mf p}:\mc V(\mf g_{\leq\frac12})\to\mc V(\mf p)$,
extending the projection map $\pi_{\mf p}:\mf g_{\leq\frac12}\to\mf p$
(with kernel $\mf l$).
It follows by this observation and Theorem \ref{20120727:thm1}
that $\mc W(\mf l)$
is the algebra of differential polynomials with the following generators:
the energy momentum $L_{\mf l}$, defined as
(cf. equation \eqref{virL})
$$
L_{\mf l}=f+x'+\frac12\sum_{j\in J_0}a_j a^j+\sum_{k\in L}[f,v_k]v^k\,,
$$
and elements of conformal weight $1$ and $\frac32$,
given by the bijective maps
(cf. equations \eqref{phi} and \eqref{psi})
$\varphi_{\mf l}:\,\mf g_{0}^f\to\mc W(\mf l)\{1\}$, given by
$$
\varphi_{\mf l}(a)=a+\frac12\sum_{k\in L}\pi_{\mf p}([a,v_k])v^k
\,,
$$
and $\psi_{\mf l}:\,\mf g_{-\frac12}\to\mc W(\mf l)\{\frac32\}$, given by
$$
\psi_{\mf l}(u)=
u+\frac13\sum_{h,k\in L}\pi_{\mf p}([[u,v_h],v_k])v^hv^k
+\sum_{k\in L}[u,v_k]v^k+\partial\pi_{\mf p}[e,u]\,.
$$

Consider the quotient map $\pi_{\mf l}:\,\mf p\to\mf g^f$ 
with kernel $[e,\mf g_{\leq\frac12}]\cap\mf p=\mb Fx\oplus\mf l'$,
and extend it to a surjective differential algebra homomorphism
$\pi_{\mf l}:\,\mc V(\mf p)\twoheadrightarrow\mc V(\mf g^f)$.
We have an analogue of Corollary \ref{20130402:cor1}:
\begin{corollary}\label{20130402:cor1-bis}
The quotient map 
$\pi_{\mf l}:\,\mc V(\mf p)\twoheadrightarrow\mc V(\mf g^f)$
restricts to a differential algebra isomorphism
$\pi_{\mf l}:\,\mc W_{\mf l}\stackrel{\sim}{\longrightarrow}\mc V(\mf g^f)$,
and the inverse map $\pi_{\mf l}^{-1}:\,\mc V(\mf g^f)\stackrel{\sim}{\longrightarrow}\mc W_{\mf l}$
is defined on generators by 
$$
\begin{array}{l}
\displaystyle{
\vphantom{\Big(}
\pi_{\mf l}^{-1}(a)=\varphi_{\mf l}(a)
\,\,\,\, \text{ for } a\in\mf g_0^f
\,,} \\
\displaystyle{
\vphantom{\Big(}
\pi_{\mf l}^{-1}(u)=\psi_{\mf l}(u)
\,\,\,\, \text{ for } u\in\mf g_{-\frac12}
\,,} \\
\displaystyle{
\pi_{\mf l}^{-1}(f)=L_{\mf l}-\frac12\sum_{i\in J_0^f}\varphi_{\mf l}(a_i)\varphi_{\mf l}(a^i)
=:\tilde{L}_{\mf l}
\,.}
\end{array}
$$
\end{corollary}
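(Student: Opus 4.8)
The plan is to deduce the statement from Corollary~\ref{20130402:cor1} by transporting it along the differential algebra isomorphism $\pi_{\mf p}\colon\mc W\stackrel{\sim}{\to}\mc W(\mf l)$ recalled above, i.e.\ the restriction to $\mc W$ of the extension to $\mc V(\mf g_{\leq\frac12})$ of the linear projection $\mf g_{\leq\frac12}\to\mf p$ with kernel $\mf l$.

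First I would check the identity $\pi_{\mf l}\circ\pi_{\mf p}=\pi$ of differential algebra homomorphisms $\mc V(\mf g_{\leq\frac12})\to\mc V(\mf g^f)$, where $\pi$ is the map of \eqref{20130402:eq1} extended to differential polynomials. Since all three maps are differential algebra homomorphisms, it suffices to verify this on the generators $a\in\mf g_{\leq\frac12}$, where it becomes a linear algebra statement: the composition $\pi_{\mf l}\circ\pi_{\mf p}$ is a projection of $\mf g_{\leq\frac12}$ onto $\mf g^f$ with kernel $\mf l\oplus\mb Fx\oplus\mf l'=\mb Fx\oplus\mf g_{\frac12}=[e,\mf g_{\leq-\frac12}]$, so it coincides with the projection $\pi$. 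Combining this with the fact (from \cite{DSKV12}) that $\pi_{\mf p}\colon\mc W\stackrel{\sim}{\to}\mc W(\mf l)$ is a differential algebra isomorphism and with Corollary~\ref{20130402:cor1}, which asserts that $\pi\colon\mc W\stackrel{\sim}{\to}\mc V(\mf g^f)$ is a differential algebra isomorphism, I would conclude that
\[
\pi_{\mf l}\big|_{\mc W(\mf l)}=\pi\circ\big(\pi_{\mf p}\big|_{\mc W}\big)^{-1}\colon\mc W(\mf l)\stackrel{\sim}{\longrightarrow}\mc V(\mf g^f)
\]
is a differential algebra isomorphism whose inverse is $\pi_{\mf l}^{-1}=\pi_{\mf p}\circ\pi^{-1}$.

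It then remains to identify $\pi_{\mf l}^{-1}$ on generators by applying $\pi_{\mf p}$ to the formulas for $\pi^{-1}$ from Corollary~\ref{20130402:cor1}: $\pi_{\mf l}^{-1}(a)=\pi_{\mf p}(\varphi(a))$, $\pi_{\mf l}^{-1}(u)=\pi_{\mf p}(\psi(u))$ and $\pi_{\mf l}^{-1}(f)=\pi_{\mf p}(L)-\frac12\sum_{i\in J_0^f}\pi_{\mf p}(\varphi(a_i))\pi_{\mf p}(\varphi(a^i))$. Writing $\mf g_{\frac12}$ in terms of the split bases $\{v_k\}_{k\in J_{\frac12}}=\{v_k\}_{k\in L}\cup\{v^k\}_{k\in L}$ and $\{v^k\}_{k\in J_{\frac12}}=\{v^k\}_{k\in L}\cup\{-v_k\}_{k\in L}$, and using $\pi_{\mf p}(v_k)=0$ and $\pi_{\mf p}(v^k)=v^k$ for $k\in L$ (since $\mf l=\ker\pi_{\mf p}$ and $\mf l'\subset\mf p$), a short computation collapses each sum over $J_{\frac12}$ in \eqref{phi}, \eqref{psi}, \eqref{virL} to the corresponding sum over $L$ and yields $\pi_{\mf p}(\varphi(a))=\varphi_{\mf l}(a)$, $\pi_{\mf p}(\psi(u))=\psi_{\mf l}(u)$, $\pi_{\mf p}(L)=L_{\mf l}$; these are precisely the identities underlying the description of the generators of $\mc W(\mf l)$ given before the corollary. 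Hence $\pi_{\mf l}^{-1}(a)=\varphi_{\mf l}(a)$, $\pi_{\mf l}^{-1}(u)=\psi_{\mf l}(u)$ and $\pi_{\mf l}^{-1}(f)=\tilde{L}_{\mf l}$, as claimed. The only mildly delicate point in the whole argument is this last bookkeeping, keeping track of the two halves of the basis of $\mf g_{\frac12}$ and of which quadratic and derivative monomials are killed by $\pi_{\mf p}$; alternatively one can bypass it and argue directly as in the proof of Corollary~\ref{20130402:cor1}, verifying $\pi_{\mf l}(\varphi_{\mf l}(a))=a$, $\pi_{\mf l}(\psi_{\mf l}(u))=u$, $\pi_{\mf l}(L_{\mf l})=f+\frac12\sum_{i\in J_0^f}a_ia^i$ and invoking that $\mc W(\mf l)$ is the algebra of differential polynomials on $L_{\mf l}$, $\varphi_{\mf l}(\mf g_0^f)$ and $\psi_{\mf l}(\mf g_{-\frac12})$.
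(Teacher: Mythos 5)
Your proposal is correct. The paper gives no written proof of this corollary: it is introduced as ``an analogue of Corollary \ref{20130402:cor1}'', so the intended argument is the direct one you sketch in your final sentences --- check $\pi_{\mf l}(\varphi_{\mf l}(a))=a$, $\pi_{\mf l}(\psi_{\mf l}(u))=u$ (using $\mf l'\subset\ker\pi_{\mf l}$ and $\pi_{\mf p}[e,u]\in\mf l'$), $\pi_{\mf l}(L_{\mf l})=f+\frac12\sum_{i\in J_0^f}a_ia^i$, and invoke the free generation of $\mc W(\mf l)$ by $L_{\mf l}$, $\varphi_{\mf l}(\mf g_0^f)$, $\psi_{\mf l}(\mf g_{-\frac12})$. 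Your primary route is a genuine variant: you factor $\pi=\pi_{\mf l}\circ\pi_{\mf p}$ (correctly identifying the kernel of the composite as $\mf l\oplus\mb Fx\oplus\mf l'=[e,\mf g_{\leq-\frac12}]$) and transport Corollary \ref{20130402:cor1} along the isomorphism $\pi_{\mf p}\colon\mc W\to\mc W(\mf l)$ from \cite{DSKV12}. What this buys is that the formulas $\pi_{\mf p}(\varphi(a))=\varphi_{\mf l}(a)$, $\pi_{\mf p}(\psi(u))=\psi_{\mf l}(u)$, $\pi_{\mf p}(L)=L_{\mf l}$ come out as a corollary of the bookkeeping with the split bases $\{v_k\}_{k\in L}\cup\{v^k\}_{k\in L}$ rather than being taken as the definition of the generators of $\mc W(\mf l)$; the cost is that you must carry out that bookkeeping (your splitting of the sums over $J_{\frac12}$, including the vanishing of $\pi_{\mf p}\big(\frac12\sum_k v^k\partial v_k\big)$, checks out). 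Note that the transport ingredient is not really foreign to the paper either: the free-generation statement for $\mc W(\mf l)$ is itself obtained there by pushing Theorem \ref{20120727:thm1} through $\pi_{\mf p}$, so the two routes use the same inputs in a different order.
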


If we take $s=e$,
then the $\lambda$-brackets among the generators are the same as the ones 
given by Table \ref{table2}.
If we take $s\in\mf l$, then the $\{\cdot_{\lambda}\cdot\}_{H,\rho}^{\mf l}$
$\lambda$-bracket is the same as in Table \ref{table2},
and it is not hard to check, using Corollary \ref{20130402:cor1-bis}, that 
the $\{\cdot_{\lambda}\cdot\}_{K,\rho}^{\mf l}$ $\lambda$-bracket is given by Table \ref{table5}:

\begin {table}[H]
\caption{$K$-$\lambda$-brackets among generators of $\mc W(\mf l)$ 
for minimal nilpotent $f$ and maximal isotropic $\mf l$} 
\label{table5} 
\begin{center}
\begin{tabular}{c||c|c|c}
\phantom{$\Bigg($} 
$\{\cdot\,_\lambda\,\cdot\}_{K,\rho}^{\mf l}$
& $L_{\mf l}$ & $\varphi_{\mf l}(b)$ & $\psi_{\mf l}(u_1)$ \\
\hline
\hline \phantom{$\Bigg($} 
$L_{\mf l}$ & $0$ & $0$ & $-\frac32(s|u_1)\lambda$ \\
\hline \phantom{$\Bigg($}
$\varphi_{\mf l}(a)$ & $0$ & $0$ & $(s|[u_1,a])$ \\
\hline \phantom{$\Bigg($}
$\psi_{\mf l}(u)$ & $-\frac32(s|u)\lambda$
& $(s|[b,u])$ & $ 0 $
\end{tabular}
\end{center}
\end{table}

\subsection{Embeddable elements $s\in\mf g_{\frac12}$}

Clearly, if we chose $s=e$ we get the same generalized Drinfeld-Sokolov hierarchies
of bi-Hamiltonian equations as for the case when $\mf l=0$ (cf. Section \ref{sec:5a}). 
Hence, let us assume $s\in\mf l$.
As usual, we need to require that $f+zs$ is a semisimple element of $\mf g((z^{-1}))$.
This is guaranteed (cf. Proposition \ref{20130419:lem1} below) if we assume that $s$ satisfies the following property:
\begin{definition}
An element $s\in\mf g_{\frac12}$ is called \emph{embeddable} (with respect to $x$)
if there exists $s^*\in\mf g_{-\frac12}$ such that $[s,s^*]=2x$.
\end{definition}
\begin{lemma}\label{20130503:lem}
If $s\in\mf g_{\frac12}$ is embeddable, then $2s^*,4x,s$ is an $\mf{sl}_2$-triple and:
\begin{enumerate}[(a)]
\item
$\mf g_0^s\subset\mf g_0^f$;
\item
the maps $\ad(f):\,\mf g^s_{\frac12}\to[f,\mf g^s_{\frac12}]\subset\mf g_{-\frac12}$,
$\ad(s)\circ\ad(f):\,\mf g^s_{\frac12}\to[s,[f,\mf g^s_{\frac12}]]\subset\mf g_{0}$,
and $\ad(s)\circ\ad(s)\circ\ad(f):\,\mf g^s_{\frac12}\to[s,[s,[f,\mf g^s_{\frac12}]]]\subset\mf g_{\frac12}$,
are bijective;
\item
we have the orthogonal (w.r.t. $(\cdot\,|\,\cdot)$) decomposition
$\mf g_0=\mf g_0^s\oplus[s,\mf g_{-\frac12}]$;
\item
we have the dual (w.r.t. $(\cdot\,|\,\cdot)$) decompositions
$\mf g_{\frac12}=\mf g_{\frac12}^s\oplus\mb F[s^*,e]$,
and
$\mf g_{-\frac12}=\mf g_{-\frac12}^{s^*}\oplus\mb F[s,f]$;
\item
we have the orthogonal decompositions
$\mf g_0=\mf g_0^f\oplus\mb Fx$,
and $\mf g_0^f=\mf g_0^s\oplus[s,[f,\mf g^s_{\frac12}]]$;
\item
we have the dual (w.r.t. $(\cdot\,|\,\cdot)$) decompositions
$\mf g_{\frac12}=[s,[s,[f,\mf g_{\frac12}^s]]]\oplus\mb Fs$,
and $\mf g_{-\frac12}=[f,\mf g_{\frac12}^s]\oplus\mb Fs^*$;
\item
we have the dual (w.r.t. $(\cdot\,|\,\cdot)$) decompositions
$\mf g_{\frac12}=[s,\mf g_0^f]\oplus\mb Fs$,
and $\mf g_{-\frac12}=[s^*,\mf g_0^f]\oplus\mb Fs^*$;
\item
we have a non-degenerate, symmetric, $\mf g_0^s$-invariant bilinear 
form $\chi$ on $\mf g^s_{\frac12}$ given by
$\chi(u,v)=([s,[f,u]]|[s,[f,v]])$.
\end{enumerate}
\end{lemma}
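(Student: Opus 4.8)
The plan is to exploit that, when $s$ is embeddable, the triple $\{2s^*,4x,s\}$ is an $\mf{sl}_2$-triple sharing the semisimple element $x$ with $\{f,2x,e\}$: indeed $[4x,s]=2s$ and $[4x,2s^*]=-4s^*$ since $s\in\mf g_{\frac12}$ and $s^*\in\mf g_{-\frac12}$, while $[s,2s^*]=2[s,s^*]=4x$. Let $\mf s$ be the subalgebra spanned by this triple. Since $\ad(4x)$ has eigenvalues in $\{-4,-2,0,2,4\}$ on $\mf g$ and $\dim\mf g_{\pm1}=1$, the decomposition of $\mf g$ into irreducible $\mf s$-modules is a direct sum of copies of the irreducibles of dimensions $1,3,5$, with exactly one $5$-dimensional summand $M$ — whose extreme weight lines are $\mf g_1=\mb Fe$ and $\mf g_{-1}=\mb Ff$ — exactly $\dim\mf g_{\frac12}-1=2\check h-5$ three-dimensional summands, and all other summands trivial. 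Almost all of (a)--(g) will be read off from this decomposition, together with the auxiliary identities (obtained from the Jacobi identity and $\mf g_{\pm\frac32}=0$): $[e,[f,s]]=s$, $[s,[e,s^*]]=-2e$, $[[e,s^*],f]=-s^*$, $(s|s^*)=4(x|x)$, and $[[f,s],[f,v]]=0$ for $v\in\mf g^s_{\frac12}$ (the last since $[u,u_1]=\frac{\omega_-(u,u_1)}{2(x|x)}f$ on $\mf g_{-\frac12}$ and $\omega_-([f,s],[f,v])=([e,[f,s]]|[f,v])=(s|[f,v])=(f|[v,s])=0$).

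For part (a): if $a\in\mf g_0$ and $[s,a]=0$, then $a$ is a weight-zero vector killed by the raising operator $s$, hence lies in the trivial $\mf s$-isotypic component, which coincides with $\mf g^{\mf s}$; in particular $\ad(a)$ is an $\mf s$-module endomorphism of $\mf g$. Writing $[a,f]=-cf$ (possible since $\mf g_{-1}=\mb Ff$), Schur's lemma applied to the summand $M$ — which contains both $e$ and $f$ as its extreme weight vectors — forces $[a,e]=-ce$; but $[2x,a]=0$ gives $[e,[f,a]]=[f,[e,a]]$, i.e.\ $2cx=-2cx$, so $c=0$ and $[f,a]=0$. This proves $\mf g^s_0\subseteq\mf g^f_0$ (and shows $\mf g^s_0=\mf g^{\mf s}_0$).

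For (b)--(g) I would proceed weight space by weight space: $\ad(s)$ is injective on $\mf g_{-\frac12}$, with image the sum of the nontrivial $\mf s$-isotypic pieces of $\mf g_0$, on which $\ad(s)$ is again injective, while $\ad(e)\ad(f)=\id$ on $\mf g_{\frac12}$ by Lemma \ref{20130315:lem2}(a); combining these yields the bijectivity statements in (b). The decompositions (c)--(g) then follow by matching the images of $\ad(s)$ and $\ad(f)$ against weight/isotypic subspaces — for (e) one also uses $[f,[s,[f,v]]]=[[f,s],[f,v]]=0$ to see $[s,[f,\mf g^s_{1/2}]]\subseteq\mf g^f_0$ — together with the dimension counts $\dim\mf g^s_{\frac12}=2\check h-5$ and $\dim\mf g^f_0=\dim\mf g_0-1$. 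Directness of the sums in (f) and (g) is obtained by pairing with $s^*$, using $(s|s^*)=4(x|x)\neq0$ and $(s|[f,v])=0$ for $v\in\mf g^s_{\frac12}$; the ``orthogonal''/``dual'' clauses follow because distinct $\mf s$-isotypic components are orthogonal with respect to the invariant form $(\cdot\,|\,\cdot)$. The subtle point is (g): although $[s^*,[s,[f,v]]]\neq[f,v]$ in general, the Jacobi identity gives $[s^*,[s,[f,v]]]=[f,\,v-\gamma_v s]$ with $\gamma_v=\frac{\omega_-(s^*,[f,v])}{2(x|x)}$ and $v-\gamma_v s\in\mf g^s_{\frac12}$; since $\gamma_s=-2\neq1$ the map $v\mapsto[f,v-\gamma_v s]$ is injective, so $[s^*,\mf g^f_0]=[f,\mf g^s_{1/2}]$.

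For (h): $\chi$ is visibly symmetric; $\mf g^s_0$-invariance holds because for $a\in\mf g^s_0\subseteq\mf g^f_0$ the operator $\ad(a)$ commutes with $\ad(f)$ and $\ad(s)$ and preserves $\mf g^s_{\frac12}$, whence $\chi([a,u],v)=([a,[s,[f,u]]]\,|\,[s,[f,v]])=-([s,[f,u]]\,|\,[a,[s,[f,v]]])=-\chi(u,[a,v])$ by invariance of $(\cdot\,|\,\cdot)$; non-degeneracy follows from the injectivity of $\ad(s)\ad(f)$ on $\mf g^s_{\frac12}$ from (b) together with the non-degeneracy of $(\cdot\,|\,\cdot)$ on $[s,[f,\mf g^s_{1/2}]]$, which by (e) is a direct summand of $\mf g^f_0$ orthogonal to $\mf g^s_0$. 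The main obstacle throughout is bookkeeping: keeping the two $\mf{sl}_2$-triples and their shared element $x$ straight, and handling in (f)--(g) the $s^*$-corrections, where the natural maps are bijective only after a scalar shift by $s$.
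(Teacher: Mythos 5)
Your proposal is correct, and its engine is the same as the paper's: representation theory of the second triple $\{2s^*,4x,s\}$ (sharing the Cartan element $x$ with $\{f,2x,e\}$) combined with invariance of $(\cdot\,|\,\cdot)$, with the dimension counts $\dim\mf g_{\pm1}=1$, $\dim\mf g_{\frac12}=2h\,\check{}-4$ pinning down the isotypic decomposition. The main divergence is part (a): the paper's proof is a one-liner — since $\mf g_0^f$ is exactly the orthocomplement of $x$ in $\mf g_0$, one only needs $(x|a)=\frac12([s,s^*]|a)=-\frac12(s^*|[s,a])=0$ for $a\in\mf g_0^s$ — whereas your Schur's-lemma argument on the unique $5$-dimensional summand $M$ is valid but considerably heavier (it does yield the useful byproduct $\mf g_0^s=\mf g_0^{\mf s}$, which the paper gets anyway from the observation that a weight-$0$ vector killed by the raising operator generates a trivial module). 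For (e) you verify $[s,[f,\mf g^s_{\frac12}]]\subset\mf g_0^f$ via $[[f,s],[f,v]]=0$ and $\omega_-$, while the paper computes $(e|[f,[s,[f,u]]])=(2x|[s,[f,u]])=(s|[f,u])=0$ directly; these are equivalent. Where your write-up adds genuine value is (g): the paper dismisses it as "representation theory of $\mf{sl}_2$" plus duality, and your explicit identity $[s^*,[s,[f,v]]]=[f,v-\gamma_v s]$ with $\gamma_s=-2\neq1$, reducing (g) to (f), is exactly the computation needed to see why $[s^*,\mf g_0^f]=[f,\mf g^s_{\frac12}]$ even though $\ad(s^*)\ad(s)$ is not the identity on $[f,\mf g^s_{\frac12}]$. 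The one point you should state explicitly rather than leave implicit in (h) is that $(\cdot\,|\,\cdot)$ is non-degenerate on $\mf g_0^f$ (because $\mf g_0^f=x^\perp\cap\mf g_0$ and $(x|x)\neq0$), which is what makes its restriction to the orthogonal summand $[s,[f,\mf g^s_{\frac12}]]$ of $\mf g_0^f=\mf g_0^s\oplus[s,[f,\mf g^s_{\frac12}]]$ non-degenerate.
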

\begin{proof}
The fact that $2s^*,4x,s$ is an $\mf{sl}_2$-triple follows by the definition
of embeddable element.
Recall from Section \ref{sec:3.1} that $\mf g_0^f$ is the orthocomplement to $x$ in $\mf g_0$.
Hence, in order to prove part (a) we only need to show that $(x|a)=0$ for every $a\in\mf g_0^s$.
This follows by the identity $x=\frac12[s,s^*]$ and by the invariance of the bilinear form.

Part (b) is immediate from representation theory of $\mf{sl}_2$.

For part (c), the fact that $\mf g_0$ admits 
the direct sum decomposition $\mf g_0^s\oplus[s,\mf g_{-\frac12}]$ 
follows by representation theory of $\mf{sl}_2$,
and the fact that this decomposition is orthogonal 
follows by invariance of the bilinear form $(\cdot\,|\,\cdot)$.

Similarly, for part (d), By representation theory of $\mf{sl}_2$
we immediately have the direct sum decompositions
$\mf g_{\frac12}=\mf g_{\frac12}^s\oplus\mb F[s^*,e]$,
and
$\mf g_{-\frac12}=\mf g_{-\frac12}^{s^*}\oplus\mb F[s,f]$.
These decompositions are dual since, by invariance,
$([s^*,e]|\mf g_{-\frac12}^{s^*})=0$ and $([s,f]|\mf g_{\frac12}^s)=0$.

Next, we prove part (e). By part (d) we have that
$\mf g_{-\frac12}=[f,\mf g_{\frac12}]=[f,\mf g^s_{\frac12}]\oplus\mb F s^*$.
Here we used Lemma \ref{20130315:lem2}(a).
Hence, by part (c), we get that
$\mf g_0=\mf g^s_0\oplus[s,[f,\mf g^s_{\frac12}]]\oplus\mb F x$.
We already know from part (a) that $\mf g^s_0\subset\mf g^f_0$.
Moreover, we claim that $[s,[f,\mf g^s_{\frac12}]]\subset\mf g^f_0$.
Indeed, if $u\in\mf g^s_{\frac12}$, then $[f,[s,[f,u]]]\in\mf g_{-1}=\mb Ff$,
and $(e|[f,[s,[f,u]]])=(2x|[s,[f,u]])=(s|[f,u])=0$.
Hence, we have the direct sum decomposition 
$\mf g^f_0=\mf g^s_0\oplus[s,[f,\mf g^s_{\frac12}]]$.
This decomposition is obviously orthogonal, by invariance of the bilinear form $(\cdot\,|\,\cdot)$.

We already proved that $\mf g_{-\frac12}=[f,\mf g^s_{\frac12}]\oplus\mb F s^*$.
Moreover, by part (b) $[s,[s,[f,\mf g^s_{\frac12}]]]$ is a subspace of $\mf g_{\frac12}$
of codimension 1.
For $u\in\mf g^s_{\frac12}$, we have $(s^*|[s,[s,[f,u]]])=-2(x|[s,[f,u]])=-(s|[f,u])=0$.
In other words, $[s,[s,[f,\mf g^s_{\frac12}]]]$ is orthogonal to $s^*$,
and therefore it does not contain $s$.
To conclude the proof of part (f) we just need to observe that, obviously, $s$
is orthogonal to $[f,\mf g^s_{\frac12}]$.

Next we prove part (g). The direct sum decompositions
$\mf g_{\frac12}=[s,\mf g_0^f]\oplus\mb Fs$
and $\mf g_{-\frac12}=[s^*,\mf g_0^f]\oplus\mb Fs^*$
follow from representation theory of $\mf{sl}_2$.
These decompositions are dual with respect to $(\cdot\,|\,\cdot)$
by invariance and since $\mf g^f_0$ is orthogonal to $x$.

Finally, we prove part (h).
The inner product $\chi$ is clearly symmetric, and it is non-degenerate by part (f).
The fact that it is $\mf g^s_0$-invariant 
follows by invariance of $(\cdot\,|\,\cdot)$ and the fact that $\mf g^s_0\subset\mf g^f_0$.
\end{proof}

According to Lemma \ref{20130503:lem},
$(\cdot\,|\,\cdot)$ restricts to a non-degenerate symmetric bilinear form on $\mf g^s_0$,
and we fix an orthonormal basis
\begin{equation}\label{20130507:eq1}
\{a_i\}_{i\in J^s_0}\subset\mf g^s_0
\,\,\text{ such that }\,\,
(a_i|a_i)=\delta_{i,j}\,.
\end{equation}
Moreover, we have the non-degenerate symmetric bilinear form $\chi$ on $\mf g^s_{\frac12}$,
and we fix an orthonormal basis
\begin{equation}\label{20130507:eq2}
\{u_k\}_{k\in J^s_{\frac12}}\subset\mf g^s_{\frac12}
\,\,\text{ such that }\,\,
\chi(u_h|u_k)=([s,[f,u_h]]|[s,[f,u_k]])=\delta_{h,k}\,.
\end{equation}
Then, by part (e) in Lemma \ref{20130503:lem},
dual (w.r.t. $(\cdot\,|\,\cdot)$) bases of $\mf g_0$ are
$$
\begin{array}{l}
\vphantom{\Big(}
\displaystyle{
\{a_i\}_{i\in J_0}=\{x\}\cup\{a_i\}_{i\in J^s_0}\cup\{[s,[f,u_k]]\}_{k\in J^s_{\frac12}}
\,,} \\
\vphantom{\Big)}
\displaystyle{
\{a^i\}_{i\in J_0}=\Big\{\frac1{(x|x)}x\Big\}\cup\{a_i\}_{i\in J^s_0}\cup\{[s,[f,u_k]]\}_{k\in J^s_{\frac12}}
\,\,\subset\mf g_0
\,,}
\end{array}
$$
while by part (f) in Lemma \ref{20130503:lem},
we have the following basis of $\mf g_{\frac12}$
$$
\{v_k\}_{k\in J_{\frac12}}=\{[s,[s,[f,u_k]]]\}_{k\in J^s_{\frac12}}\cup\{s\}
\,\subset\mf g_{\frac12}
\,,
$$
and the dual (w.r.t. $(\cdot\,|\,\cdot)$) basis of $\mf g_{-\frac12}$ is
$$
\{-[f,u_k]\}_{k\in J^s_{\frac12}}\cup\Big\{\frac1{4(x|x)}s^*\Big\}
\,\subset\mf g_{-\frac12}
\,.
$$

\begin{proposition}\label{20130503:prop}
Let $\mf g$ be  a simple Lie algebra not of type $C_n$.
Then the set of embeddable elements $s\in\mf g_{\frac12}$
is a non empty Zariski open subset of $\mf g_{\frac12}$.
Moreover, any two embeddable elements $s\in\mf g_{\frac12}$
can be obtained from one another by the adjoint action of $G_0^f$ 
up to a non-zero constant factor.
\end{proposition}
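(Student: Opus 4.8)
The plan is to proceed in two stages: first exhibit one embeddable element, then show the set of embeddable elements forms a single $G_0^f$-orbit up to scaling, and simultaneously conclude Zariski-openness. For the existence of an embeddable $s\in\mf g_{\frac12}$, I would use the structure theory already set up: an element $s\in\mf g_{\frac12}$ is embeddable precisely when there is $s^*\in\mf g_{-\frac12}$ with $[s,s^*]=2x$, i.e. $2s^*,4x,s$ is an $\mf{sl}_2$-triple. By $\mf{sl}_2$ theory, this happens if and only if $s$ is a \emph{nonzero} element of $\mf g_{\frac12}$ whose $\ad$-action generates such a triple; by the Jacobson--Morozov argument applied inside the grading, it suffices that $\ad(f)\ad(s)$ acting on... — more precisely, the clean criterion is: $s$ is embeddable iff the map $\mf g_{-\frac12}\to\mf g_{-1}=\mb Ff$, $s^*\mapsto\pi_{-1}[s,s^*]$ hits $2x$'s dual, equivalently iff $(e|[s,\cdot])\colon\mf g_{-\frac12}\to\mb F$ is surjective (automatic for $s\neq0$ by nondegeneracy of the pairing $\mf g_{-\frac12}\times\mf g_{\frac12}$ composed appropriately) \emph{and} the resulting $s^*$ can be chosen in $\mf g_{-\frac12}$ with $[s,s^*]$ having no $\mf g_0^f$-component. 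The vanishing of the $\mf g_0^f$-component of $[s,s^*]$ is the condition that distinguishes the $C_n$ case: I would phrase embeddability as the condition that the quadratic form $s\mapsto[s,[f,s]]$ (valued in $\mf g_{-1}\oplus\mf g_0^f$ via $\omega_-$ and the $\sharp$-projection) has its $\mf g_0^f$-part vanishing while its $\mb Ff$-part is nonzero. Concretely: $s$ is embeddable iff $\omega_-$-type pairing gives $[[f,s],s]\in\mb Ff\setminus\{0\}$, i.e. the ``moment map'' $\mu\colon\mf g_{\frac12}\to\mf g_0^f$, $\mu(s)=[s,[f,s]]^\sharp$, vanishes at $s$ and $s\notin\ker$ of the $\mb Ff$-component. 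The existence of such $s$ for $\mf g$ not of type $C_n$ is the content that must be extracted from the classification; I expect this to be reduced to the well-known fact that the $\mf g_0$-module $\mf g_{\frac12}$ (equivalently, the symplectic representation attached to the minimal nilpotent) contains a vector on which the moment map to $\mf g_0^f$ vanishes precisely outside type $C_n$ — this is exactly the statement that the symplectic $\mf g_0^f$-module $\mf g_{\frac12}$ has a ``nice'' generic orbit, which fails only for $\mf{sp}_{2n}$ where $\mf g_{\frac12}$ is the standard module $\mb F^{2n-2}\oplus(\text{line})$ and every vector maps to a nonzero element of $\mf g_0^f=\mf{sp}_{2n-2}$.

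For the Zariski-openness, once one embeddable $s_0$ is fixed, I would argue as follows. Embeddability is equivalent to $f+s$ being a semisimple element of $\mf g$ (this equivalence, or at least the implication ``embeddable $\Rightarrow$ semisimple'', is Proposition 6.9 referenced in the text; the converse I would establish by noting that semisimplicity of $f+s$ forces $\ker(\ad(f+s))\cap\im(\ad(f+s))=0$ by Lemma 4.4, and then using the grading to locate $s^*$). The condition ``$f+s$ is semisimple'' is equivalent, by Lemma 4.4(iii), to $\ker(\ad(f+s))\cap\im(\ad(f+s))=0$, which is an open condition: the rank of $\ad(f+s)$ is lower-semicontinuous, so the locus where it is maximal (equal to $\dim\mf g-\dim\mf g^{f+s_0}$) is open, and on that locus kernel and image meet trivially iff their dimensions add to $\dim\mf g$, which then holds. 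Hence the set of $s$ with $f+s$ semisimple is open; combined with the existence part it is nonempty, proving the first assertion.

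For the orbit statement, I would use that $G_0^f$ acts on $\mf g_{\frac12}$ preserving embeddability (since it preserves $\ad(x)$-grading, the form $\omega_+$, and commutes with $\ad(f)$ up to the adjoint action, so it sends $\mf{sl}_2$-triples $\{2s^*,4x,s\}$ to $\mf{sl}_2$-triples of the same shape). The key point is that the embeddable locus is a single $G_0^f$-orbit up to scalar. I would prove this by a dimension count: the embeddable locus is open in $\mf g_{\frac12}$, hence (if nonempty) has dimension $\dim\mf g_{\frac12}=2h^\vee-4$; on the other hand, for an embeddable $s_0$, its stabilizer in $G_0^f$ has Lie algebra $\mf g_0^f\cap\mf g^{s_0}=\mf g_0^{s_0}$ (using Lemma 6.14(a), $\mf g_0^{s_0}\subset\mf g_0^f$), and by Lemma 6.14(e), $\dim\mf g_0^f=\dim\mf g_0^{s_0}+\dim[s_0,[f,\mf g_{\frac12}^{s_0}]]=\dim\mf g_0^{s_0}+\dim\mf g_{\frac12}^{s_0}$. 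Since $\mf g_{\frac12}=[s_0,\mf g_0^f]\oplus\mb F s_0$ by Lemma 6.14(g), we get $\dim(G_0^f\cdot s_0)=\dim\mf g_0^f-\dim\mf g_0^{s_0}=\dim\mf g_{\frac12}^{s_0}=\dim[s_0,\mf g_0^f]=\dim\mf g_{\frac12}-1$. Adding the scaling direction $\mb F^\times s_0$ gives an orbit of dimension $\dim\mf g_{\frac12}$ inside the irreducible variety $\mf g_{\frac12}$, hence dense; but the embeddable locus is open and any orbit up-to-scaling of maximal dimension in it must be all of it (two dense orbits in an irreducible variety intersect). Therefore all embeddable elements lie in one $G_0^f$-orbit up to a nonzero scalar.

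\medskip

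\emph{Main obstacle.} The hard part is the existence statement — producing an embeddable $s$ when $\mf g$ is not of type $C_n$ and showing none exists for $C_n$. This requires going through the classification of $\mf g_0^f$ acting on $\mf g_{\frac12}$ for each type of $\mf g$ (equivalently, the list of symplectic spaces attached to minimal nilpotents: $\mf g_0^f$ is $\mf{sl}_{p}\oplus\mf{sl}_{q}\oplus\mb F$, $\mf{so}_{m}$, $\mf{sp}_{m}\times\mf{sl}_2$, etc.) and checking in each case whether the moment map $\mu\colon\mf g_{\frac12}\to\mf g_0^f$, $\mu(s)=[s,[f,s]]^\sharp$, has a nonzero zero. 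I expect the clean uniform reformulation to be: $s$ is embeddable iff $s$ spans a line on which $\mf g_0^f$ acts with a nonzero vector in the zero fiber of $\mu$, and the existence of such a vector is equivalent to $\mf g_{\frac12}$ not being a ``coisotropic-free'' symplectic $\mf g_0^f$-module — which, by inspection of the four-or-five families, fails exactly for $C_n$ where $\mf g_0^f\cong\mf{sp}_{2n-2}$ acts on its own standard module. Handling this case-by-case verification cleanly, rather than by brute enumeration, is where the real work lies; everything else (openness, orbit count) is formal and follows from Lemmas 4.4 and 6.14 already in hand.
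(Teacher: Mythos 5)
Your closing orbit argument is sound and is essentially what the paper does: the count $\dim[s,\mf g_0^f]=\dim\mf g_{\frac12}-1$ from Lemma \ref{20130503:lem}(g) gives $[s,\mf g_0]=\mf g_{\frac12}$, so every embeddable element has a Zariski open $G_0$-orbit (the paper cites Kostant's theorem for this), two open orbits in the irreducible variety $\mf g_{\frac12}$ must meet and hence coincide, and $\mf g_0=\mf g_0^f\oplus\mb Fx$ converts this into the $G_0^f$-statement up to scalars; openness of the embeddable locus also follows, being a union of such orbits. The genuine gap is the existence part, which is the only place the hypothesis ``not of type $C_n$'' enters and is the real content of the proposition: you explicitly defer it to a case-by-case inspection of the symplectic $\mf g_0^f$-module $\mf g_{\frac12}$ through the classification, so nothing is actually proved. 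No classification is needed: the paper picks a root $\alpha_1$ with $e_{\alpha_1}\in\mf g_{\frac12}$ and $\theta-\alpha_1$ also a root, normalizes $(e_{\alpha_1}|e_{-\alpha_1})=(e_{\theta-\alpha_1}|e_{-\theta+\alpha_1})=1$, and checks that $s=e_{\alpha_1}+e_{\theta-\alpha_1}$, $s^*=e_{-\alpha_1}+e_{-\theta+\alpha_1}$ satisfy $[s,s^*]=\theta=2x$ (the cross terms being the obstruction in type $C_n$).

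Moreover, the ``clean criterion'' on which you would base that case-by-case check is wrong. Since $(x|[s,[f,s]])=([x,s]|[f,s])=\tfrac12(s|[f,s])=-\tfrac12(f|[s,s])=0$, the element $[s,[f,s]]$ always lies in $\mf g_0^f$, so the requirement that it have a nonzero component along $x$ (or $f$) is never satisfiable; and its vanishing is not necessary either: in the paper's $\mf{sl}_3$ example $s=E_{12}+E_{23}$ is embeddable (take $s^*=E_{21}+E_{32}$), yet $[s,[f,s]]=-(E_{11}-2E_{22}+E_{33})\neq0$. The correct linear-algebra criterion is $2x\in[s,\mf g_{-\frac12}]$, i.e.\ $(x|\mf g_0^s)=0$, because $[s,\mf g_{-\frac12}]$ is exactly the orthocomplement of $\mf g_0^s$ inside $\mf g_0$. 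Finally, your alternative openness argument via rank semicontinuity does not work as stated: $\dim\ker(\ad\Lambda)+\dim\im(\ad\Lambda)=\dim\mf g$ holds for every $\Lambda$ by rank--nullity, so maximality of the rank does not force $\ker\cap\im=0$, and indeed non-semisimple elements can have maximal rank in a family. Openness should be obtained from the open-orbit argument you already have, as in the paper.
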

\begin{proof}
%
We identify $\mf g$ with $\mf g^*$ using the invariant bilinear form
normalized so that $(\theta|\theta)=2$.
Choose root vectors $e_{\alpha_1}$, $e_{-\alpha_1}$, $e_{\theta-\alpha_1}$,
$e_{-\theta+\alpha_1}$,
such that $(e_{\alpha_1}|e_{-\alpha_1})=1$,
and $(e_{\theta-\alpha_1}|e_{-\theta+\alpha_1})=1$.
Let $s=e_{\alpha_1}+e_{\theta-\alpha_1}$,
and $s^*=e_{-\alpha_1}+e_{-\theta+\alpha_1}$.
Then, $[s,s^*]=\theta=2x$.
This proves that the set of embeddable elements in $\mf g_{\frac12}$ is not empty.
By Kostant's Theorem \cite[Thm.4.2]{Kos59},
the $G_0$-orbit of any embeddable element $s$ is Zariski open in $\mf g_{\frac12}$.
Since $\mf g_0=\mf g^f_0\oplus\mb Fx$, the last statement of the proposition follows.
\end{proof}

\subsection{Preliminary computations}\label{sec:8.3}

\begin{proposition}\label{20130419:lem1}
If $s$ is embeddable,
then $f+zs\in\mf g((z^{-1}))$ is semisimple,
and we have the decomposition 
$\mf g((z^{-1}))=\mf h\oplus\mf h^{\perp}$, where
\begin{equation}\label{20130419:eq1}
\mf h:=\mb F(f+zs)((z^{-1}))\oplus\mf g_0^s((z^{-1}))
\oplus\mb F(e+z^{-1}s^*)((z^{-1}))
=\ker\ad(f+zs)
\end{equation}
and
\begin{equation}\label{20130419:eq2}
\begin{array}{c}
\vphantom{\Big(}
\displaystyle{
\mf h^{\perp}
:=\mb F(2f-zs)((z^{-1}))
\oplus\mb Fx((z^{-1}))
\oplus\mb F(2e-z^{-1}s^*)((z^{-1}))
\oplus[f,\mf g^s_{\frac12}]((z^{-1}))
} \\
\vphantom{\Big(}
\displaystyle{
\oplus[s,[f,\mf g^s_{\frac12}]]((z^{-1}))
\oplus[s,[s,[f,\mf g^s_{\frac12}]]]((z^{-1}))
=\im\ad(f+zs)
\,.
}
\end{array}
\end{equation}
In fact, we have
\begin{enumerate}[(i)]
\item
$[f+zs,2x]=2f-zs$;
\item
$[f+zs,-\frac16(2e-z^{-1}s^*)]=x$;
\item
$[f+zs,[s^*,e]z^{-1}]=2e-z^{-1}s^*$;
\item
$[f+zs,u]=[f,u]\in[f,g^s_{\frac12}]$,
for all $u\in\mf g^s_{\frac12}$;
\item
$[f+zs,[f,u]z^{-1}]=[s,[f,u]]\in[s,[f,g^s_{\frac12}]]$,
for all $u\in\mf g^s_{\frac12}$;
\item
$[f+zs,[s,[f,u]]z^{-1}]=[s,[s,[f,u]]]\in[s,[s,[f,g^s_{\frac12}]]]$,
for all $u\in\mf g^s_{\frac12}$.
\end{enumerate}
\end{proposition}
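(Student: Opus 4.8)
The plan is to verify the six bracket identities (i)--(vi), to use them to show that $\mf h\subseteq\ker\ad(f+zs)$ and $\mf h^\perp\subseteq\im\ad(f+zs)$ and that $\mf h+\mf h^\perp=\mf g((z^{-1}))$, and then to deduce both the semisimplicity of $f+zs$ and the two asserted equalities from Lemma \ref{20130506:lem}. The identities are all direct Lie bracket computations, using that $\{f,2x,e\}$ and, by Lemma \ref{20130503:lem}, $\{2s^*,4x,s\}$ are $\mf{sl}_2$-triples, that $[e,f]=[s,s^*]=2x$, and crucially that $\mf g_{\frac32}=\mf g_{-\frac32}=0$ since $f$ is minimal. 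For instance $[f+zs,2x]=2[f,x]+2z[s,x]=2f-zs$ gives (i); expanding $[f+zs,2e-z^{-1}s^*]=2[f,e]-[s,s^*]=-6x$ gives (ii); for (iii) the Jacobi identity yields $[f,[s^*,e]]=[s^*,[f,e]]=-s^*$ and $[s,[s^*,e]]=[[s,s^*],e]=2e$; (iv) and (v) hold because $[s,u]=0$ for $u\in\mf g^s_{\frac12}$ and $[f,[f,u]]\in\mf g_{-\frac32}=0$; and (vi) needs in addition that $[f,[s,[f,u]]]\in\mf g_{-1}=\mb F f$ pairs to zero with $e$ and hence vanishes --- which is precisely the computation already carried out in the proof of Lemma \ref{20130503:lem}(e).

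Given (i)--(vi), all six summands of $\mf h^\perp$ lie in $\im\ad(f+zs)$: the three lines $\mb F(2f-zs)$, $\mb Fx$, $\mb F(2e-z^{-1}s^*)$ by (i)--(iii), and $[f,\mf g^s_{\frac12}]$, $[s,[f,\mf g^s_{\frac12}]]$, $[s,[s,[f,\mf g^s_{\frac12}]]]$ by (iv)--(vi). Dually, each summand of $\mf h$ is in $\ker\ad(f+zs)$: trivially $[f+zs,f+zs]=0$; if $a\in\mf g^s_0$ then $[s,a]=0$ and $[f,a]=0$, the latter because $\mf g^s_0\subseteq\mf g^f_0$ by Lemma \ref{20130503:lem}(a); and $[f+zs,e+z^{-1}s^*]=\big([f,e]+[s,s^*]\big)+z[s,e]+z^{-1}[f,s^*]=0$, the first bracket vanishing by the $\mf{sl}_2$-relations and the remaining two lying in $\mf g_{\pm\frac32}=0$.

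It then remains to check that $\mf h+\mf h^\perp=\mf g((z^{-1}))$ over $\mb F((z^{-1}))$. Combining the $\ad x$-eigenspace decomposition $\mf g=\mb Ff\oplus\mf g_{-\frac12}\oplus\mf g_0\oplus\mf g_{\frac12}\oplus\mb Fe$ with Lemma \ref{20130503:lem}(e),(f) gives $\mf g_0=\mb Fx\oplus\mf g^s_0\oplus[s,[f,\mf g^s_{\frac12}]]$, $\mf g_{\frac12}=\mb Fs\oplus[s,[s,[f,\mf g^s_{\frac12}]]]$ and $\mf g_{-\frac12}=\mb Fs^*\oplus[f,\mf g^s_{\frac12}]$. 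Since the pair $(f+zs,\,2f-zs)$ is obtained from the basis $(f,s)$ of $\mb Ff\oplus\mb Fs$ by an invertible $2\times2$ matrix over $\mb F((z^{-1}))$ of determinant $-3z$, and $(e+z^{-1}s^*,\,2e-z^{-1}s^*)$ from $(e,s^*)$ by one of determinant $-3z^{-1}$, the listed generators of $\mf h$ and $\mf h^\perp$ together span every $\ad x$-eigenspace tensored with $\mb F((z^{-1}))$, hence all of $\mf g((z^{-1}))$; a dimension count over $\mb F((z^{-1}))$, using the bijections of Lemma \ref{20130503:lem}(b), then forces the sum to be direct. In particular $\ker\ad(f+zs)+\im\ad(f+zs)=\mf g((z^{-1}))$, so $f+zs$ is semisimple by Lemma \ref{20130506:lem}; consequently $\ker\cap\im=0$, and from $\mf h\subseteq\ker$, $\mf h^\perp\subseteq\im$ and $\mf h\oplus\mf h^\perp=\ker\oplus\im$ we conclude $\mf h=\ker\ad(f+zs)$ and $\mf h^\perp=\im\ad(f+zs)$. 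The points demanding care are the vanishing $[f,[s,[f,u]]]=0$ needed for (vi) --- which we import from Lemma \ref{20130503:lem} --- and the observation that Lemma \ref{20130506:lem} applies verbatim to the reductive Lie algebra $\mf g((z^{-1}))$ over $\mb F((z^{-1}))$.
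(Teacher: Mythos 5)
Your proposal is correct and follows essentially the same route as the paper: verify the bracket identities (i)--(vi) (with the only nontrivial point being $[f,[s,[f,u]]]=0$, handled by the same orthogonality computation $(e|[f,[s,[f,u]]])=(s|[f,u])=0$), deduce the inclusions $\mf h\subseteq\ker\ad(f+zs)$ and $\mf h^\perp\subseteq\im\ad(f+zs)$, show $\mf h\oplus\mf h^\perp=\mf g((z^{-1}))$ from Lemma \ref{20130503:lem}(b),(e),(f), and conclude via Lemma \ref{20130506:lem}. You merely spell out a few steps the paper leaves implicit (the $2\times2$ change-of-basis determinants and the final kernel/image bookkeeping), so no substantive comparison is needed.
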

\begin{proof}
Clearly, $f+zs\in\ker\ad(f+zs)$.
By Lemma \ref{20130503:lem}(a), $\mf g_0^s\subset\mf g_0^f$,
and therefore $\mf g_0^s\subset\ker\ad(f+zs)$.
Next, we have
$[f+zs,e+z^{-1}s^*]=[f,e]+[s,s^*]=0$,
by definition of admissible element.
Therefore, we have the inclusion $\mf h\subset\ker(\ad(f+zs)$.

Furthermore, all identities (i)--(v) are immediate.
For identity (vi) we just have to show that $[f,[s,[f,u]]]=0$ for all $u\in\mf g^s_{\frac12}$.
Since, clearly, $[f,[s,[f,u]]]\in\mf g_{-1}=\mb Ff$,
the claim follows from the following identity:
$(e|[f,[s,[f,u]]])=2(x|[s,[f,u]])=(s|[f,u])=0$.

The identities (i)--(vi) imply the inclusion $\mf h^\perp\subset\im(f+zs)$.
In order to conclude the proof of the proposition,
we are left to prove, recalling Lemma \ref{20130506:lem},
that $\mf h\oplus\mf h^\perp=\mf g((z^{-1}))$.
This immediately follows from parts (b), (e) and (f) in Lemma \ref{20130503:lem}.
\end{proof}
\begin{remark}\label{20130506:rem1}
If $\mf g$ is of type $C_n$,
then there is no element $s\in\mf g_{\frac12}$ such that $f+s$ is semisimple.
Indeed, in this case $\mf g_0^f$ is a simple Lie algebra of type $C_{n-1}$,
and its representation on $\mf g_{\frac12}$ is the standard $2n-2$-dimensional representation.
Therefore, the set of non-zero elements of $\mf g_{\frac12}$ form a single $G_0^f$-orbit.
Hence, the $G_0^f$-orbit of $f+s$ is $f+\mf g_{\frac12}\backslash\{0\}$,
and its closure contains $f$.
A fortiori, $f$ lies in the closure of the $G$-orbit of $f+s$.
But $f$ is nilpotent, so it does not lie in the $G$ orbit of the semisimple element $f+s$.
This is a contradiction since, by Lemma \ref{20130506:lem}(ii),
the $G$-orbit of $f+s$ is closed.
\end{remark}
\begin{remark}\label{20130507:rem}
It is not difficult to show that the converse to Proposition \ref{20130419:lem1} holds:
the set of elements $s\in\mf g_{\frac12}$ such that $f+s$ is semisimple in $\mf g$
coincides with the set of embeddable elements.
\end{remark}

We generalize the argument in Section \ref{sec:4} to this case.
Let us assume that $s\in\mf g_{\frac12}$ is embeddable, 
so that, by Proposition \ref{20130419:lem1},
we have the direct sum decomposition $\mf g((z^{-1}))=\mf h\oplus\mf h^\perp$,
given by \eqref{20130419:eq1} and \eqref{20130419:eq2}.
Since $s\in\mf g_{\frac12}$, we let $z$ have degree $-\frac32$
and we get the corresponding decompositions for $\mf h$  and $\mf h^{\perp}$ 
given by \eqref{20130404:eq6}.
To find the homogeneous subspaces, we use the obvious set identity:
$$
\frac12\mb Z=\Big(\frac32\mb Z-1\Big)\sqcup\frac32\mb Z\sqcup\big(\frac32\mb Z+1)\,.
$$
Hence, we get:
\begin{enumerate}[(i)]
\item
$\mf h_i=\mb F (f+zs)z^{-\frac23(i+1)}$ for $i\in\frac32\mb Z-1$,
\item
$\mf h_i=\mf g_0^sz^{-\frac23 i}$ for $i\in\frac32\mb Z$,
\item
$\mf h_i=\mb F (e+z^{-1}s^*)z^{-\frac23 (i-1)}$ for $i\in\frac32\mb Z+1$,
\item
$\mf h^\perp_i=
\mb F (2f-zs)z^{-\frac23(i+1)}
\oplus[s,[s,[f,\mf g^s_{\frac12}]]]z^{-\frac23\big(i-\frac12\big)}$ 
for $i\in\frac32\mb Z-1$,
\item
$\mf h^\perp_i=\mb Fxz^{-\frac23 i}\oplus[s,[f,\mf g^s_{\frac12}]]z^{-\frac23 i}$ for $i\in\frac32\mb Z$,
\item
$\mf h^\perp_i=
[f,\mf g^s_{\frac12}]z^{-\frac23 \big(i+\frac12\big)}\oplus\mb F(2e-z^{-1}s^*)z^{-\frac23 (i-1)}$
for $i\in\frac32\mb Z+1$.
\end{enumerate}

\subsection{Generalized Drinfeld-Sokolov hierarchies}

Consider the element 
$$
q=\sum_{i\in P}q^i\otimes q_i\in\mf n^{\perp}\otimes\mc V(\mf p)\,,
$$
where $\{q_i\}_{i\in P}$ is a basis of $\mf p$,
and $\{q^i\}_{i\in P}$ is the dual (w.r.t. $(\cdot\,|\,\cdot)$) basis of $\mf n^\perp$.
In terms of the bases \eqref{20130507:eq3}, \eqref{20130507:eq1}, and \eqref{20130507:eq2}, 
we can write $q$ as follows:
$$
\begin{array}{l}
\displaystyle{
q=
\frac 1{2(x|x)} e\otimes f
+\frac1{4(x|x)} s\otimes s^*
+\frac 1{(x|x)} x\otimes x
+\sum_{i\in J_0^s}a_i\otimes a_i
} \\
\displaystyle{
+\sum_{k\in J^s_{\frac12}} [s,[f,u_k]]\otimes[s,[f,u_k]]
-\sum_{k\in J^s_{\frac12}} [s,[s,[f,u_k]]]\otimes[f,u_k]
+\sum_{k\in L} [f,v_k]\otimes v^k
\,.}
\end{array}
$$
Its image under the map $\pi_{\mf l}:\,\mc V(\mf p)\to\mc V(\mf g^f)$
defined in Section \ref{sec:8.1} is
$$
\begin{array}{l}
\displaystyle{
\pi_{\mf l}q=
\frac 1{2(x|x)} e\otimes f
+\frac1{4(x|x)}s\otimes s^*
+\sum_{i\in J_0^s}a_i\otimes a_i
} \\
\displaystyle{
+\sum_{k\in J^s_{\frac12}} [s,[f,u_k]]\otimes[s,[f,u_k]]
-\sum_{k\in J^s_{\frac12}} [s,[s,[f,u_k]]]\otimes[f,u_k]
\,,}
\end{array}
$$
and its homogeneous components $\pi_{\mf l}q_i\in\mf n^\perp_i\otimes\mc V(\mf g^f)$
are
\begin{equation}\label{20130507:eq4}
\begin{array}{l}
\displaystyle{
\pi_{\mf l}q_0=
\sum_{i\in J_0^s}a_i\otimes a_i
+\sum_{k\in J^s_{\frac12}} [s,[f,u_k]]\otimes[s,[f,u_k]]
\,,} \\
\displaystyle{
\pi_{\mf l}q_{\frac12}=
\frac1{4(x|x)}s\otimes s^*
-\sum_{k\in J^s_{\frac12}} [s,[s,[f,u_k]]]\otimes[f,u_k]
\,,} \\
\displaystyle{
\pi_{\mf l}q_1=
\frac 1{2(x|x)} e\otimes f
\,.}
\end{array}
\end{equation}

Using the results in \cite[Sec.4]{DSKV12} and the same trick as in Section \ref{sec:5a},
we construct the generalized Drinfeld-Sokolov integrable bi-Hamiltonian hierarchy
as follows:
\begin{enumerate}[1.]
\item
We solve equation \eqref{L0_dsr} for
$U(z)\in\mf h^{\perp}_{\geq\frac12}\otimes\mc V(\mf p)$ 
and $h(z)\in\mf h_{\geq-\frac12}\otimes\mc V(\mf p)$,
after applying the map $\pi_{\mf l}$ to both sides of the equation.
\item
We fix an element $a(z)$ in the center of $\mf h$,
and we compute the Laurent series $\tint g(z)=\sum_{n\in\mb Z_+}\tint g_nz^{N-n}$ 
defined by \eqref{gz}.
The corresponding bi-Hamiltonian integrable hierarchy is
$\frac{dw}{dt_n}=\{{g_n}_0 w\}_{H,\rho}$, $n\in\mb Z_+$.
\end{enumerate}

\subsection*{Step 1}

By \cite[Prop.4.5]{DSKV12}
there are unique 
$U(z)=U(z)_{\frac12}+U(z)_{1}+U(z)_{\frac32}+\dots\in\mf h^{\perp}_{\geq\frac12}\otimes\mc V(\mf p)$ 
and $h(z)=h(z)_{-\frac12}+h(z)_{0}+h(z)_{\frac12}+\dots\in\mf h_{\geq-\frac12}\otimes\mc V(\mf p)$
solving equation \eqref{L0_dsr}.
Using the same trick as in Section \ref{sec:5a},
we apply the map $\pi_{\mf l}$ to both sides of the equation \eqref{L0_dsr},
and we solve it degree by degree,
for $\pi_{\mf l} U(z)_{i+1}\in\mf h^{\perp}_{i+1}\otimes\mc V(\mf p)$ 
and $\pi_{\mf l} h(z)_i\in\mf h_{i}\otimes\mc V(\mf p)$.

If we apply the map $\pi_{\mf l}$ to both sides of equation \eqref{L0_dsr}, we get
\begin{equation}\label{20130422:eq2}
\begin{array}{l}
\displaystyle{
e^{\ad(\pi_{\mf l} U(z)_{\frac12}+\pi_{\mf l} U(z)_1+ \pi_{\mf l} U(z)_{\frac32}+\dots)}
(\partial+(f+zs)\otimes1+\pi_{\mf l} q_0+\pi_{\mf l} q_{\frac12}+\pi_{\mf l} q_1)
} \\
\displaystyle{
=\partial+(f+zs)\otimes1+\pi_{\mf l} h(z)_{-\frac12}+\pi_{\mf l} h(z)_0+\pi_{\mf l} h(z)_{\frac12}+\dots
\,.}
\end{array}
\end{equation}

If we take the homogeneous components of degree $-\frac12$
in both sides of equation \eqref{20130422:eq2}, we get
$$
\pi_{\mf l} h(z)_{-\frac12}+[(f+zs)\otimes1,\pi_{\mf l} U(z)_{\frac12}]=0\,,
$$
which implies $\pi_{\mf l} U(z)_{\frac12}=\pi_{\mf l} h(z)_{-\frac12}=0$.

Similarly, if we take the homogeneous components of degree $0$ in both sides 
of equation \eqref{20130405:eq6}, we get
$$
\pi_{\mf l} h(z)_0+[(f+zs)\otimes1,\pi_{\mf l} U(z)_1]
=
\pi_{\mf l} q_0
\,.
$$
By Proposition \ref{20130419:lem1} we have
$\mf h_0=\mf g_0^s$ and $\mf h_0^\perp=\mb Fx\oplus[s,[f,\mf g^s_{\frac12}]]$.
It follows by the expression \eqref{20130507:eq4} for $\pi_{\mf l}q_0$
and Proposition \ref{20130419:lem1}(v) that
\begin{equation}\label{20130422:eq3}
\pi_{\mf l} h(z)_0
=\sum_{i\in J_0^s}a_i\otimes a_i
\,\,,\,\,\,\,
\pi_{\mf l} U(z)_1
=\sum_{k\in J_{\frac12}^s}[f,u_k]z^{-1}\otimes [s,[f,u_k]]
\,.
\end{equation}

Next, we take the homogeneous components of degree $\frac12$ in both sides 
of equation \eqref{20130422:eq2}:
$$
\pi_{\mf l} h(z)_{\frac12}+[(f+zs)\otimes1,\pi_{\mf l} U(z)_{\frac32}]
=
\pi_{\mf l} q_{\frac12}
\,.
$$
By Proposition \ref{20130419:lem1} we have
$\mf h_{\frac12}=\mb F(f+zs)z^{-1}$
and $\mf h^\perp_{\frac12}=\mb F(2f-zs)z^{-1}\oplus[s,[s,[f,\mf g^s_{\frac12}]]]
=\ad(f+zs)\big(
\mb F(2x)z^{-1}\oplus[s,[f,\mf g^s_{\frac12}]]z^{-1}\big)$.
It follows from the expression \eqref{20130507:eq4} for $\pi_{\mf l}q_{\frac12}$,
and the obvious decomposition $s=\frac23(f+zs)z^{-1}-\frac13(2f-zs)z^{-1}$,
that
\begin{equation}\label{20130422:eq4}
\begin{array}{l}
\displaystyle{
\pi_{\mf l} h(z)_{\frac12}
=
\frac1{6(x|x)}(f+zs)z^{-1}\otimes s^*
\,}\\
\displaystyle{
\pi_{\mf l} U(z)_{\frac32}
=
-\frac1{6(x|x)}xz^{-1}\otimes s^*
-\sum_{k\in J_{\frac12}^s}[s,[f,u_k]]z^{-1}\otimes [f,u_k]
\,.}
\end{array}
\end{equation}

Next, we take the homogeneous components of degree $1$ in both sides of equation \eqref{20130422:eq2}:
$$
\begin{array}{c}
\vphantom{\Big(}
\displaystyle{
\pi_{\mf l} h(z)_1
+[(f+zs)\otimes1,\pi_{\mf l} U(z)_2]
=
\pi_{\mf l} q_1
-\partial\pi_{\mf l}U(z)_1+[\pi_{\mf l}U(z)_1,\pi_{\mf l} q_0]
} \\
\vphantom{\Big(}
\displaystyle{
+\frac12[\pi_{\mf l} U(z)_1,[\pi_{\mf l} U(z)_1,(f+zs)\otimes1]]
\,.}
\end{array}
$$
It follows from the formulas \eqref{20130507:eq4} for $\pi_{\mf l}q_{0}$ and $\pi_{\mf l}q_{1}$,
and \eqref{20130422:eq3} for $\pi_{\mf l} U(z)_1$,
that
$$
\begin{array}{l}
\vphantom{\Big(}
\displaystyle{
\pi_{\mf l} h(z)_1
+[(f+zs)\otimes1,\pi_{\mf l} U(z)_2]
=
\frac 1{2(x|x)} e\otimes f
-\sum_{k\in J_{\frac12}^s}[f,u_k]z^{-1}\otimes \partial [s,[f,u_k]]
}\\
\displaystyle{
- \sum_{i\in J_0^s} \sum_{k\in J_{\frac12}^s}
[f,[a_i,u_k]] z^{-1}\otimes a_i [s,[f,u_k]]
}\\
\displaystyle{
+\sum_{h,k\in J_{\frac12}^s}
[ [f,u_h] , [s,[f,u_k]] ]
z^{-1}\otimes [s,[f,u_h]] [s,[f,u_k]]
}\\
\displaystyle{
-\frac12 \sum_{h,k\in J_{\frac12}^s}
[[f,u_h],[s,[f,u_k]]]
z^{-1}\otimes [s,[f,u_h]] [s,[f,u_k]]
\,.}
\end{array}
$$
By Proposition \ref{20130419:lem1} we have
$\mf h_{1}=\mb F(e+z^{-1}s^*)$
and $\mf h^\perp_{1}=\mb F(2e-z^{-1}s^*)\oplus[f,\mf g^s_{\frac12}]z^{-1}$.
We thus get,
by the orthogonality condition \eqref{20130507:eq2},
and the obvious identities 
$e=\frac13(e+z^{-1}s^*)+\frac13(2e-z^{-1}s^*)$
and $s^*=\frac23(e+z^{-1}s^*)z-\frac13(2e-z^{-1}s^*)z$,
\begin{equation}\label{20130507:eq5}
\pi_{\mf l} h(z)_1
=
\frac 1{6(x|x)} (e+z^{-1}s^*) \otimes f
+\frac1{12(x|x)} (e+z^{-1}s^*) \otimes \sum_{k\in J_{\frac12}^s} [s,[f,u_k]] [s,[f,u_k]]
\end{equation}

Finally, we take the homogeneous components of degree $\frac32$ 
in both sides of equation \eqref{20130422:eq2}:
$$
\begin{array}{c}
\vphantom{\Big(}
\displaystyle{
\pi_{\mf l} h(z)_{\frac32}
\!+\![(f+zs)\!\otimes\!1,\pi_{\mf l} U(z)_{\frac52}]
=
-\partial\pi_{\mf l}U(z)_{\frac32}
\!+[\pi_{\mf l}U(z)_{\frac32},\pi_{\mf l} q_0]
+[\pi_{\mf l}U(z)_1,\pi_{\mf l} q_{\frac12}]
} \\
\vphantom{\Big(}
\displaystyle{
+\frac12[\pi_{\mf l} U(z)_1,[\pi_{\mf l} U(z)_{\frac32},(f+zs)\otimes1]]
+\frac12[\pi_{\mf l} U(z)_{\frac32},[\pi_{\mf l} U(z)_1,(f+zs)\otimes1]]
\,.}
\end{array}
$$
We can use equations \eqref{20130507:eq4}, \eqref{20130422:eq3} and \eqref{20130422:eq4},
to rewrite the above identity as follows
$$
\begin{array}{l}
\displaystyle{
\pi_{\mf l} h(z)_{\frac32}
+[(f+zs)\otimes1,\pi_{\mf l} U(z)_{\frac52}]
=
\frac1{6(x|x)}xz^{-1}\otimes \partial s^*
}\\
\displaystyle{
+\sum_{k\in J_{\frac12}^s}[s,[f,u_k]]z^{-1}\otimes \partial [f,u_k]
+\sum_{i\in J_0^s} \sum_{k\in J_{\frac12}^s}
[s,[f,[a_i,u_k]]] z^{-1} \otimes [f,u_k]a_i
}\\
\displaystyle{
-\sum_{h,k\in J_{\frac12}^s}
[[s,[f,u_h]],[s,[f,u_k]]] z^{-1}\otimes [f,u_h] [s,[f,u_k]]
}\\
\displaystyle{
-\frac1{4(x|x)} \sum_{k\in J_{\frac12}^s}
[s,[f,u_k]] z^{-1} \otimes s^* [s,[f,u_k]]
}\\
\displaystyle{
-\sum_{h,k\in J_{\frac12}^s}
[[f,u_h],[s,[s,[f,u_k]]]]
z^{-1}\otimes [s,[f,u_h]] [f,u_k]
}\\
\displaystyle{
+\frac1{24(x|x)} \sum_{k\in J_{\frac12}^s}
[s,[f,u_k]]
z^{-1}\otimes s^* [s,[f,u_k]]
}\\
\displaystyle{
+\frac12 \sum_{h,k\in J_{\frac12}^s}
[[f,u_h],[s,[s,[f,u_k]]]]
z^{-1}\otimes [s,[f,u_h]] [f,u_k]
}\\
\displaystyle{
+\frac12 \sum_{h,k\in J_{\frac12}^s}
[[s,[f,u_h]],[s,[f,u_k]]]
z^{-1}\otimes [f,u_h] [s,[f,u_k]]
\,.}\\
\end{array}
$$
By Proposition \ref{20130419:lem1} we have
$\mf h_{\frac32}=\mf g^s_0z^{-1}$
and $\mf h^\perp_{\frac32}=\mb Fxz^{-1}\oplus[s,[f,\mf g^s_{\frac12}]]z^{-1}$.
It follows by the above equation and
the orthogonality relations \eqref{20130507:eq1} and \eqref{20130507:eq2},
that
\begin{equation}\label{20130507:eq5b}
\pi_{\mf l} h(z)_{\frac32}
=
-\sum_{i\in J^s_0} \sum_{k\in J_{\frac12}^s}
a_i z^{-1}\otimes[f,u_k] [s,[f,[a_i,u_k]]]
\,.
\end{equation}
Here we used the fact that, for $u,v\in\mf g^s_{\frac12}$, 
the component in $\mf g^s_0$
of $[[s,[f,u]],[s,[f,v]]\in\mf g^f_0$ is equal to $\sum_{i\in J^s_0}\chi([a_i,u],v)a_i$.

\subsection*{Step 2}

Given an element $0\neq a(z)\in Z(\mf h)$,
consider the Laurent series $\tint g(z)=\sum_{n\in\mb Z_+}\tint g_nz^{N-n}$ 
defined by \eqref{gz}.
According to \cite[Thm.4.18]{DSKV12},
the coefficients $\tint g_n,\,n\in\mb Z_+$ of this series 
span an infinite-dimensional subspace of $\quot{\mc W}{\partial\mc W}$
and they satisfy the Lenard-Magri recursion conditions \eqref{eq:lenard2}. 

In order to compute the coefficient of $\tint g(z)$,
we first apply the surjective map
$\pi_{\mf l}:\,\mf g((z^{-1}))\otimes\mc V(\mf p)\twoheadrightarrow\mf g((z^{-1}))\otimes\mc V(\mf g^f)$
(acting as the identity on the first factor, and as a differential algebra homomorphism 
on the second factor) to $h(z)$ inside equation \eqref{gz}, 
and then we apply the inverse map
$\pi_{\mf l}^{-1}:\,\mc V(\mf g^f)
\stackrel{\sim}{\rightarrow}\mf g((z^{-1}))\otimes\mc W_{\mf l}$,
using Corollary \ref{20130402:cor1-bis}.
The resulting equation is
$$
\tint g(z)=\tint \pi_{\mf l}^{-1}(a(z)\otimes1 | \pi_{\mf l}h(z))
\,.
$$

We will consider three different choices of $a(z)\in\ Z(\mf h)$:
\begin{enumerate}[(a)]
\item $a(z)=c\in Z(\mf g_0^s)$,
\item $a(z)=e+z^{-1}s^*$,
\item $a(z)=f+zs$.
\end{enumerate}

\subsubsection*{Case (a): $a(z)=c\in Z(\mf g_0^s)$}

Note that $c\in\mf g^s_0$ is orthogonal to $f,e,s,s^*$.
Hence, 
according to the description of the
subspaces $\mf h_i$, $i\in\frac12\mb Z_+$, in Section \ref{sec:8.3},
we obtain 
$\tint g(z)=\sum_{i\in\frac32\mb Z_+}\tint \pi_{\mf l}^{-1}(a(z)\otimes1 | \pi_{\mf l}h(z)_i)$.
We then write, for $n\in\mb Z_+$,
\begin{equation}\label{20130507:eq7}
\pi_{\mf l}h(z)_{\frac32n}=
\sum_{i\in J^s_0} a_iz^{-n}\otimes h_{n,i}
\,,
\end{equation}
with $h_{n,i}\in\mc V(\mf g^f)$.
It follows that, for every $n\in\mb Z_+$, we have
\begin{equation}\label{20130507:eq8}
\tint g_n=
\sum_{i\in J^s_0}
\tint \pi_{\mf l}^{-1}
(c|a_i) h_{n,i}
\,,
\end{equation}
From equations \eqref{20130422:eq3}, \eqref{20130507:eq7} and \eqref{20130507:eq8}
we thus get, using Corollary \ref{20130402:cor1-bis},
$$
\tint g_0=
\tint \varphi_{\mf l}(c)
\,,
$$
while from equation \eqref{20130507:eq5b} we get
$$
\tint g_1=
-\sum_{k\in J^s_{\frac12}}
\tint
\psi_{\mf l}([f,u_k])\varphi_{\mf l}([s,[f,[c,u_k]]])
\,.
$$
The corresponding Hamiltonian equations of the hierarchy
$\frac{dw}{dt_n}=\{{g_n}_0w\}^{\mf l}_{H,\rho}$ are
$$
\frac{d\phi_{\mf l}(a)}{dt_0}=\phi([c,a])
\,\,,\,\,\,\,
\frac{d\psi_{\mf l}(u)}{dt_0}=\psi([c,u])
\,\,,\,\,\,\,
\frac{dL_{\mf l}}{dt_0}=0\,,
$$
and
$$
\begin{array}{l}
\vphantom{\Big(}
\displaystyle{
\frac{d\phi_{\mf l}(a)}{dt_1}=
\sum_{k\in J^s_{\frac12}}
\varphi_{\mf l}([s,[f,[c,u_k]]]) \psi_{\mf l}([f,[a,u_k]])
} \\
\vphantom{\Big(}
\displaystyle{
+\sum_{k\in J^s_{\frac12}}
\varphi_{\mf l}([a,[s,[f,[c,u_k]]]]) \psi_{\mf l}([f,u_k])
-\sum_{k\in J^s_{\frac12}}
(a|[s,[f,[c,u_k]]]) \psi_{\mf l}([f,u_k])'
} \\
\vphantom{\Big(}
\displaystyle{
\frac{d\psi_{\mf l}(u)}{dt_1}=
\sum_{k\in J^s_{\frac12}}
\psi_{\mf l}([u,[s,[f,[c,u_k]]]])
\psi_{\mf l}([f,u_k])
} \\
\vphantom{\Big(}
\displaystyle{
-\sum_{k\in J^s_{\frac12}}
\sum_{r\in J_{\frac12}}
\varphi_{\mf l}([[f,u_k],v^r]^\sharp)
\phi_{\mf l}([u,v_r]^\sharp)
\varphi_{\mf l}([s,[f,[c,u_k]]])
} \\
\vphantom{\Big(}
\displaystyle{
-\sum_{k\in J^s_{\frac12}}
\phi_{\mf l}([[f,u_k],[e,u]]^\sharp)'
\varphi_{\mf l}([s,[f,[c,u_k]]])
} \\
\vphantom{\Big(}
\displaystyle{
-2\sum_{k\in J^s_{\frac12}}
\phi_{\mf l}([[f,u_k],[e,u]]^\sharp)
\varphi_{\mf l}([s,[f,[c,u_k]]])'
} \\
\vphantom{\Big(}
\displaystyle{
-\sum_{k\in J^s_{\frac12}}
\frac{(u_k|u)}{2(x|x)}
\tilde{L}_{\mf l}
\varphi_{\mf l}([s,[f,[c,u_k]]])
+\sum_{k\in J^s_{\frac12}}
(u_k|u)
\varphi_{\mf l}([s,[f,[c,u_k]]])''
\,,} \\
\vphantom{\Big(}
\displaystyle{
\frac{dL_{\mf l}}{dt_1}=
-\frac32
\sum_{k\in J^s_{\frac12}}
\Big(\psi_{\mf l}([f,u_k])\varphi_{\mf l}([s,[f,[c,u_k]]])\Big)'
\,.}
\end{array}
$$

\subsubsection*{Case (b): $a(z)=e+z^{-1}s^*$}

Note that $e+z^{-1}s^*$ is orthogonal to $\mf g^s_0$ and to $e+z^{-1}s^*$.
Hence, 
according to the description of the
subspaces $\mf h_i$, $i\in\frac12\mb Z_+$, in Section \ref{sec:8.3},
we obtain 
$\tint g(z)=\sum_{i\in\frac32\mb Z_++\frac12}\tint \pi_{\mf l}^{-1}(a(z)\otimes1 | \pi_{\mf l}h(z)_i)$.
We then write, for $n\in\mb Z_+$,
\begin{equation}\label{20130507:eq7b}
\pi_{\mf l}h(z)_{\frac32n+\frac12}=
(f+zs)z^{-(n+1)}\otimes h_{n,a}
\,,
\end{equation}
with $h_{n,a}\in\mc V(\mf g^f)$.
It follows that
\begin{equation}\label{20130507:eq8b}
\tint g(z)
=\sum_{n\in\mb Z_+}
\tint g_nz^{-(n+1)}
=6(x|x)
\sum_{n\in\mb Z_+}
\tint \pi_{\mf l}^{-1}
h_{n,a}
z^{-(n+1)}
\,.
\end{equation}
From equations \eqref{20130422:eq4}, \eqref{20130507:eq7b} and \eqref{20130507:eq8b}
we thus get, using Corollary \ref{20130402:cor1-bis},
$$
\tint g_0
=
\tint \psi_{\mf l}(s^*)
\,.
$$
The corresponding Hamiltonian equation is
$$
\begin{array}{l}
\displaystyle{
\frac{d\phi_{\mf l}(a)}{dt_0}=\psi_{\mf l}([s^*,a])
\,\,,\,\,\,\,
\frac{dL_{\mf l}}{dt_0}=\frac12\psi_{\mf l}(s^*)'
\,,}\\
\displaystyle{
\frac{d\psi_{\mf l}(u)}{dt_0}
=\sum_{k\in J_{\frac12}}\phi_{\mf l}([s^*,v^k]^{\sharp})\phi_{\mf l}([u,v_k]^{\sharp})
+\frac{([e,s^*]|u)}{2(x|x)}\widetilde L_{\mf l}+\phi_{\mf l}([s^*,[e,u]]^{\sharp})'
\,.}
\end{array}
$$

\subsubsection*{Case (c): $a(z)=f+zs$}

Note that $f+zs$ is orthogonal to $\mf g^s_0$ and to $f+zs$.
Hence, 
according to the description of the
subspaces $\mf h_i$, $i\in\frac12\mb Z_+$, in Section \ref{sec:8.3},
we obtain 
$\tint g(z)=\sum_{i\in\frac32\mb Z_++1}\tint \pi_{\mf l}^{-1}(a(z)\otimes1 | \pi_{\mf l}h(z)_i)$.
We then write, for $n\in\mb Z_+$,
\begin{equation}\label{20130507:eq7c}
\pi_{\mf l}h(z)_{\frac32n+1}=
(e+z^{-1}s^*)z^{-n}\otimes h_{n,b}
\,,
\end{equation}
with $h_{n,b}\in\mc V(\mf g^f)$.
It follows that
\begin{equation}\label{20130507:eq8c}
\tint g(z)
=\sum_{n\in\mb Z_+}
\tint g_nz^{-n}
=6(x|x)
\sum_{n\in\mb Z_+}
\tint \pi_{\mf l}^{-1}
h_{n,b}
z^{-n}
\,.
\end{equation}
From equations \eqref{20130507:eq5}, \eqref{20130507:eq7c} and \eqref{20130507:eq8c}
we thus get, using Corollary \ref{20130402:cor1-bis},
$$
\begin{array}{l}
\displaystyle{
\tint g_0
=
\int L_{\mf l}
-\frac12\sum_{i\in J_0^f}\varphi_{\mf l}(a_i)\varphi_{\mf l}(a^i)
+\frac1{2}
\sum_{k\in J_{\frac12}^s} 
\varphi_{\mf l}([s,[f,u_k]]) \varphi_{\mf l}([s,[f,u_k]])
} \\
\displaystyle{
=
\int L_{\mf l}
-\frac12\sum_{i\in J_0^s}\varphi_{\mf l}(a_i)\varphi_{\mf l}(a^i)
\,.}
\end{array}
$$
In the last identity we used the orthogonal decomposition 
$\mf g^f_0=\mf g^s_0\oplus[s,[f,\mf g^s_{\frac12}]]$.
The corresponding Hamiltonian equation is
$$
\begin{array}{l}
\displaystyle{
\frac{d\phi_{\mf l}(a)}{dt_0}=\phi_{\mf l}(a)'
-\sum_{i\in J_0^s}\left(\phi_{\mf l}([a^i,a])\phi_{\mf l}(a_i)+(a|a^i)\phi_{\mf l}(a_i)'\right)\,,
}\\
\displaystyle{
\frac{d\psi_{\mf l}(u)}{dt_0}=\psi_{\mf l}(u)'+\sum_{i\in J_0^s}\phi_{\mf l}(a_i)\psi_{\mf l}([u,a^i])
\,,} \\
\displaystyle{
\frac{dL_{\mf l}}{dt_0}=\left(L_{\mf l}-\frac12\sum_{i\in J_0^s}\phi_{\mf l}(a_i)\phi_{\mf l}(a^i)\right)'
\,.}
\end{array}
$$


%
\end{document}